\title{Relational Hidden Variables and Non-Locality}
\author{Samson Abramsky\\
Oxford University Computing Laboratory}
\date{}
\newtheorem{theorem}{Theorem}[section]
\newtheorem{lemma}[theorem]{Lemma}
\newtheorem{proposition}[theorem]{Proposition}
\newtheorem{question}[theorem]{Question}
\newtheorem{corollary}[theorem]{Corollary}
\newcommand{\beqa}{\begin{eqnarray*}}
\newcommand{\eeqa}{\end{eqnarray*}\par\noindent}
\newcommand{\rarr}{\rightarrow}
\newcommand{\HH}{\mathcal{H}}
\newcommand{\ket}[1]{{|} #1\rangle}
\newcommand{\bra}[1]{\langle #1{|}}
\newcommand{\ie}{\textit{i.e.}~}
\newcommand{\EE}{\mathbf{E}}
\newcommand{\Comp}{\mathbb{C}}
\newcommand{\Real}{\mathbb{R}}
\newcommand{\dom}{\mathsf{dom}}
\newcommand{\Pow}{\mathcal{P}}
\newcommand{\mb}{\overline{m}}
\newcommand{\ob}{\overline{o}}
\newcommand{\IFF}{\; \leftrightarrow \;}
\newcommand{\siff}{\leftrightarrow}
\newcommand{\AND}{\; \wedge \;}
\newcommand{\OR}{\; \vee \;}
\newcommand{\IMP}{\; \rightarrow \;}
\newcommand{\nn}{\mathbf{n}}
\newcommand{\Mim}{M^{-}_{i}}
\newcommand{\Oim}{O^{-}_{i}}
\newcommand{\da}{{\downarrow}}
\newcommand{\vn}{\varnothing}
\newcommand{\Loc}{\mbox{\textbf{L}}}
\newcommand{\WD}{\textbf{WD}}
\newcommand{\SD}{\textbf{SD}}
\newcommand{\PI}{\textbf{PI}}
\newcommand{\OI}{\textbf{OI}}
\newcommand{\LI}{$\mathbf{\lambda}$\textbf{I}}
\newcommand{\NS}{\textbf{NS}}
\newcommand{\SV}{\textbf{SV}}
\newcommand{\EM}{\mathsf{EM}}
\newcommand{\QM}{\mathsf{QM}}
\newcommand{\NSig}{\mathsf{NS}}
\newcommand{\NSp}{\mathsf{NS}^{p}}
\newcommand{\LHV}{\mathsf{LHV}}
\newcommand{\HV}{\mathsf{HV}}
\newcommand{\NP}{\mathsf{NP}}
\newcommand{\PSPACE}{\mathsf{PSPACE}}
\newcommand{\PLI}{\textbf{P$\mathbf{\lambda}$\textbf{I}}}
\newcommand{\PLoc}{\textbf{PL}}
\newcommand{\PNS}{\textbf{PNS}}
\newcommand{\POI}{\textbf{POI}}
\newcommand{\PPI}{\textbf{PPI}}
\newcommand{\PML}{\textbf{PML}}
\newcommand{\ML}{\textbf{ML}}
\newcommand{\pfn}{\rightharpoonup}
\newcommand{\inst}[2]{\begin{array}{c} #1 \\ #2 \end{array}}
\newcommand{\Klm}{K_{\mb, \lambda}}
\newcommand{\Klim}{K_{\mb, \lambda}^{i}}
\newcommand{\Wlm}{W_{\mb, \lambda}}
\newcommand{\card}[1]{\mathsf{card}(#1)}
\newcommand{\Lp}{\Lambda^{+}}
\newcommand{\Lv}{\Lambda'}
\newcommand{\Mp}{M^{+}}
\newcommand{\Mip}{M^{+}_{i}}
\newcommand{\Oml}{O_{\mb, \lambda}}
\newcommand{\Oiml}{O_{m, \lambda}^{i}}
\newcommand{\tp}{\theta_{p}}
\newcommand{\tq}{\theta_{q}}
\newcommand{\thq}{\theta_{q^{h}}}
\newcommand{\Amo}{A_{\mb, \ob}}
\begin{document}


\setcounter{page}{1}     

 

\AuthorTitle{Samson Abramsky}{Relational Hidden Variables And Non-Locality}

\PresentedReceived{Jouko V\"a\"an\"anen}{November 2nd, 2010}

\begin{abstract}
We use a simple relational framework to develop the key notions and results on \emph{hidden variables} and \emph{non-locality}. The extensive literature on these topics in the foundations of quantum mechanics is couched in terms of probabilistic models, and properties such as locality and no-signalling are formulated probabilistically.
We show that to a remarkable extent, the main structure of the theory, through the major No-Go theorems and beyond, survives intact under the replacement of probability distributions by mere relations. 
\end{abstract}

\Keywords{Quantum mechanics, non-locality, hidden variables, possibilistic models, probabilistic models}

\section{Introduction}

In this paper, we consider a simple relational setting, in which the key notions and results concerning \emph{hidden variables} and \emph{non-locality} can be studied. The extensive literature on these topics in the foundations of quantum mechanics is couched in terms of probabilistic models, and properties such as locality and no-signalling are formulated probabilistically.
We show that to a remarkable extent, the main structure of the theory, through the major No-Go theorems and beyond, survives intact under the replacement of probability distributions by mere relations.

The main contents of the paper can be summarized as follows:

\begin{itemize}
\item In the first part of the paper, sections 2--6,
we define purely relational analogues of all the key notions around locality and Bell's theorem which have been formulated in terms of probabilistic models, including Weak and Strong Determinism, No-Signalling, $\lambda$-Independence \cite{dickson1999quantum}, Parameter Independence and Outcome Independence \cite{jarrett1984physical,shimony1986events}, and Locality \cite{bell1964einstein}.
We show that these relational notions have the same logical relationships as their probabilistic counterparts. We give clean proofs, from explicit assumptions, of No-Go theorems based on the EPR \cite{einstein1935can}, GHZ \cite{greenberger1990bell}, Hardy \cite{hardy1993nonlocality}, and Kochen-Specker \cite{kochen1975problem} constructions.

This part of the paper can in large part be seen as a recasting of a recent paper by Brandenburger and Yanofsky \cite{brandenburger2008classification}  in relational form. Their paper, which is couched entirely in terms of probabilistic models, gives a careful, unified  treatment of the major properties of hidden-variable models, and a classification in terms of these. We show that  their results can be replicated in the purely relational setting. We also prove relational analogues of some additional results obtained in a subsequent paper by Brandenburger and Keisler \cite{BK}.

\item Our proof of the GHZ result is based on Mermin's well-known proof \cite{mermin1990quantum}, but by making the assumptions explicit, we find a lacuna in his argument. His `instructions' are deterministic; but determinism is not being assumed \textit{a priori}. We show how determinism can be \emph{derived} from apparently weaker assumptions, by virtue of a general result on hidden-variable models.

\item Our proof of the Hardy result shows that, despite the need for probabilities in the quantum realization of the construction, the No-Go result itself can be proved in purely logical terms. 
It also shows that such results apply in the relational setting even in the bipartite case.

\item Our analysis of the Kochen-Specker result explains a surprising  formal connection between No-Signalling and Contextuality, clarifying an apparent anomaly in \cite{brandenburger2008classification}.

\item We also define an explicit mapping from a class of quantum systems to relational models, and show that the constructions used in the relational no-go theorems are in the image of this mapping, thus obtaining the usual applications of the No-Go theorems to quantum mechanics. This clean separation of the usual arguments into `logical' and `physical' components is in our opinion an attractive feature of this approach.

\item We then go on to consider the connections between probabilistic and relational models. Probabilistic models can be reduced to relational ones by the `possibilistic collapse', in which non-zero probabilities are conflated to (possible) truth.
We show that all the independence properties we have been studying are preserved by the possibilistic collapse, in the sense that if the property in its probabilistic form is satisfied by the probabilistic model, then the relational version of the property will be satisfied by its possibilistic collapse. More surprisingly, we also show a \emph{lifting property}: if a relational model can be realized by local hidden variables, then there is a probabilistic model whose possibilistic collapse gives rise to the relational model, and which can be realized by a  probabilistic local hidden-variable model. 

\item We characterize this construction of probabilistic models from relational ones by a \emph{maximal entropy property}, expressed in terms of a factorization of the joint distribution on measurements and outcomes into a \emph{measurement prior}, together with a family of probability distributions on outcomes conditioned on measurements. The measurement prior is also significant in allowing a fully general description of how quantum systems give rise to relational models.

\item We also give an example to show how results can be lifted from the relational setting to apply to probabilistic models. In particular, our relational versions of the GHZ and Hardy theorems lift directly to show that there are quantum systems which cannot be realized by any probabilistic local hidden-variable model.

\item We  give precise definitions of a number of classes of models. This sets the stage for developing a structure theory of these classes, which looks promising as a means for gaining insight into quantum mechanics, and both sub- and super-quantum theories. We prove a strict hierarchy result: local hidden-variable models are properly included in models arising from quantum systems, which are properly included in models satisfying No-Signalling. The latter result makes use of a relational analogue of a Popescu-Rohrlich box \cite{popescu1994quantum,khalfin1992quantum}.

\item We also consider the computational aspects of these classes of finite structures. We show that 
membership of the class of local hidden-variable models is in $\NP$. 
We also show that membership of the class of models arising from quantum systems of a given dimension is in $\PSPACE$, by reduction to the existential theory of real-closed fields.
These results suggest a number of interesting questions concerning the exact complexity of these classes.
\end{itemize}

\section{Preliminaries}

We begin by formulating the relational setting we will work in.

The systems we will consider will each have an \emph{arity} $n$, a positive integer. The arity expresses the number of \emph{parts}, which may be thought of variously as agents, sites, or subsystems, of the system under consideration. Thus a system of arity 2 is usually referred to as bipartite, and the parts are conventionally labelled as Alice and Bob.

For each part $i$, two sets are specified: a set $M_i$ of kinds of measurement which can be performed at $i$; and a set $O_i$ of possible outcomes of these measurements.
Thus for a system of arity $n$, we can form the sets
\[ M = \prod_{i=1}^n M_i, \qquad O = \prod_{i=1}^n O_i . \]
An element $\mb = (m_1 , \ldots , m_n) \in M$ specifies a choice of measurement for each part; and similarly $\ob = (o_1 , \ldots , o_n) \in O$ specifies an outcome at each part.

A \emph{system type} is given by a pair $(M, O)$ of this form.
A \emph{relational empirical model} of type $(M, O)$ is specified by a relation $e \subseteq M \times O$. As usual, such a relation can either be viewed as a set of tuples $(\mb, \ob)$, or as a characteristic function
\[ e : M \times O \rarr \{ 0, 1 \} . \]
We write $e(\mb, \ob)$ to indicate that $(\mb, \ob) \in e$, or equivalently that $e(\mb, \ob) = 1$.

Our reading of $e(\mb, \ob)$ is that, if the measurements in $\mb$ are performed, the outcomes in $\ob$ are \emph{possible} (may be observed). Of course, we do not preclude that given measurements may have more than one possible outcome.

Thus this representation of systems behaviour might be called `possibilistic' (as opposed to a \emph{probabilistic} representation which would assign probabilities to the various outcomes, conditioned on the measurements).\footnote{See the interesting essay \cite{fritz2009possibilistic} on Possibilistic Physics, which was kindly brought to my attention by Tobias Fritz after a previous version of the present paper was  made available on the arxiv.} However, it is important to note that we are using the standard logic of relations, while `Possibilistic Logic' has an established usage in Artificial Intelligence \cite{dubois3possibilistic} which is quite different.

These systems are called \emph{empirical} because they specify relationships between quantities\footnote{E.g.  settings of knobs or switches for the measurements and pointer readings for the outcomes.} all of which are directly observable.

Now we turn to hidden variables. A hidden-variable model of type $(M, O)$ has an additional set $\Lambda$ which gives the possible values of some `hidden' (unobservable) variable. The model is specified by a relation 
\[ h \subseteq M \times O \times \Lambda . \]
We say that a hidden-variable model $h$ \emph{realizes} an empirical model $e$ of the same type  if
\[ \forall \mb, \ob . \, [ e(\mb, \ob) \IFF \exists \lambda \in \Lambda. \, h(\mb, \ob, \lambda) ] . \]
Two hidden-variable models are equivalent if they realize the same empirical model in this fashion.

\subsection*{Notation}
Some notations will be helpful in allowing properties of models to be expressed succinctly.
We write $\nn = \{ 1, \ldots , n \}$. For each $i \in \nn$, we define
\[ \Mim = M_1 \times \cdots \times M_{i-1} \times M_{i+1} \times \cdots \times M_n \]
and similarly for $\Oim$. Given $m \in M_i$ and 
\[ \mb = (m_1, \ldots , m_{i-1}, m_{i+1}, \ldots , m_{n})  \in \Mim , \]
we write $m, \mb$ for the tuple
\[ (m_1, \ldots , m_{i-1}, m, m_{i+1}, \ldots , m_{n}) \]
and similarly for tuples of outcomes.

In the remainder of the paper, we shall use the notation $\mb_i$ to mean $m_i$ where $\mb = (m_1, \ldots , m_n)$; and similarly for $\ob_i$.

Given $\mb \in M$, we define
\[ e(\mb)\da \; \equiv \; \exists \ob. \, e(\mb, \ob) . \]
More generally, if $\bar{s}$ is any subsequence of arguments, we define
\[ e(\bar{s})\da  \; \equiv \; \exists \bar{t}. \, e(\bar{s}, \bar{t})  \]
where $\bar{s}, \bar{t}$ is an expansion of $\bar{s}$ to a full list of arguments for $e$. Similar notation will be used for hidden-variable models.

Note that, if $\bar{s}$ is a subsequence of $\bar{s}'$, then $e(\bar{s}')\da$ implies $e(\bar{s})\da$.

\paragraph{Example}
As an example of the notation, consider the expression 
\[ e(m, \mb, o)\da , \]
where $m \in M_i$, $\mb \in M_i^-$, $o \in O_i$. This expression expands to the following:
\[ \exists \ob \in O_i^-. \, e(m,\mb,o,\ob) . \]

\section{Properties of Models}

We now formulate a number of properties of models. These properties are, for the most part, relational versions of properties which have been discussed in the extensive literature analyzing the No-Go theorems of Quantum Mechanics, especially the Bell and Kochen-Specker theorems. Our own treatment is based to a large extent on the careful discussion,  in a unified formalism of probabilistic models, in \cite{brandenburger2008classification}. However, we shall give more emphasis to properties of empirical models.

\subsection{Properties of Empirical Models}

We shall assume a given system type $(M, O)$ of arity $n$, and formulate properties as conditions on empirical models $e$ of this type.

\subsubsection{Weak Determinism (WD)}
This says that for given measurements $\mb$, the outcomes $\ob$ are uniquely determined:
\[ \forall \mb, \ob, \ob'. [ e(\mb, \ob) \AND e(\mb, \ob') \IMP \ob = \ob'  ] . \]
In more familiar terms, it says that the relation $e$ is a \emph{partial function} from measurements to outcomes.

\subsubsection{Strong Determinism (SD)}
Strong Determinism requires that, for each $i \in \nn$,  the outcome at $i$ is uniquely determined  by the measurement at $i$: 
\[ \forall i \in \nn, \mb, \mb', \ob, \ob'. [ e(\mb, \ob) \AND e(\mb', \ob') \AND \mb_i = \mb'_i \IMP \ob_i = \ob'_i ] . \]

\subsubsection{No-Signalling (NS)}
The no-signalling condition is that the choice of measurement by one party cannot be signalled to the other parties. If we interpret the arity of a system type as implying some distributed structure, so that the different parts may be space-like separated,  it can be seen as an important residue of causality, which is needed to ensure consistency with special relativity.\footnote{However, as we shall see later (cf.~the discussion in Section~\ref{KSsec}), this is not the only possible reading, and there are some surprising connections to \emph{Contextuality}.}

This condition is usually defined in a probabilistic context by saying that the marginal probability of an outcome at $i$ for a given measurement at $i$ is independent of the other measurements. We can define No-Signalling for relational models as follows.

For all $i \in \nn$, $m \in M_i$, $o \in O_i$, $\mb, \mb' \in \Mim$:
\[  e(m, \mb, o)\da \AND e(m, \mb')\da \IMP e(m, \mb', o)\da  . \]
This says precisely that whether the outcome $o$ is possible at $i$ for a given measurement $m$ at $i$ is independent of the other measurements.

Note that there is a stronger version of this principle, which says that joint outcomes for any subset of the parts is independent of the measurements made in the remaining parts. We shall not use this version in the present paper. We refer to the treatment in \cite{ab11} which gives a more general account including the stronger form of no-signalling.

\subsection{Properties of Hidden-Variable Models}
We begin by reformulating the definitions of two of the properties we have specified for empirical models to apply to hidden variable models $h \subseteq M \times O \times \Lambda$:
\begin{description}
\item[Weak Determinism (WD)] 
\[ \forall \mb, \ob, \ob', \lambda. [ h(\mb, \ob, \lambda) \AND h(\mb, \ob', \lambda) \IMP \ob = \ob'  ] . \]
\item[Strong Determinism (SD)]
\[ \forall i \in \nn, \mb, \mb', \ob, \ob', \lambda. [ h(\mb, \ob, \lambda) \AND h(\mb', \ob', \lambda) \AND \mb_i = \mb'_i \IMP \ob_i = \ob'_i ] . \]

\end{description}

\noindent We shall now discuss some properties which have been considered specifically for hidden-variable models.

\subsubsection{Single-Valuedness (SV)}
This simply says that $\Lambda$ is a singleton. This is a rather artificial property, but it is occasionally useful.

\subsubsection{$\lambda$-Independence ($\lambda$I)}
For all $\mb, \mb' \in M$, $\lambda \in \Lambda$:
\[  h(\mb')\da \AND h(\mb, \lambda)\da \IMP h(\mb', \lambda)\da  . \]
We can read this condition as saying that the value of the hidden variable is independent of the choice of measurements.

\subsubsection{Outcome-Independence (OI)}
For all $\mb \in M$, $i \in \nn$, $o, o'  \in O_i$, $\ob, \ob' \in \Oim$, $\lambda \in \Lambda$:
\[ h(\mb, o, \ob, \lambda) \AND 
h(\mb, o', \ob', \lambda) \IMP h(\mb, o, \ob', \lambda)  . \]
This says that, for given measurements and value of the hidden variable, the possibility of an outcome at $i$ is independent of which other outcomes occur.

This is easily seen to be logically equivalent to the following condition:
\begin{equation}
\label{OIeq}
\forall \mb, \ob, \lambda. \, [ h(\mb, \ob, \lambda) \IFF \bigwedge_{i=1}^n h(\mb, \ob_i, \lambda)\da ]. 
\end{equation}

\subsubsection{Parameter-Independence (PI)}
This is essentially  the reformulation of No-Signalling for hidden-variable models.

For all $i \in \nn$, $m \in M_i$, $o \in O_i$, $\mb, \mb' \in \Mim$, $\lambda \in  \Lambda$: 
\[  h(m, \mb, o, \lambda)\da \AND h(m, \mb', \lambda)\da \IMP h(m, \mb', o, \lambda)\da  . \]
It says that, conditional on the value of the hidden variable, the possible outcomes of  a measurement at $i$ are independent of the other measurements.

\subsubsection{Locality (L)}
The assumption of locality  is that the possible outcomes of a measurement, for a given value of the hidden variable, are locally determined, in the sense that the outcome at $i$ depends only on the measurement performed at $i$. This is expressed as follows:
\[ \forall \mb, \ob, \lambda . \, [ h(\mb, \lambda)\da \AND \bigwedge_{i=1}^n h(\mb_i, \ob_i, \lambda)\da \IMP h(\mb, \ob, \lambda) ] . \]

\section{Implications}

We shall now consider which implications hold between these conditions.

\begin{theorem}
\label{hvimpsprop}
The following implications hold between properties of hidden-variable models:
\begin{enumerate}
\item Weak Determinism implies Outcome Independence.
\item Strong Determinism is equivalent to the conjunction of Weak Determinism and Parameter Independence.
\item Locality is equivalent to the conjunction of Parameter Independence and Outcome Independence.
\end{enumerate}
\end{theorem}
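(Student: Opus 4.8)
The plan is to treat the three parts independently, in each case unwinding the definitions and appealing repeatedly to the monotonicity observation recorded just before the Example in Section~2: if $\bar{s}$ is a subsequence of $\bar{s}'$, then $h(\bar{s}')\da$ implies $h(\bar{s})\da$. For part~(3) I will work throughout with the reformulation~(\ref{OIeq}) of Outcome Independence. Part~(1) is then immediate: assuming Weak Determinism, if both $h(\mb, o, \ob, \lambda)$ and $h(\mb, o', \ob', \lambda)$ hold then Weak Determinism, applied to the two outcome tuples at the common measurement $\mb$ and hidden value $\lambda$, forces $(o, \ob) = (o', \ob')$; in particular $h(\mb, o, \ob', \lambda)$ holds, which is exactly the conclusion of Outcome Independence.

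For part~(2): in the forward direction, instantiating Strong Determinism at $\mb = \mb'$ gives $\ob_i = \ob'_i$ for every $i$, hence $\ob = \ob'$, i.e.\ Weak Determinism; and Parameter Independence is read off by noting that its hypotheses $h(m, \mb, o, \lambda)\da$ and $h(m, \mb', \lambda)\da$ (with $m \in M_i$ and $\mb, \mb' \in \Mim$) witness two global tuples over $\lambda$ whose $i$-th measurement coordinate is $m$ in both, so Strong Determinism forces their $i$-th outcome coordinates to agree, which gives $h(m, \mb', o, \lambda)\da$. For the converse, suppose $h(\mb, \ob, \lambda)$, $h(\mb', \ob', \lambda)$, and $\mb_i = \mb'_i$ (call the common value $m$); writing $\mb^{-i}, (\mb')^{-i} \in \Mim$ for the reducts at $i$, by monotonicity the first assumption yields $h(m, \mb^{-i}, \ob_i, \lambda)\da$ and the second yields $h(m, (\mb')^{-i}, \lambda)\da$, so Parameter Independence produces a global tuple with measurement tuple $\mb'$ whose $i$-th outcome coordinate is $\ob_i$; comparing it with $\ob'$ via Weak Determinism forces $\ob'_i = \ob_i$, which is Strong Determinism.

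For part~(3), in the direction ``Parameter Independence and Outcome Independence imply Locality'': assume $h(\mb, \lambda)\da$ and $\bigwedge_i h(\mb_i, \ob_i, \lambda)\da$. Fixing $i$, the conjunct $h(\mb_i, \ob_i, \lambda)\da$ supplies some completion of the remaining measurements; applying Parameter Independence with that completion as the first measurement context and the reduct $\mb^{-i}$ as the second (legitimate because $h(\mb, \lambda)\da$ is precisely $h(\mb_i, \mb^{-i}, \lambda)\da$) transports it to $h(\mb, \ob_i, \lambda)\da$, and doing this for every $i$ and then invoking~(\ref{OIeq}) assembles $h(\mb, \ob, \lambda)$. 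For the reverse direction, Outcome Independence in the form~(\ref{OIeq}) follows because its right-hand side $\bigwedge_i h(\mb, \ob_i, \lambda)\da$ entails, by monotonicity, both $h(\mb, \lambda)\da$ and $\bigwedge_i h(\mb_i, \ob_i, \lambda)\da$, so Locality delivers $h(\mb, \ob, \lambda)$ (the reverse implication in~(\ref{OIeq}) being monotonicity again); and to extract Parameter Independence from Locality, given $h(m, \mb, o, \lambda)\da$ and $h(m, \mb', \lambda)\da$ I build the outcome tuple whose $i$-th coordinate is $o$ and whose other coordinates are read off from a witness for $h(m, \mb', \lambda)\da$, verify the componentwise hypotheses of Locality by monotonicity (the $i$-th using $h(m, \mb, o, \lambda)\da$, the rest using the witness), and apply Locality to obtain a global tuple with measurements $(m, \mb')$ and $i$-th outcome $o$, that is $h(m, \mb', o, \lambda)\da$.

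The only real difficulty is bookkeeping. The various $\da$-expressions differ in which arguments are existentially bound: in $h(\mb_i, \ob_i, \lambda)\da$ both the remaining measurements and the remaining outcomes are quantified, whereas in the $\da$-expressions that occur inside~(\ref{OIeq}) the whole measurement tuple is fixed and only outcomes are quantified. So one has to be careful to apply the monotonicity observation in the right direction at each step, and to keep the chosen witnesses coherent across a given argument. Beyond this there is no case analysis and no combinatorics: each implication reduces to a short chain of instantiations of the defining conditions.
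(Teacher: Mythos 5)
Your proposal is correct and follows essentially the same route as the paper's proof: the same instantiations of the definitions, the same use of the monotonicity of $\da$ under subsequences, and the same witness-transport arguments via \PI\ in parts (2) and (3). The only cosmetic difference is that in part (1) you verify Outcome Independence directly in its original form (the two tuples coincide by \WD, so the mixed tuple is already present), whereas the paper routes through the equivalent form~(\ref{OIeq}); both are valid.
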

\begin{proof}
1. We assume Weak Determinism and prove Outcome Independence in the equivalent form (\ref{OIeq}).
Assume that for all $i \in \nn$, $h(\mb, o_i, \lambda)\da$. This means that for all $i$, there is $\ob^{(i)} \in \Oim$ such that $h(\mb, o_i, \ob^{(i)}, \lambda)$. By \WD, we conclude that $o_1, \ob^{(1)} = o_2, \ob^{(2)} = \cdots = o_n,  \ob^{(n)}$, and so $h(\mb, \ob, \lambda)$, as required.

2. Assume Strong Determinism. That this implies Weak Determinism  is immediate from the definitions. To prove \PI, suppose that $h(m, \mb, o, \lambda)\da$ and $h(m, \mb', \lambda)\da$, so for some $o'$, $h(m, \mb', o', \lambda)\da$. By \SD, $o = o'$, so 
\[ h(m, \mb', o, \lambda)\da , \]
as required.

Now assume \WD\ and \PI. Suppose that $h(m, \mb, \ob, \lambda)$ and $h(m, \mb', \ob', \lambda)$, where $m \in M_i$ and $\mb, \mb' \in \Mim$. 
Let $o = \ob_i$. Then $h(m, \mb, o, \lambda)\da$ and $h(m, \mb', \lambda)\da$, so by \PI, $h(m, \mb', o, \lambda)\da$, so that $h(m, \mb', o, \ob'', \lambda)$ for some $\ob'' \in \Oim$. By \WD, $\ob_i = o = \ob'_i$.

3. Assume \Loc.
To prove \OI, suppose that $h(\mb, \ob_i, \lambda)\da$ for all $i$. This implies that $h(\mb_i, \ob_i, \lambda)\da$ for all $i$, and also that $h(\mb,\lambda)\da$, and hence by \Loc \ that $h(\mb, \ob, \lambda)$, as required. 

\noindent To prove \PI, suppose that $h(m, \mb, o, \lambda)\da$ and $h(m, \mb', \lambda)\da$. This implies that $h(m, o, \lambda)\da$, and  for all $j \neq i$, for some $o_j$, $h(\mb'_{j}, o_j, \lambda)\da$. Hence by $\Loc$, $h(m, \mb', o, \lambda)\da$.

Now assume \OI\ and \PI. Suppose that $h(\mb_{i}, \ob_{i}, \lambda)\da$, $i \in \nn$, and $h(\mb, \lambda)\da$. This implies that for each $i$, for some $\mb^{(i)} \in M_i^-$, $h(\mb_{i}, \mb^{(i)}, \ob_{i}, \lambda)\da$. Applying \PI\ $n$ times, we obtain that $h(\mb_{i}, \mb_{-i}, \ob_{i}, \lambda)\da$ for each $i$, where $\mb_{i}, \mb_{-i} = \mb$.
Hence for all $i \in \nn$, $h(\mb, \ob_{i}, \lambda)\da$, and applying \OI, we obtain $h(\mb, \ob, \lambda)$, which proves that \Loc\ holds.
\end{proof}

\begin{corollary}
\label{SDLoccorr}
Strong Determinism implies Locality.
\end{corollary}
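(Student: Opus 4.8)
The plan is simply to chain together the three facts established in Theorem~\ref{hvimpsprop}, so that no fresh reasoning about the relation $h$ is required. First I would invoke part~2: since $h$ satisfies \SD, it satisfies both \WD\ and \PI. Next, feeding the \WD\ component into part~1, I obtain that $h$ also satisfies \OI. At this point $h$ enjoys both \PI\ and \OI, so part~3 immediately yields \Loc.

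The only thing to get right is the order in which the parts are applied — \SD\ must first be decomposed into \WD\ and \PI\ before \WD\ can be upgraded to \OI\ — but this is pure bookkeeping rather than a genuine obstacle. In particular, there is nothing delicate here analogous to the quantifier juggling in the proof of part~3 of the theorem; the corollary is an immediate consequence of the logical relationships already in hand, and one could even state it as the composite implication $\SD \implies \WD \wedge \PI \implies \OI \wedge \PI \implies \Loc$. I would therefore expect the written proof to be a single sentence.
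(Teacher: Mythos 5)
Your proof is correct and is exactly the derivation the paper intends: the corollary is stated without proof immediately after Theorem~\ref{hvimpsprop}, and the implicit argument is precisely the chain $\SD \implies \WD \wedge \PI \implies \OI \wedge \PI \implies \Loc$ using parts 2, 1, and 3 in that order. Nothing to add.
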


We now show a relationship between properties of a hidden-variable model and the induced empirical model. Let $h \subseteq M \times O \times \Lambda$ be a hidden-variable model;
 we define the \emph{induced} empirical model $e$ to be the (unique) model realized by $h$:
\[ e(\mb, \ob) \; \equiv \; \exists \lambda \in \Lambda . \, h(\mb, \ob, \lambda) . \]

\begin{proposition}
\label{LIPIprop}
If $h$ satisfies $\lambda$-Independence and Parameter Independence, then $e$ satisfies No-Signalling.
\end{proposition}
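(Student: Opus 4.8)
The plan is to unfold the definition of No-Signalling for $e$ and trace each hypothesis back through the defining equivalence $e(\mb,\ob) \equiv \exists \lambda.\, h(\mb,\ob,\lambda)$ to a statement about $h$, then apply $\lambda$-Independence to move to a common value of the hidden variable and Parameter Independence to transport the outcome across a change of the other measurements. Concretely, fix $i \in \nn$, $m \in M_i$, $o \in O_i$, and $\mb, \mb' \in \Mim$, and assume $e(m,\mb,o)\da$ and $e(m,\mb')\da$; the goal is $e(m,\mb',o)\da$.

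First I would unpack the two hypotheses. From $e(m,\mb,o)\da$ we get, expanding the $\da$ notation and then the definition of $e$, some $\ob \in \Oim$ and some $\lambda \in \Lambda$ with $h(m,\mb,o,\ob,\lambda)$; in particular $h(m,\mb,o,\lambda)\da$ and hence $h(m,\mb,\lambda)\da$, so $h(\mb,\lambda)\da$ reading the full measurement tuple. From $e(m,\mb')\da$ we get some $\lambda' \in \Lambda$ with $h(m,\mb',\lambda')\da$, i.e. $h(\mb',\lambda')\da$ for the corresponding full tuple. Now apply $\lambda$-Independence with the pair $(\mb',\mb)$ (in the notation of the definition, with $\mb$ playing the role of the witnessed measurement and $\mb'$ the target): from $h(\mb')\da$ and $h(\mb,\lambda)\da$ we conclude $h(\mb',\lambda)\da$ — so the \emph{same} $\lambda$ that witnesses the outcome $o$ under measurements $(m,\mb)$ is also available under measurements $(m,\mb')$.

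With a single $\lambda$ in hand we invoke Parameter Independence. We have $h(m,\mb,o,\lambda)\da$ (from the first hypothesis) and $h(m,\mb',\lambda)\da$ (just derived), so \PI\ gives $h(m,\mb',o,\lambda)\da$. Unfolding the $\da$, there is $\ob'' \in \Oim$ with $h(m,\mb',o,\ob'',\lambda)$, and therefore $\exists \lambda.\, h(m,\mb',o,\ob'',\lambda)$, which is exactly $e(m,\mb',o,\ob'')$, hence $e(m,\mb',o)\da$, as required.

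I expect the only real care to be bookkeeping: matching the quantifier pattern of the $\lambda$-Independence axiom to what the hypotheses actually deliver (in particular that $e(m,\mb')\da$ really does yield $h(\mb',\lambda')\da$ for a full tuple, so that $\lambda$I applies), and keeping track of which coordinates are existentially bound when passing between $e(m,\mb,o)\da$ and the witnessed $h$-tuple. There is no genuine obstacle — the statement is a direct composition of the two assumed properties along the defining equivalence — so the "hard part" is merely presenting the chain of unfoldings cleanly.
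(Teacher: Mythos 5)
Your proposal is correct and follows essentially the same route as the paper's own proof: extract a witness $\lambda$ from $e(m,\mb,o)\da$, use $\lambda$-Independence to transfer that same $\lambda$ to the measurement tuple $(m,\mb')$, and then apply Parameter Independence to transport the outcome $o$. The extra bookkeeping you flag is handled implicitly in the paper by the $\da$ notation, but your unfolding of it is accurate.
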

\begin{proof}
Suppose that $e(m, \mb, o)\da$ and $e(m, \mb')\da$. Then for some $\lambda$, 
\[ h(m,\mb,o, \lambda)\da , \]
and so $h(m, \mb, \lambda)\da$, and also $h(m, \mb')\da$. By \LI, $h(m, \mb',\lambda)\da$. Hence by \PI, $h(m, \mb', o, \lambda)\da$, and so $e(m, \mb', o)\da$.
\end{proof}

We can strengthen this as follows. 

\begin{proposition}
\label{NSequivprop}
An empirical model satisfies No-Signalling if and only if it can be realized by a hidden-variable model satisfying $\lambda$-Independence and Parameter Independence.
\end{proposition}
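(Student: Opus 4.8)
The plan is to prove the two implications separately. The ``if'' direction is immediate: if $e$ is realized by some hidden-variable model $h$ satisfying \LI\ and \PI, then $e$ satisfies \NS\ by Proposition~\ref{LIPIprop}. So the work is all in the converse.

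For the ``only if'' direction, suppose $e$ satisfies \NS. The plan is to realize $e$ by the \emph{trivial}, single-valued hidden-variable model: take $\Lambda = \{ \star \}$ a singleton and define $h \subseteq M \times O \times \Lambda$ by $h(\mb, \ob, \star) \equiv e(\mb, \ob)$. Three things then need checking. First, $h$ realizes $e$: since $\Lambda$ is a singleton, $\exists \lambda.\, h(\mb, \ob, \lambda)$ collapses to $h(\mb, \ob, \star)$, which is $e(\mb, \ob)$. Second, $h$ satisfies \LI\ — indeed any hidden-variable model with a singleton $\Lambda$ does, since $h(\mb')\da$ gives $h(\mb', \ob, \lambda)$ for some $\ob$ and some $\lambda \in \Lambda$, and when $\Lambda = \{\star\}$ this forces $h(\mb', \lambda)\da$ for the unique $\lambda$; this is just the remark that \SV\ implies \LI. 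Third, $h$ satisfies \PI: unwinding the $\da$-notation over the singleton $\Lambda$, for $m \in M_i$, $\mb, \mb' \in \Mim$, $o \in O_i$ we have
\[ h(m, \mb, o, \star)\da \;\equiv\; e(m, \mb, o)\da, \qquad h(m, \mb', \star)\da \;\equiv\; e(m, \mb')\da, \]
and likewise $h(m, \mb', o, \star)\da \equiv e(m, \mb', o)\da$, so that \PI\ for $h$ is literally the \NS\ condition assumed of $e$.

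I do not expect a genuine obstacle here: the only point requiring care is the bookkeeping of the $\da$ abbreviations, namely checking that over a singleton $\Lambda$ each $h(\cdots)\da$ reduces to the corresponding $e(\cdots)\da$ with exactly the expected list of existentially quantified outcome-arguments (e.g.\ the $\Oim$-component in $h(m,\mb,o,\star)\da$). The substance of the proposition is precisely that \NS\ of $e$ is exactly what \PI\ demands of this trivial lift — worth flagging, since it shows the hidden variable plays no essential role in this particular equivalence.
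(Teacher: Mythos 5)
Your proof is correct and is exactly the paper's argument: one direction is Proposition~\ref{LIPIprop}, and the converse uses the single-valued hidden-variable model induced by $e$, for which \LI\ is trivial and \PI\ reduces literally to \NS. The paper states this in one line; your version merely spells out the bookkeeping.
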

\begin{proof}
One direction is Proposition~\ref{LIPIprop}. For the converse, if $e$ satisfies \NS, then the unique hidden-variable model satisfying \SV\ which induces it trivially satisfies \LI\ and \PI.
\end{proof}

\section{Existence}
In this section, we show some positive results of the form that every empirical 
model can be realized by a hidden-variable model satisfying certain properties. Of course, not every combination of properties is possible; this will be the content of the No-Go theorems to follow.

\begin{proposition}
\label{SVprop}
Every empirical model is realized by a hidden-variable model satisfying Single-Valuedness.
\end{proposition}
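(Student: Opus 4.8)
The plan is to exhibit the obvious trivial construction: take the hidden variable to range over a one-element set, so that Single-Valuedness holds vacuously, and let the hidden-variable model be essentially a relabelled copy of the empirical model.

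Concretely, I would fix $\Lambda = \{ \ast \}$ for some arbitrary element $\ast$, and define $h \subseteq M \times O \times \Lambda$ by
\[ h(\mb, \ob, \lambda) \; \equiv \; e(\mb, \ob) . \]
(Since $\Lambda$ is a singleton, the last argument carries no information.) It is immediate from this definition that $\Lambda$ is a singleton, so $h$ satisfies \SV.

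It then remains to check that $h$ realizes $e$ in the sense of the definition in Section~2, i.e.\ that $\forall \mb, \ob. \, [ e(\mb, \ob) \IFF \exists \lambda \in \Lambda. \, h(\mb, \ob, \lambda) ]$. This is where I would simply unfold the definitions: the right-hand side $\exists \lambda \in \Lambda. \, h(\mb, \ob, \lambda)$ holds iff $h(\mb, \ob, \ast)$ holds (as $\ast$ is the only candidate), which by construction holds iff $e(\mb, \ob)$.

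I do not expect any genuine obstacle here; the content of the proposition is essentially that the empirical/hidden-variable distinction collapses when $\Lambda$ is forced to be trivial, and the only thing to verify is the bookkeeping of the realization condition against the definition of \SV. The interest of the statement is not in its difficulty but in its role as the base case of the existence results to follow, showing that \emph{some} hidden-variable realization always exists before we start demanding stronger properties such as \Loc\ or \SD.
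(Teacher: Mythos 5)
Your construction is exactly the one the paper intends: the paper's proof is simply ``Immediate,'' and elsewhere (in the proof of Proposition~\ref{NSequivprop}) it refers to ``the unique hidden-variable model satisfying \SV\ which induces'' a given $e$, which is precisely your $h(\mb,\ob,\ast) \equiv e(\mb,\ob)$ over a singleton $\Lambda$. Your proposal is correct and coincides with the paper's approach, merely spelling out the bookkeeping the paper leaves implicit.
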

\begin{proof}
Immediate.
\end{proof}

\begin{proposition}
Every empirical model is realized by a hidden-variable model satisfying Strong Determinism.
\end{proposition}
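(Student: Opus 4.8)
The plan is to let the hidden variable range over all families of per-part ``response functions''. Concretely, I would set
\[ \Lambda \;=\; \prod_{i \in \nn} O_i^{M_i}, \]
the set of $n$-tuples $f = (f_1, \dots, f_n)$ of total functions $f_i : M_i \to O_i$; the intuition is that $f_i$ records, for this value of the hidden variable, which outcome part $i$ would produce for each of its possible measurements. I would then define $h \subseteq M \times O \times \Lambda$ by
\[ h(\mb, \ob, f) \;\equiv\; e(\mb, \ob) \AND \bigwedge_{i \in \nn} f_i(\mb_i) = \ob_i . \]

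First I would check that $h$ realizes $e$. The direction $\exists f.\, h(\mb, \ob, f) \IMP e(\mb, \ob)$ is immediate from the first conjunct of the definition of $h$. For the converse, suppose $e(\mb, \ob)$; then $\ob \in O = \prod_i O_i$, so each $O_i$ is nonempty, and I can choose for each $i$ the constant function $f_i : M_i \to O_i$ with value $\ob_i$ (any total function sending $\mb_i$ to $\ob_i$ would do equally well). For the resulting $f = (f_1, \dots, f_n) \in \Lambda$ we have $h(\mb, \ob, f)$, as required. (If $e = \vn$ one may simply take $h = \vn$, so that case is vacuous.) Next I would verify Strong Determinism for $h$: if $h(\mb, \ob, f)$ and $h(\mb', \ob', f)$ hold with the \emph{same} $f$ and $\mb_i = \mb'_i$, then by the second conjunct $\ob_i = f_i(\mb_i) = f_i(\mb'_i) = \ob'_i$, which is exactly what \SD\ demands.

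There is no real obstacle here beyond choosing $\Lambda$ correctly. The key point is that \SD\ requires that, once $\lambda$ is fixed, the outcome at part $i$ be a function of the measurement at $i$ alone, so the hidden variable must carry an entire family of per-part functions $M_i \to O_i$; a single global function $M \to O$ would merely yield Weak Determinism and is not enough. Once $\Lambda$ is taken this way, realizability reduces to the triviality that a single constraint $f_i(\mb_i) = \ob_i$ always extends to a total function into the nonempty set $O_i$, and Strong Determinism is then read off directly from the definition of $h$.
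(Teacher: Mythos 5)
Your proof is correct and takes essentially the same approach as the paper: in both, the hidden variable is a tuple of per-part response functions $M_i \to O_i$, and \SD\ is read off from $\ob_i = f_i(\mb_i)$. The only (cosmetic) difference is that the paper uses products of \emph{partial} functions required to be contained in $e$, whereas you use total functions and instead enforce membership in $e$ by an explicit conjunct in the definition of $h$.
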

\begin{proof}
Given $e \subseteq M \times O$, we define
\[ \Lambda = \{ \Phi \subseteq e \mid \Phi = \phi_1 \times \cdots \times \phi_n, \phi_i : M_i \pfn O_i, i \in \nn \} . \]
The hidden-variable model is defined as follows:
\[ h(\mb, \ob, \Phi) \; \equiv \; (\Phi(\mb) = \ob) . \]
We must show that the empirical model induced by $h$ is $e$.
Suppose that $e(\mb, \ob)$. Then we can define $\phi_i = \{ (\mb_i, \ob_i) \}$, $i \in \nn$, and $\Phi = \phi_1 \times \cdots \times \phi_n$. Then $h(\mb, \ob, \Phi)$, so $(\mb, \ob)$ is in the induced relation.
For the converse, suppose that $h(\mb, \ob, \Phi)$. Then $\Phi(\mb) = \ob$, and since $\Phi \subseteq e$, $e(\mb, \ob)$.

Now we show that $h$ satisfies \SD. Suppose that $h(\mb, \ob, \Phi)$, $h(\mb', \ob', \Phi)$, and $\mb_i = \mb'_i$. Then $\ob_i = \phi_i(\mb_i) = \phi_i(\mb'_i) = \ob'_i$.
\end{proof}

\begin{proposition}
\label{WDprop}
Every empirical model is realized by a hidden-variable model satisfying Weak Determinism and $\lambda$-Independence.
\end{proposition}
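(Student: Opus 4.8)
The plan is to take hidden variables that are choice functions selecting, for each measurement with non-empty support, a possible outcome: this is a variant of the Strong Determinism construction in the preceding proof, but with the $\Phi$'s required to be \emph{total} on the support of $e$ and with the product structure dropped. (The product structure cannot in general be imposed together with totality, since the support of $e$ need not be a product set; and it is not needed here, as we only want Weak Determinism and $\lambda$-Independence.)

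Concretely, first I would set $M' = \{ \mb \in M \mid e(\mb)\da \}$, the measurements for which at least one outcome is possible, and take
\[ \Lambda = \{ \Phi : M' \to O \mid \forall \mb \in M'. \, e(\mb, \Phi(\mb)) \} , \qquad h(\mb, \ob, \Phi) \; \equiv \; \mb \in M' \; \wedge \; \Phi(\mb) = \ob . \]
Note $\Lambda \neq \varnothing$: for each $\mb \in M'$ we may pick some $\ob$ with $e(\mb, \ob)$, and if $M' = \varnothing$ then $\Lambda$ still contains the empty function. Then I would check the three required properties. \emph{Realization}: $h(\mb, \ob, \Phi)$ gives $\Phi(\mb) = \ob$ with $\Phi \in \Lambda$, hence $e(\mb, \ob)$; conversely, if $e(\mb, \ob)$ then $\mb \in M'$, and extending the assignment $\mb \mapsto \ob$ to a full choice function $\Phi \in \Lambda$ (possible since every element of $M'$ has a possible outcome) witnesses $\exists \lambda. \, h(\mb, \ob, \lambda)$. \emph{Weak Determinism}: immediate, since for fixed $\Phi$ the relation $h(-, -, \Phi)$ is the graph of the function $\Phi$. \emph{$\lambda$-Independence}: $h(\mb')\da$ forces $\mb' \in M'$, and $h(\mb, \Phi)\da$ forces $\mb \in M'$ and $\Phi \in \Lambda$; since $\Phi$ is then defined at $\mb'$, the tuple $h(\mb', \Phi(\mb'), \Phi)$ holds, i.e.\ $h(\mb', \Phi)\da$.

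The one point that needs care — and the reason this does not follow directly from the earlier construction — is the treatment of measurements with empty support: these must be left out of the domain of the choice functions, and one must check that doing so neither disturbs realization nor breaks $\lambda$-Independence. It does not: $\lambda$-Independence only ever constrains measurements $\mb'$ with $h(\mb')\da$, and all of those lie in $M'$, where every $\Phi \in \Lambda$ is defined. A minor (and in the finite case vacuous) appeal to the axiom of choice enters only in exhibiting a total $\Phi$ for the realization step.
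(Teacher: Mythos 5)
Your construction is exactly the one in the paper: the hidden variables are choice functions $\Phi \subseteq e$ defined on $\dom(e) = \{ \mb \mid e(\mb)\da \}$, with $h(\mb, \ob, \Phi) \equiv (\Phi(\mb) = \ob)$, and the verifications of realization, Weak Determinism, and $\lambda$-Independence proceed identically. The proposal is correct and matches the paper's proof, with your added remarks on empty-support measurements merely making explicit what the paper leaves implicit.
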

\begin{proof}
Given $e \subseteq M \times O$, we define $\dom(e) = \{ \mb \mid e(\mb)\da \}$, and
\[ \Lambda = \{ \Phi \subseteq e \mid \Phi : \dom(e) \rarr O \} . \]
Thus the values of the hidden variable are \emph{choice functions}, which select outcomes for each choice of measurements which has a non-empty set of outcomes.
The hidden-variable model is defined by:
\[ h(\mb, \ob, \Phi) \; \equiv \; (\Phi(\mb) = \ob) . \]
We must show that the empirical model induced by $h$ is $e$. Suppose that $e(\mb, \ob)$. Then there is some choice function $\Phi$ such that $\Phi(\mb) = \ob$, and $h(\mb, \ob, \Phi)$. The converse is proved as for the previous Proposition.

To show that $h$ is Weakly Deterministic, suppose that $h(\mb, \ob, \Phi)$ and $h(\mb, \ob', \Phi)$. Then $\ob = \Phi(\mb) = \ob'$.

Finally, suppose that  $h(\mb, \Phi)$ and  $h(\mb')\da$. Then $\mb' \in \dom(e)$, and so $h(\mb', \Phi)\da$. This shows that $h$ satisfies \LI.
\end{proof}

We now show that if an empirical model can be realized by a hidden-variable model satisfying \LI\ and \Loc, then it can be realized by one satisfying \LI\ and \SD. Note that in general, \SD\ is strictly stronger than \Loc\ if these properties are considered in isolation.

This is a relational analogue of a result proved for probabilistic models in \cite{fine1982hidden,BK}.

\begin{proposition}
\label{LILocSDprop}
Let $h$ be a hidden-variable model satisfying \LI\ and \Loc. There is an equivalent model $h'$ satisfying \LI\ and \SD.
\end{proposition}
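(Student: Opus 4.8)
The plan is to refine each active value of the hidden variable by a family of local choice functions. Write $\Lp = \{\lambda \in \Lambda : h(\lambda)\da\}$ for the set of active values, and let $e$ be the empirical model induced by $h$. The first step is a structural lemma about the \emph{slices} $e_{\lambda} = \{(\mb,\ob) : h(\mb,\ob,\lambda)\}$ for $\lambda \in \Lp$: using \Loc\ together with \LI, each slice is determined by the single-part relations $T_i^{\lambda} = \{(m,o) \in M_i \times O_i : h(m,o,\lambda)\da\}$, in the sense that
\[ h(\mb,\ob,\lambda) \;\IFF\; \mb \in \dom(e) \;\AND\; \bigwedge_{i=1}^{n} (\mb_i,\ob_i) \in T_i^{\lambda}. \]
The left-to-right direction is immediate. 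For the converse, $\mb \in \dom(e)$ gives $h(\mb)\da$, which combines with $h(\lambda)\da$ via \LI\ to give $h(\mb,\lambda)\da$, after which \Loc\ applies. The same two ingredients yield a totality fact that I will also need: for every $\lambda \in \Lp$, every $i$, and every $m$ in the $i$-th projection of $\dom(e)$, there is some $o$ with $(m,o) \in T_i^{\lambda}$.

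Second, I would take as the new hidden-variable set $\Lv$ the set of pairs $(\lambda,\Phi)$ with $\lambda \in \Lp$ and $\Phi : \dom(e) \to O$ a function that factors as $\Phi(\mb) = (f_1(\mb_1),\dots,f_n(\mb_n))$ for some $f_i : \pi_i(\dom(e)) \to O_i$ satisfying $(m,f_i(m)) \in T_i^{\lambda}$ for all $m$ in the domain of $f_i$; such $\Phi$ exist by the totality fact. The new model is defined by
\[ h'(\mb,\ob,(\lambda,\Phi)) \;\IFF\; \mb \in \dom(e) \;\AND\; \Phi(\mb) = \ob. \]

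I would then verify the three requirements. For equivalence: by the slice lemma, $\mb \in \dom(e)$ forces $\Phi(\mb) \in e_{\lambda} \subseteq e$, so $h'$ realizes a subrelation of $e$; conversely, given $(\mb,\ob) \in e$, pick $\lambda$ with $h(\mb,\ob,\lambda)$ and extend the partial assignment $\mb_i \mapsto \ob_i$ to a full $f_i$ valued in $T_i^{\lambda}$, which exhibits $(\mb,\ob)$ as realized by $h'$. \SD\ is immediate: $h'(\mb,\ob,(\lambda,\Phi))$ and $h'(\mb',\ob',(\lambda,\Phi))$ with $\mb_i = \mb'_i$ force $\ob_i = f_i(\mb_i) = f_i(\mb'_i) = \ob'_i$. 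For \LI, the point is that $h'(\mb,\nu)\da$ holds if and only if $\mb \in \dom(e)$, \emph{uniformly in} $\nu = (\lambda,\Phi) \in \Lv$, since each $\Phi$ is total on $\dom(e)$; hence the set of active measurement contexts of $h'$ is $\dom(e)$ independently of $\nu$, which gives \LI\ at once.

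The main obstacle, and the reason this is not simply the product-of-partial-functions construction used in the earlier Propositions, is preserving \LI\ while remaining equivalent to $e$. A product $\phi_1 \times \cdots \times \phi_n$ is defined on a combinatorial box, which in general strictly contains $\dom(e)$; such a hidden variable would make spurious measurement contexts possible, breaking both equivalence and \LI. Cutting the choice functions down to the projections of $\dom(e)$ repairs this, but one must then know that enough locally-factored choice functions survive and that they still select jointly consistent outcomes — which is exactly the content of the slice lemma and the totality fact, and is where the hypotheses \Loc\ and \LI\ genuinely enter.
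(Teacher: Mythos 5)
Your proposal is correct and follows essentially the same route as the paper's proof: your $T_i^{\lambda}$, $\pi_i(\dom(e))$, and slice lemma are the paper's $\Oiml$, $\Mip$, and equation~(\ref{OLoceq}) (combined with the \LI\ step) in different notation, and the refined hidden-variable set of pairs $(\lambda,\Phi)$ with locally-factored choice functions, together with the verification of equivalence, \SD, and \LI, matches the paper's construction. Your closing remark about why the choice functions must be cut down to the projections of $\dom(e)$ rather than a full combinatorial box is exactly the paper's observation that $\Mp \subseteq \prod_i \Mip$ need not be an equality.
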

\begin{proof}
We are given $h \subseteq M \times O \times \Lambda$ satisfying \LI\ and \Loc.
We define
\[ \Lp = \{ \lambda \in \Lambda \mid h(\lambda)\da \}, \quad \Mp = \{ \mb \in M \mid h(\mb)\da  \} , \]
and for $\mb \in M$, $\lambda \in \Lambda$:
\[ \Oml = \{ \ob \in O \mid h(\mb, \ob, \lambda) \} . \]
Since $h$ satisfies \LI, for any $\mb \in \Mp$ and $\lambda \in \Lp$, $\Oml \neq \vn$.

We also define local versions of these notions, for each $i \in \nn$:
\[ \Mip = \{ m \in M_i \mid h(m)\da \}, \]
and for $\lambda \in \Lp$ and $m \in \Mip$,
\[ \Oiml = \{ o \in O_i  \mid h(m, o, \lambda)\da \} .\]
Since $h$ satisfies \LI, for any $m \in \Mip$ and $\lambda \in \Lp$, $\Oiml \neq \vn$.
Moreover, since $h$ satisfies \Loc, we have
\begin{equation}
\label{OLoceq}
\Oml = \prod_{i \in \nn}  O^{i}_{\mb_i, \lambda} .
\end{equation}
Indeed,
\[ \ob \in \Oml  \equiv h(\mb, \ob, \lambda) \siff \bigwedge_{i \in \nn} h(\mb_i, \ob_i, \lambda)\da  \equiv \bigwedge_{i \in \nn} \ob_i \in O^{i}_{\mb_i, \lambda} \siff \ob \in \prod_{i \in \nn}  O^{i}_{\mb_i, \lambda} . \]
Note that $\Mp \subseteq \prod_{i=1}^{n} \Mip$, but in general we need \emph{not} have equality.
We will return to this point in section~\ref{probmodsec}.

\noindent Now we define a new value space for hidden variables.
This is most elegantly expressed as a \emph{dependent type} \cite{barendregt1992lambda}:
\[ \Lv \; = \; (\sum \lambda \in \Lp)(\prod i \in \nn)(\prod m \in \Mip) \, \Oiml . \]
Explicitly, $\Lv$ consists of pairs $(\lambda, \Phi)$, where $\lambda \in \Lp$ and $\Phi = (\Phi_1, \ldots , \Phi_n)$, where $\Phi_i : \Mip \rarr O_i$, such that $\Phi_i(m) \in \Oiml$.

We define a new hidden variable model $h' \subseteq M \times O \times \Lv$ by:
\[ h'(\mb, \ob, (\lambda, \Phi)) \; \equiv \;  (\mb \in \Mp \AND \bigwedge_{i \in \nn} \Phi_i(\mb_i) = \ob_i ). \]
If $h'(\mb, \ob, (\lambda, \Phi))$, then $\mb \in \Mp$ and $\lambda \in \Lp$.
By construction, $\ob_i \in O^{i}_{\mb_i, \lambda}$, so by (\ref{OLoceq}), $\ob \in \Oml$, and $h(\mb, \ob, \lambda)$.
Conversely, if $h(\mb, \ob, \lambda)$, then $\mb \in \Mp$ and $\lambda \in \Lp$. For each $i \in \nn$, $\mb_i \in \Mip$, and $\ob_i \in O^{i}_{\mb_i, \lambda}$. 
Since $\Oiml$ is non-empty for each $m \in \Mip$ and $\lambda \in \Lp$, we can define $\Phi_i \in (\prod m \in \Mip)\, \Oiml$ with $\Phi_i(\mb_i) = \ob_i$. Hence there is $\Phi =  (\Phi_1, \ldots , \Phi_n)$ such that $h'(\mb, \ob, (\lambda, \Phi))$. Thus $h'$ is equivalent to $h$.

If  $h'(\mb, \ob, (\lambda, \Phi))$ and $h'(\mb', \ob',  (\lambda, \Phi))$ and $\mb_i = \mb'_i$, then 
$\ob_i = \Phi_i(\mb_i) = \Phi_i(\mb'_i) = \ob'_i$, so 
$h' $ satisfies \SD.

Now suppose that $h'(\mb')\da$ and $h'(\mb, (\lambda, \Phi))\da$. Then $\mb' \in \Mp$ and $\lambda \in \Lp$, and since $h$ satisfies \LI, $h(\mb', \lambda)\da$. Since $\mb' \in \Mp$,  $\Phi$ is defined on $\mb'$, and
so $h'(\mb', (\lambda, \Phi))\da$. Thus $h'$ satisfies \LI.

\end{proof}

\section{No-Go Results}

We shall now prove a number of results showing that there are empirical models which cannot be realized by any hidden-variable model with certain prescribed properties. These results are based directly on four classic constructions in the foundations of quantum mechanics: EPR \cite{einstein1935can}, GHZ \cite{greenberger1990bell}, Hardy \cite{hardy1993nonlocality}, and Kochen-Specker \cite{kochen1975problem}. However, our treatment is carried out entirely in our simple relational framework. Our versions of these constructions involve only finite sets and relations.

\subsection{EPR}

Our first result is by nature of a warm-up, following \cite{brandenburger2008classification}.

\begin{proposition}
\label{EPRprop}
There is an empirical model which is not realized by any hidden-variable model satisfying \SV\ and \OI.
\end{proposition}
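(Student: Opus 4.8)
The plan is to use the fact that \SV\ collapses a hidden-variable model onto its induced empirical model: for a given $e$ there is an essentially unique \SV\ model realizing it, namely $e$ itself with a dummy $\Lambda$-coordinate, so it is enough to choose $e$ so that this forced model violates \OI.

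Concretely, first I would note that if $h \subseteq M \times O \times \Lambda$ satisfies \SV, say $\Lambda = \{\lambda_0\}$, and realizes $e$, then the realization condition forces $h(\mb,\ob,\lambda_0) \IFF e(\mb,\ob)$, so $h$ is determined up to the irrelevant name of $\lambda_0$. Under this identification the $\lambda$-quantifier in \OI\ is vacuous, and \OI\ for $h$ is exactly the condition that for all $\mb$, all $i \in \nn$, all $o, o' \in O_i$ and $\ob, \ob' \in \Oim$,
\[ e(\mb, o, \ob) \AND e(\mb, o', \ob') \IMP e(\mb, o, \ob') \]
(equivalently, by (\ref{OIeq}), that the fibre $\{\ob : e(\mb,\ob)\}$ is the product of its marginals, for every $\mb$). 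So it suffices to exhibit one empirical model failing this.

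Then I would take the bipartite ``perfectly correlated bit'': $n = 2$, $M_1 = M_2 = \{\ast\}$, $O_1 = O_2 = \{0,1\}$, and $e = \{(\ast,\ast,0,0),(\ast,\ast,1,1)\}$. Here $e(\ast,\ast,0,0)$ and $e(\ast,\ast,1,1)$ both hold while $e(\ast,\ast,0,1)$ fails, so taking $i = 1$, $o = \ob = 0$, $o' = \ob' = 1$, the displayed implication is violated; hence no \SV\ model realizing $e$ satisfies \OI, which proves the proposition. If one wants $e$ to resemble the standard EPR state more closely, one can instead allow $M_1 = M_2$ to have several settings and let $e$ record perfect correlation of outcomes whenever the two settings agree; the same two tuples witness the failure.

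Since the reduction in the first step is immediate and the counterexample is a two-line verification, there is no genuine obstacle here --- this is a warm-up, as the text says. The only point worth stating carefully is the uniqueness of the \SV\ realization, which is what licenses the passage from ``no \SV\,$+$\,\OI\ model realizes $e$'' to ``the \SV\ model of $e$ itself fails \OI''.
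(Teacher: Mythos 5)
Your proof is correct and is essentially the paper's own argument: the paper also takes a bipartite system with a single measurement pair and a two-element fibre that is not a product of its marginals (it uses the anti-correlated pair $\{(a,b),(b,a)\}$ where you use the correlated pair $\{(0,0),(1,1)\}$, a purely cosmetic difference), and derives the contradiction by applying \OI\ to the unique single-valued realization. Your explicit remark that \SV\ forces $h(\mb,\ob,\lambda_0) \IFF e(\mb,\ob)$ is the same observation the paper uses implicitly.
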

\begin{proof}
We let $M_1 = \{ X \}$, $M_2 = \{ Y \}$, $O_1 = O_2 = \{ a, b \}$. We define $e$ by
\begin{center}
\begin{tabular}{c|cccc} 
$e$ & $(a, a)$ & $(a, b)$ & $(b, a)$ & $(b, b)$ \\ \hline
$(X, Y)$ &  $0$ & $1$ & $1$ & $0$ \\
\end{tabular}
\end{center}

\noindent Assume that $h$ satisfies \SV\ and \OI, with $\Lambda = \{ \lambda \}$. Suppose for a contradiction that $h$ realizes $e$. Then $h(X, Y, a, b, \lambda)$ and $h(X, Y, b, a, \lambda)$, so by \OI, $h(X, Y, a, a, \lambda)$. This implies $e(X, Y, a, a)$, yielding the required contradiction.
\end{proof}
Note that we can always find a hidden-variable model  realizing $e$ which satisfies \SV, by Proposition~\ref{SVprop}, and also one satisfying \OI, by Propositions~\ref{WDprop} and \ref{hvimpsprop}(1).
In fact, because $M$ is a singleton, the Weakly Deterministic model constructed for $e$ by Proposition~\ref{WDprop} is actually Strongly Deterministic. Applying Proposition~\ref{hvimpsprop} again, we see that it satisfies \Loc. Trivially, it satisfies \LI. Thus the `relational EPR model' does have a local hidden-variable model.

\subsection{GHZ}

We define a system type of arity 3 by:
\[ M_i = \{ 1, 2 \}, \quad O_i = \{ R, G \}, \qquad i = 1, 2, 3. \]
To lighten the notation, we shall write $122$ rather than $(1, 2, 2)$, and similarly for other tuples.
Let $P = \{ 122, 212, 221 \} \subseteq M$.
We consider any empirical relation $e$ such that:
\[ \forall p \in P. \; e(p) = \{  RRR, RGG, GRG, GGR \}  \]
\[ e(111) = \{ RRG, RGR, GRR, GGG \} . \]
Here we treat the relation $e$ in its equivalent form as a set-valued function $e : M \rarr  \Pow(O)$, so that $e(\mb) = \{ \ob \mid e(\mb, \ob) \}$..

Thus $e$ is completely specified on $P \cup \{ 111 \}$. It can have arbitrary behaviour on other measurements. We call any such $e$ a \emph{GHZ model}.

\begin{proposition}
\label{GHZprop}
No GHZ model $e$ can be realized by a hidden-variable model satisfying \LI\ and \Loc.
\end{proposition}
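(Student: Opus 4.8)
The plan is to reduce to the deterministic case via Proposition~\ref{LILocSDprop} and then run a parity argument in the style of Mermin.

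First I would suppose, for contradiction, that some GHZ model $e$ is realized by a hidden-variable model $h$ satisfying \LI\ and \Loc. By Proposition~\ref{LILocSDprop} there is an equivalent model $h'$ satisfying \LI\ and \SD, and since equivalent models realize the same empirical model, $h'$ still realizes $e$. So it suffices to derive a contradiction from the existence of $h'$ realizing $e$ and satisfying \LI\ and \SD. This is the step that fills the gap in Mermin's argument: he posits deterministic `instructions' outright, whereas here \SD\ (hence determinism) is obtained from the ostensibly weaker hypotheses \LI\ and \Loc.

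Next I would isolate a single value of the hidden variable active in all relevant contexts, and extract local instruction functions from it. Since $e(122) \neq \vn$ we have $e(122)\da$, hence $h'(122)\da$, so there is a value $\lambda$ of the hidden variable of $h'$ with $h'(122, \lambda)\da$. Because $e$ is (nonempty and) fully specified on $P \cup \{111\}$, we have $h'(\mb)\da$ for every $\mb \in P \cup \{111\}$, so by \LI, $h'(\mb, \lambda)\da$ for all such $\mb$. For this fixed $\lambda$, define $f_i : M_i \to O_i$ for $i = 1,2,3$ by choosing, for $m \in M_i = \{1,2\}$, any $\mb \in P \cup \{111\}$ with $\mb_i = m$ (one always exists) and any $\ob$ with $h'(\mb, \ob, \lambda)$, and setting $f_i(m) = \ob_i$; \SD\ makes this independent of the choices. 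The same application of \SD\ shows that for every $\mb \in P \cup \{111\}$ and every $\ob$ with $h'(\mb, \ob, \lambda)$ we have $\ob_i = f_i(\mb_i)$ for all $i$; and $h'(\mb, \ob, \lambda)$ implies $e(\mb, \ob)$. Hence $(f_1(\mb_1), f_2(\mb_2), f_3(\mb_3)) \in e(\mb)$ for each $\mb \in P \cup \{111\}$.

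Finally I would run the parity computation. Encode outcomes via $\sigma : O_i \to \{+1,-1\}$ with $\sigma(R) = +1$, $\sigma(G) = -1$; then $\{RRR, RGG, GRG, GGR\}$ is exactly the set of triples with $\sigma$-product $+1$, and $\{RRG, RGR, GRR, GGG\}$ is exactly the set with $\sigma$-product $-1$. Writing $r_i = \sigma(f_i(1))$ and $s_i = \sigma(f_i(2))$, the four membership facts give $r_1 s_2 s_3 = 1$, $s_1 r_2 s_3 = 1$, $s_1 s_2 r_3 = 1$, and $r_1 r_2 r_3 = -1$. Multiplying the first three and using $s_i^2 = 1$ yields $r_1 r_2 r_3 = 1$, contradicting the last equation. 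I expect the only real subtlety to be the middle paragraph — legitimately turning the relational hypotheses into honest deterministic instructions tied to one value $\lambda$ — once Proposition~\ref{LILocSDprop} is invoked this is just careful bookkeeping, and the parity step is entirely routine.
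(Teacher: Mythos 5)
Your proof is correct and follows the paper's own strategy: reduce to a strongly deterministic model via Proposition~\ref{LILocSDprop}, fix a single value $\lambda$ of the hidden variable active on $P \cup \{111\}$ using \LI, extract Mermin-style instructions from \SD, and derive a parity contradiction. The only differences are cosmetic --- you package the instructions as local functions $f_i$ and multiply three $\pm 1$ equations rather than enumerating the eight instruction sets, and you read off the outcome at $111$ directly from \SD\ instead of via the paper's final appeal to \Loc.
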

\begin{proof}
Assume for a contradiction that there is a hidden-variable model which satisfies  \LI\ and \Loc, and realizes $e$. Applying Proposition~\ref{LILocSDprop}, this implies that there is a hidden-variable model $h$ which satisfies \LI\ and \SD\ and realizes $e$.

The assumption that $h$ realizes $e$
implies that for some $\lambda$, $h(111, \lambda)\da$, and also $h(p)\da$ for $p \in P$. By \LI, this implies that $h(p, \lambda)\da$ for each $p \in P$.

We now analyze the set of outcomes $\ob$ such that $h(122, \ob, \lambda)$. Since $h$ realizes $e$, this must be a non-empty subset $S$ of $\{ RRR, RGG, GRG, GGR \}$. Since $h$ satisfies \SD, it must be a singleton. Similar reasoning applies to $T  = \{ \ob \mid h(221, \ob, \lambda)\}$.

Since $122$ and $221$ have the same middle measurement, and $h$ satisfies \SD, if $h(122, \ob, \lambda)$ and $h(221, \ob', \lambda)$, we must have $\ob_{2} = \ob'_{2}$. Hence if $S = \{ RRR \}$ or $S = \{ GRG \}$, then $T = \{ RRR \}$ or $T = \{ GRG \}$, and  if $S = \{ RGG \}$ or $S = \{ GGR \}$, then $T = \{ RGG \}$ or $T = \{ GGR \}$.
Thus we have 8 possible joint assignments under $\lambda$ to $122$ and $221$.
We can now check that any of these completely determines the assignment to $212$.

Suppose for example that $122 \mapsto RRR$, $221 \mapsto GRG$. Then by \SD, $212 \mapsto G{-}R$, and the only consistent possibility for the middle outcome is that $212 \mapsto GGR$. We can represent this joint assignment to the measurements in $P$ as the `instruction'  
\[ \inst{RGG}{GRR}  \]
The rubric is that the $i$'th row gives the outcomes under $\lambda$ when the measurements are set to $i$, $i = 1, 2$.
A similar analysis applies to the other seven cases.

We can tabulate this well-known `Mermin instruction set'  \cite{mermin1990quantum} as follows:
\[ \inst{RRR}{RRR} \quad \inst{RGG}{RGG} \quad \inst{GRG}{GRG} \quad \inst{GGR}{GGR} \]
\[ \inst{RGG}{GRR} \quad \inst{RRR}{GGG} \quad \inst{GGR}{RRG} \quad \inst{GRG}{RGR} \]

\noindent For any of these cases, let $\ob$ be the top row of the instruction, and let $\mb = 111$. We have $h(\mb_i, \ob_i, \lambda)\da$, $i = 1, \ldots , 3$, and hence by \Loc, we have $h(111, \ob, \lambda)$. Since each top row contains an odd number of $R$'s, and the possible outcomes for $111$ under $e$ all contain an even number of $R$'s, we obtain the desired contradiction to the assumption that $h$ realizes $e$.
\end{proof}

\subsubsection*{Discussion}
We have followed Mermin's classic presentation of the GHZ argument \cite{mermin1990quantum} closely. However, compelling and polished as his account is, it does not fully specify the precise assumptions which are being used. In particular, his use of `instructions' tacitly assumes \emph{determinism}, which is made plausible on physical grounds. 
We replace this tacit assumption by Proposition~\ref{LILocSDprop}, which shows that if a local hidden-variable realization exists, there must be one satisfying 
Strong Determinism.

The mathematical content of the  `instructions' which appear in the proof will be explained in generality in Proposition~\ref{unionsdprop}.

\subsection{Hardy}

We shall now give a relational formulation of the `Hardy paradox' \cite{hardy1993nonlocality}. The result is similar to the one based on the GHZ construction, but is of additional interest for several reasons:
\begin{itemize}
\item It applies to bipartite systems, whereas GHZ is essentially (at least) tripartite.
\item The Hardy construction avoids inequalities, but is probabilistic in character. However, the argument can be carried out in purely relational or possibilistic terms. 
\item The construction leads to a family of  `axioms' which must be satisfied by all models which can be realized by local hidden variables. This has some flavour of a logical version of the CHSH inequalities \cite{clauser1969proposed}, as suggested in \cite{fritz2009possibilistic}.
\end{itemize}

Our formulation follows the lines of \cite{mermin1994quantum,fritz2010quantum}, although in our opinion  the present treatment is clearer and more explicit as to exactly which assumptions are being used.

We shall be concerned with bipartite systems of the following type:
\[ M_1 = \{ X_1, X_2 \}, \quad M_2 = \{ Y_1, Y_2 \}, \qquad O_1 =   O_2 = \{ R, G \} . \]
We consider relational models $e$ satisfying the following condition:
\begin{center}
\begin{tabular}{c|cccc} 
 & $(R, R)$ & $(R,G)$ & $(G,R)$ & $(G, G)$ \\ \hline
$(X_1, Y_1)$ &  $1$ &  &  &  \\
$(X_1, Y_2)$ &   $0$ &  &  &  \\
$(X_2, Y_1)$ &  $0$ &  &  & \\
$(X_2, Y_2)$ &   &  &  & $0$ \\
\end{tabular}
\end{center}
What this means is that $e$ must take the specified values; no condition is being imposed on the remaining entries. We shall also assume that $e$ is \emph{total}, meaning that every measurement combination has some possible outcome. We shall call models satisfying these two conditions \emph{Hardy models}.

\begin{proposition}
\label{hardyprop}
No Hardy model can be realized by a hidden-variable model satisfying \LI\ and \Loc.
\end{proposition}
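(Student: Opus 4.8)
The plan is to run the GHZ argument essentially verbatim, exploiting the fact that in the bipartite setting the behaviour of a local, $\lambda$-independent model under a single fixed hidden-variable value collapses to a product of local assignments, and then propagating the forced outcomes around the four prescribed table entries until a contradiction appears.

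So I would assume, for a contradiction, that some Hardy model $e$ is realized by a hidden-variable model satisfying \LI\ and \Loc. By Proposition~\ref{LILocSDprop} I may replace it by an equivalent model $h$ satisfying \LI\ and \SD\ (and hence, by Corollary~\ref{SDLoccorr}, also \Loc). Since $e(X_1, Y_1, R, R)$ and $h$ realizes $e$, there is $\lambda \in \Lambda$ with $h(X_1, Y_1, R, R, \lambda)$. Totality of $e$ gives $e(\mb)\da$, hence $h(\mb)\da$, for all four combinations $\mb \in M$; together with $h(\lambda)\da$ and \LI\ this yields $h(\mb, \lambda)\da$ for every $\mb \in M$. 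Now I extract local outcome functions at this $\lambda$: since $h$ satisfies \SD, whenever $h(\mb, \ob, \lambda)$ holds the component $\ob_i$ depends only on $\mb_i$, so there are total functions $f_1 : M_1 \to O_1$, $f_2 : M_2 \to O_2$ with $h(\mb, \ob, \lambda) \IMP \ob = (f_1(\mb_1), f_2(\mb_2))$. From $h(X_1, Y_1, R, R, \lambda)$ I read off $f_1(X_1) = R$ and $f_2(Y_1) = R$. Since $h(X_1, Y_2, \lambda)\da$, the outcome $(R, f_2(Y_2))$ is realized at $(X_1, Y_2)$ under $\lambda$, and $e(X_1, Y_2, R, R) = 0$ forces $f_2(Y_2) = G$. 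Symmetrically, $h(X_2, Y_1, \lambda)\da$ and $e(X_2, Y_1, R, R) = 0$ force $f_1(X_2) = G$. But then $h(X_2, Y_2, \lambda)\da$ realizes the outcome $(f_1(X_2), f_2(Y_2)) = (G, G)$ at $(X_2, Y_2)$, i.e. $h(X_2, Y_2, G, G, \lambda)$, contradicting $e(X_2, Y_2, G, G) = 0$.

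I do not expect a genuine obstacle here once Proposition~\ref{LILocSDprop} is available; the only point needing care is the justification that, at the fixed $\lambda$, \SD\ together with the domain facts supplied by totality and \LI\ really does present the behaviour under $\lambda$ as a single product assignment $f_1 \times f_2$ — this is precisely the relational content of Mermin-style ``instructions'', which is to be treated in generality in Proposition~\ref{unionsdprop}. After that the argument is just the four-step chain of forced values $f_1(X_1)=R$, $f_2(Y_1)=R$, $f_2(Y_2)=G$, $f_1(X_2)=G$, clash. (One could also argue directly from \Loc\ and \LI\ without the \SD\ reduction, since $O_1$ and $O_2$ are two-element sets and the relevant local outcome sets are forced to be singletons anyway; but the reduction via Proposition~\ref{LILocSDprop} keeps the parallel with the GHZ proof transparent.)
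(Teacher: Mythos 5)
Your proposal is correct and follows essentially the same route as the paper: reduce to a model satisfying \LI\ and \SD\ via Proposition~\ref{LILocSDprop}, use totality and \LI\ to get $h(\mb,\lambda)\da$ for all four measurement pairs at the fixed $\lambda$, and then propagate forced outcomes around the table until one of the excluded entries is hit. The paper phrases the propagation via \PI\ (chaining $X_1Y_1 \to X_1Y_2 \to X_2Y_2 \to X_2Y_1$ and deriving $R = G$ at the last step) rather than via your explicit instruction functions $f_1, f_2$, but this is only a cosmetic difference in bookkeeping.
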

\begin{proof}
Assume for a contradiction that there is a hidden-variable model which satisfies  \LI\ and \Loc, and realizes $e$. Applying Proposition~\ref{LILocSDprop}, this implies that there is a hidden-variable model $h$ which satisfies \LI\ and \SD\ and realizes $e$.
By Proposition~\ref{hvimpsprop}(3) and Corollary~\ref{SDLoccorr}, $h$ also satisfies \PI.

Since $h$ realizes $e$, for some $\lambda$, $h(X_1 Y_1, RR, \lambda)$. Since $e$ is total, $h(X_1 Y_2)\da$, and by \LI, $h(X_1 Y_2, \lambda)\da$. By \PI, for some $o$ we must have 
$h(X_1 Y_2, Ro, \lambda)$.  Since $e(X_1 Y_2, RR)$ is excluded by one of the Hardy conditions, and $h$ realizes $e$ by assumption, we must have $h(X_1 Y_2, RG, \lambda)$. Similar reasoning now shows that we must have $h(X_2 Y_2, RG, \lambda)$, and $h(X_2 Y_1, RG, \lambda)$. However, $h$ is strongly deterministic, and from  $h(X_2 Y_1, RG, \lambda)$ and $h(X_1 Y_1, RR, \lambda)$ this implies $R = G$, yielding the required contradiction.
\end{proof}

\subsubsection*{Discussion}
Mermin's discussion of this result in \cite{mermin1994quantum} makes some play of the fact that, unlike his presentation of  the GHZ argument,  Einsteinian `elements of reality' and Merminian `instructions' do not appear. However, we can see that in fact exactly the same assumptions are required for a rigorous proof. The elegance of the Hardy result is that it applies to the bipartite case, and to a class of models satisfying simple general conditions. The \emph{quantum realization} of these models does involve probabilities strictly between 0 and 1, as we shall see in the next Section; however, the relational formulation allows us to see clearly that the No-Go result itself is purely logical in character.\footnote{Indeed, the quantum realization of the GHZ construction also necessarily involves probabilities strictly between 0 and 1, as must any probability distribution whose possibilistic collapse is a many-valued relation. In either case, an experimental verification that a physical system does have the specified behaviour of  a GHZ or Hardy model --- and hence has no realization by hidden variables --- will require many runs of the system.}

An immediate corollary of Proposition~\ref{hardyprop} is that a necessary condition for (total) relational models to have local hidden-variable realizations is that they satisfy
the implication
\[ a \IMP b \OR c \OR d \]
where $a$, $b$, $c$, $d$ are boolean variables placed at the indicated points of the table:
\begin{center}
\begin{tabular}{c|cccc} 
 & $(R, R)$ & $(R,G)$ & $(G,R)$ & $(G, G)$ \\ \hline
$(X_1, Y_1)$ &  $a$ &  &  &  \\
$(X_1, Y_2)$ &   $b$ &  &  &  \\
$(X_2, Y_1)$ &  $c$ &  &  & \\
$(X_2, Y_2)$ &   &  &  & $d$ \\
\end{tabular}
\end{center}
Since $1$ corresponds to logical truth, the fact that this implication holds can be written as the boolean inequality
\[ 1 \leq \neg a \OR b \OR c \OR d . \]
Note also that there are 8 variants of this constraint, arising from the symmetries
\[ X_1 \leftrightarrow X_2, \qquad Y_1 \leftrightarrow Y_2 , \qquad R \leftrightarrow G . \]
Whether a useful  theory of logical conditions characterizing local hidden-variable and other classes of models can be developed in the relational setting, paralleling the use of inequalities on probabilistic models to define correlation polytopes \cite{pitowsky1989quantum}, remains to be seen. Some work in this direction is reported in \cite{fritz2009possibilistic}.
\subsection{KS}
\label{KSsec}

We now turn to a result based on the Kochen-Specker theorem \cite{kochen1975problem}. We follow \cite{brandenburger2008classification}, and give a proof based on the 18-vector construction in 4 dimensions of \cite{cabello1996bell}, although of course our account is purely in terms of discrete sets and relations. 

We begin with some preliminary notions.
Given a system of arity $n$ and type $(M, O)$, the symmetry group $S_n$ acts on $M$ and $O$ in the evident fashion:
\[ \begin{array}{rcl}
\pi \cdot (m_1, \ldots , m_n)  & = & (m_{\pi^{-1}(1)}, \ldots , m_{\pi^{-1}(n)}), \\
\pi \cdot (o_1, \ldots , o_n) & = & (o_{\pi^{-1}(1)}, \ldots , o_{\pi^{-1}(n)}) .
\end{array}  \]
We say that an empirical relation $e \subseteq M \times O$ is \emph{equivariant} (or `satisfies Exchangability', to use probabilistic terminology as in \cite{brandenburger2008classification}) if, for all $\pi \in S_n$:
\[ e(\mb, \ob) \IFF e(\pi \cdot \mb, \pi \cdot \ob) . \]

\begin{proposition}
\label{KSprop}
There is an empirical model which cannot be realized by any hidden-variable model satisfying \LI\ and \PI.
\end{proposition}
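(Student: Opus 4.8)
The statement is a No-Go result of a different flavour from GHZ and Hardy: there we ruled out realizations satisfying \LI{} and \Loc, whereas here we want to rule out realizations satisfying the apparently weaker pair \LI{} and \PI{} (i.e.\ no Outcome Independence is assumed). The natural route is to exploit the formal connection between \PI{}/No-Signalling and contextuality that the introduction advertises. First I would set up the Kochen--Specker data in relational form: take the $18$-vector, $9$-context construction of Cabello--Estebaranz--Garc\'ia-Alcaine \cite{cabello1996bell} in $\Real^4$, and encode it as an empirical model whose measurement choices range over the $9$ orthogonal bases (each a maximal measurement context) and whose outcomes record, for each of the four vectors in the chosen basis, a bit in $\{0,1\}$. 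The empirical relation $e$ records exactly the outcome tuples in which precisely one of the four vectors is assigned $1$ — this is the relational shadow of the fact that a projective measurement in an orthonormal basis yields exactly one outcome. One packages this as a system of some arity $n$ (one may take $n$ equal to the number of vectors, $18$, with $M_i$ listing the contexts through vector $i$ and $O_i = \{0,1\}$, and impose equivariance under the relabelling symmetry), so that the ``$1$ appears exactly once per context'' condition becomes a family of constraints on $e$.

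**Key steps.** (1) Verify that this $e$ genuinely defines a consistent, non-empty relational empirical model — this is just the observation that each individual context admits the four ``exactly one $1$'' assignments, and the cross-context consistency is handled because each vector's bit is shared between the contexts containing it. (2) Assume for contradiction that $h$ realizes $e$ and satisfies \LI{} and \PI. Pick $\lambda$ witnessing one of the allowed global outcome patterns; by \LI, $h(\mb,\lambda)\da$ for every measurement combination $\mb$ that is live in $e$, and by \PI{} (which is precisely relational No-Signalling conditioned on $\lambda$, cf.\ the discussion before Proposition~\ref{LIPIprop}) the possible value of each individual vector's bit under $\lambda$ does not depend on which context that vector is being measured in. This is the crucial move: \PI{} forces the per-vector outcome to be \emph{context-independent}, i.e.\ it extracts, from the no-signalling hypothesis alone, exactly the non-contextual value assignment that the Kochen--Specker theorem forbids. (3) For each vector $v$, let $A(v) \subseteq \{0,1\}$ be the set of bits that are possible under $\lambda$; \PI{} makes $A(v)$ well-defined independently of context, and realizing $e$ forces, for each context $\{v_1,v_2,v_3,v_4\}$, that there is a choice $b_j \in A(v_j)$ with exactly one $b_j = 1$ — so in particular some $b_j = 1$, hence $1 \in A(v_j)$ for at least one $v_j$ in every context, while also not all four can be forced to $1$. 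Reducing further: if $1 \in A(v)$ for two orthogonal $v$, the ``exactly one'' condition is still satisfiable within that context, so the genuine contradiction comes from a counting/parity argument over all $9$ contexts and $18$ vectors — the classical $101$-colourability obstruction of the Cabello construction — showing no assignment of the $A(v)$'s can meet all nine constraints simultaneously.

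**Main obstacle.** The delicate point is step (3): turning the hypothesis into a sharp \emph{non-contextual} assignment. With only \PI{} and not \OI, what we extract is a context-independent \emph{set} $A(v)$ of possible bits per vector, not a single value, and the ``exactly one $1$ per context'' requirement is a constraint coupling the four chosen values within a context, not a property of the $A(v)$'s in isolation. I expect the cleanest fix is to first derive \emph{Weak Determinism} at the level of each context — or to observe that the Cabello $101$-colouring impossibility already bites at the level of ``which vectors \emph{can} be $1$'': one shows that choosing, for each context, a valid ``exactly one $1$'' tuple amounts to a global $0/1$ colouring of the $18$ vectors with exactly one $1$ per context, which is precisely what \cite{cabello1996bell} proves impossible. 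Making this reduction airtight — checking that \LI{} plus \PI{} plus realizability really does yield such a global colouring, rather than merely a relation that locally looks like one — is where the real work lies, and it is the relational analogue of the point flagged in the introduction as ``clarifying an apparent anomaly in \cite{brandenburger2008classification}''.
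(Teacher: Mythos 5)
There is a genuine gap, and you have in fact put your finger on it yourself in your ``main obstacle'' paragraph without closing it: the empirical model you propose does not witness the proposition. If $e$ allows, for each context, \emph{all four} of the ``exactly one $1$'' outcome tuples, then the marginal set of possible bits for every vector in every context is $\{0,1\}$ (one allowed tuple assigns it $1$, three assign it $0$). Consequently $e$ satisfies \NS{} vacuously, and the single-valued hidden-variable model of Proposition~\ref{SVprop} realizes it while trivially satisfying \LI{} and \PI. So no contradiction can be reached from this $e$: the sets $A(v)$ you extract are all $\{0,1\}$, and the ``exactly one $1$ per context'' condition is a coupling between outcomes within a context that \PI{} alone cannot see --- decoupling it is exactly what \OI{} would provide, and \OI{} is precisely what the proposition refuses to assume. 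Your two suggested repairs do not work either: Weak Determinism cannot be derived from \LI{} and \PI{} (by Theorem~\ref{hvimpsprop}, \WD{} implies \OI, so any such derivation would collapse the proposition into the GHZ/Hardy situation; the determinism upgrade of Proposition~\ref{LILocSDprop} needs \Loc, i.e.\ \PI{} \emph{and} \OI), and the $101$-colouring obstruction has no purchase on the possibility sets $A(v)$ when they are all $\{0,1\}$.

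The paper's fix is to build the determinism into the \emph{empirical model itself}: it takes a system of arity $4$ (one part per row of the Cabello table, each with all $18$ vectors as measurements and outcomes $\{0,1\}$), fixes a function $f : P \rarr Q$ selecting a \emph{single} ``exactly one $1$'' tuple for each of the nine columns, stipulates $e(\mb,\ob) \IFF f(\mb)=\ob$ on $P$, and closes up under equivariance. Such an $e$ exists for any choice of $f$. Now the marginal possible outcome of each vector in each context is a single bit, so Proposition~\ref{LIPIprop} (any \LI{} $+$ \PI{} realization forces \NS{} of $e$) together with equivariance forces $f$ to be non-contextual, and the parity count --- nine columns each contributing exactly one $1$ (odd total) versus each vector occurring in exactly two columns (even total) --- gives the contradiction. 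So the raw material in your proposal (the $18$-vector construction, the reading of \PI{} as per-$\lambda$ no-signalling, the parity obstruction) is the right one, but the argument only bites once the empirical model is chosen outcome-deterministic on each context; with the ``all valid colourings are possible'' model the proposition is simply false of that model.
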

\begin{proof}
We shall use a system of arity 4, with the following type:
\[ M_i = \{ m_1, \ldots , m_{18} \}, \quad O_i = \{ 0, 1 \}, \quad i = 1, \ldots , 4. \]
Consider the table (from \cite{brandenburger2008classification,cabello1996bell}):
\begin{center}
\begin{tabular}{|c|c|c|c|c|c|c|c|c|} \hline
$m_1$ & $m_1$ & $m_8$ & $m_8$ & $m_2$ & $m_9$ & $m_{16}$ & $m_{16}$  & $m_{17}$ \\ \hline
$m_2$ & $m_5$ & $m_9$ & $m_{11}$ & $m_5$ & $m_{11}$ & $m_{17}$ & $m_{18}$ & $m_{18}$  \\ \hline
$m_3$ & $m_6$ & $m_3$ & $m_7$ & $m_{13}$ &  $m_{14}$ & $m_4$ & $m_6$ & $m_{13}$  \\ \hline
$m_4$ & $m_7$ & $m_{10}$ & $m_{12}$ & $m_{14}$  & $m_{15}$ & $m_{10}$ & $m_{12}$ & $m_{15}$  \\ \hline
\end{tabular}
\end{center}
Let $P \subseteq M$ be the set of quadruples of measurements corresponding to the columns of this table. Let $Q \subseteq O$ be the set 
\[ \{ (1, 0, 0, 0), (0, 1, 0, 0), (0, 0, 1, 0), (0, 0, 0, 1) \} . \]
The specification of the empirical model $e \subseteq M \times O$ is as follows:
\begin{enumerate}
\item For some function $f : P \rarr Q$,  for all $\mb \in P$:
\[ e(\mb, \ob) \IFF f(\mb) = \ob . \]
\item $e$ is equivariant.
\end{enumerate}
Note that this specification can always be met, since we can take an arbitrary function $f$ as in (1), and then expand its definition to fulfill equivariance. In particular, no element of $P$ is a permutation of any other element, so there is no conflict between (1) and (2).

Now assume for a contradiction that $e$ is realized by a hidden-variable model $h$ satisfying \LI\ and \PI. By Proposition~\ref{LIPIprop}, this implies that $e$ satisfies No-Signalling. Consider the assignment $e$ makes to the first column. Let the (unique) element in this column assigned 1 by $e$ be $m_i$. Note that every element appears in exactly 2 columns in the table. By equivariance, the assignment $e$ makes to the other column $j$ in which $m_i$ appears is unchanged if we permute the elements in that column so that $m_i$ appears in the same row as it does in column 1. But then by \NS, it must be the case that $m_i$ is also the unique element assigned 1 in column $j$. The same argument can be applied to every column, and we conclude that the function $f$ must be \emph{non-contextual}; that is, it assigns the same value, 0 or 1,  to each $m \in \{ m_1, \ldots , m_{18} \}$ regardless of where it appears as a component in $P$. However, no such $f$ can exist, since each of the nine columns is assigned exactly one 1, so an odd number of 1's appears among the outcomes assigned to the elements of $P$; while each $m$ appears twice in $P$, so the number of 1's arising from any non-contextual assignment must be even. This yields the required contradiction.
\end{proof}

\subsubsection*{Discussion}
While formally the above argument is clear-cut (and of course follows \cite{cabello1996bell} and \cite{brandenburger2008classification}), conceptually there are some surprises. In particular:
\begin{itemize}
\item The fact that No-Signalling gives rise to non-contextuality is unexpected.
\item The Kochen-Specker theorem is of course meant to show the contextuality of quantum mechanics. Thus the empirical system we used in the proof of Proposition~\ref{KSprop} should arise from quantum mechanics --- and indeed it does, as we shall recall in the next section. However, quantum  systems are supposed to satisfy No-Signalling! So what is going on?
\end{itemize}
The answer to the second point is that in this case, arities are \emph{not} being interpreted as standing in for spatially distributed structure, so issues of causality do not arise. Rather, as we shall see in more detail in the next section, the arities in this case really correspond to different \emph{observables}, which may be applied to the \emph{same} (e.g.~single-particle) system. The `measurements' in the system type then correspond to different `branches' of these measurements --- \ie the projectors arising in their spectral decompositions. In this context, `No-Signalling' means the ability to swap outcomes for given projectors between different measurements --- \ie non-contextuality.

We might say that under this reading, we are considering constraints on information flow between  \emph{different measurements of the same system},  as counterfactual alternatives, rather than between space-like separated regions.

It is not clear if the fact that Kochen-Specker and GHZ can be brought into a common format in this way can lead to a more unified understanding of these phenomena.
It is tempting to look for a connection with Mermin's proposal for a unified derivation of KS and GHZ \cite{mermin1990simple}. However, on closer inspection this does not appear to be related. Note in particular that KS as it appears here (following \cite{brandenburger2008classification}) is a substantially \emph{different} result to GHZ, since KS excludes a much wider range of hidden-variable models.

\section{Physical Models}
\label{physmodsec}

So far, our formal development has been physics-free. We see this clean separation between a simple mathematical framework, which can be used to prove clear-cut results from precisely formulated assumptions, and the much more complex structures and concepts  from physics which provide the motivation, as a virtue. However, it is of course important to make the connection, which we shall now do. 

\subsection*{Empirical Models Arising from Quantum Mechanics}

We shall now spell out how quantum  systems give rise to a class of empirical models, which we shall call $\QM$. Thus if $\EM$ is the class of all empirical models, $\QM \subseteq \EM$.

Suppose we are given a system type $(M, O)$ of arity $n$.
A \emph{quantum realization} of this system type is specified by the following data:
\begin{itemize}
\item Finite dimensional Hilbert spaces $\HH_1, \ldots , \HH_n$.
\item For each $i \in \nn$, $m \in M_i$, and $o \in O_i$, a linear operator $A_{m,o}$ on $\HH_i$, subject to the condition:
\[ \sum_{o \in O_i} A_{m, o}^{\dagger} A_{m, o} = I_{\HH_i} . \]
Thus $(A_{m, o})_{o \in O_i}$ forms a \emph{generalized measurement} \cite{nielsen2000quantum}. 
\item A state $\rho$, \ie a density operator on  $\HH_1 \otimes \cdots \otimes H_n$. 
\end{itemize}
For each choice of measurement $\mb \in M$, and outcome $\ob \in O$, the usual `statistical algorithm' of quantum mechanics  defines a probability $p_{\mb}(\ob)$ for obtaining outcome $\ob$ from performing the measurement $\mb$  on $\rho$:
\[ p_{\mb}(\ob) = \mathsf{Tr}(\Amo^{\dagger} \Amo  \rho) \]
where $\Amo = A_{\mb_1, \ob_1} \otimes \cdots \otimes A_{\mb_n, \ob_n}$.

We  define a relational empirical model $e \subseteq M \times O$ by
\[ e(\mb, \ob) \; \equiv \; (p_{\mb}(\ob) > 0) . \]
Thus $e$ arises as the `possibilistic collapse' of the usual quantum mechanical formalism.
We take $\QM$ to be the class of empirical models which are realized by  quantum systems in this fashion.

In the examples to follow, we will be in the special case where $\rho$ is a pure state, $\rho = \ket{\psi}\bra{\psi}$, and the measurements are projective, so $A_{m, o} = \ket{\psi_{m,o}}\bra{\psi_{m, o}}$, where
$\psi_{m, o}$ is the eigenvector corresponding to the outcome $o$. In this case, the statistical algorithm is equivalently  given by:
\[ p_{\mb}(\ob) = | \langle \psi \mid \psi_{\mb, \ob} \rangle |^2 , \]
where $ \psi_{\mb, \ob}  =  \psi_{\mb_1, \ob_1} \otimes \cdots \otimes \psi_{\mb_n, \ob_n}$.

We can now add to the content of the results of the previous section by indicating how the empirical models used in the proofs arise from quantum systems; that is, showing that  these systems are in the class $\QM$. This will `complete' the usual derivations of these results in application to quantum mechanics, although it is striking how much of the work can be factored out to the purely relational level.

\subsubsection*{EPR}

The empirical system used in the proof of Proposition~\ref{EPRprop} arises from the 2-qubit system, $\HH_1 = \HH_2 = \Comp^2$, with the state $\frac{| 01  \rangle + | 10 \rangle}{\sqrt{2}}$, and  with 1-qubit measurements in the computational basis.

\subsubsection*{GHZ}
\label{GHZqmsubsec}

Consider a 3-qubit system, $\HH_1 = \HH_2 = \HH_3 = \Comp^2$. The `GHZ state'  is represented as 
\[ \frac{| 000  \rangle + | 111 \rangle}{\sqrt{2}} \]
in the computational basis.
We can interpret the computational basis  in each component as corresponding to the measurement for spin Up or Down along the $z$-axis. We also have measurement for spin Right or Left along the $x$-axis, with basis vectors 
\[ \frac{| 0  \rangle + | 1 \rangle}{\sqrt{2}}, \qquad \frac{| 0  \rangle - | 1 \rangle}{\sqrt{2}} \]
and similarly for spin Forward or Back along the $y$-axis, with basis vectors
\[ \frac{| 0  \rangle + i | 1 \rangle}{\sqrt{2}}, \qquad \frac{| 0  \rangle - i | 1 \rangle}{\sqrt{2}} \]
These bases, with eigenvalues corresponding to the spins, determine observables $X$ and $Y$.
For example, $\frac{| 0  \rangle + | 1 \rangle}{\sqrt{2}}$ is the eigenvector of $X$ corresponding to the measurement outcome spin Right.
For each component, we interpret the measurements $1$ and $2$ as $X$ and $Y$, and the outcomes $G$ and $R$ as the two possible spin directions along the given axis: $G$ for spin Right for $X$ and spin Forward for $Y$,  and $R$ for the alternative outcomes. We can compute the quantum mechanical probabilities for these measurement outcomes on the GHZ state; for example, the probability distribution on outcomes for the measurement $XYY$, corresponding to $122$ in our labelling, is:
\[ p_{XYY}(\ob) = \left\{ \begin{array}{ll}
1/4, & \ob \in \{  RRR, RGG, GRG, GGR \} \\
0 & \mbox{otherwise.}
\end{array} \right.
\]
The possibilistic collapse of these probabilities produces a GHZ model,
as used in the proof of Proposition~\ref{GHZprop}.

\subsubsection*{Hardy}

A detailed discussion of quantum realizations of the Hardy construction is given in \cite{hardy1993nonlocality,mermin1994quantum}.
We shall just give a simple concrete instance.

We consider the two-qubit system, with $X_2$ and $Y_2$ measurement in the computational basis.
We take $R=0$, $G=1$.
The eigenvectors for $X_1$ are taken to be
\[ \sqrt{\frac{3}{5}} \ket{0}  + \sqrt{\frac{2}{5}} \ket{1} , \qquad  - \sqrt{\frac{2}{5}} \ket{0}  + \sqrt{\frac{3}{5}} \ket{1}  \]
and similarly for $Y_1$. The state is taken to be
\[ \sqrt{\frac{3}{8}} \ket{10} \; + \; \sqrt{\frac{3}{8}} \ket{01} \; - \; \frac{1}{2} \ket{00} . \]
One can then calculate the probabilities to be
\[ p_{X_1 Y_2}(RR) \; = \; p_{X_2 Y_1}(RR) \; = \; p_{X_2 Y_2}(GG ) = 0, \]
and $p_{X_1 Y_1}(RR) = 0.09$, which is very near the maximum attainable value \cite{mermin1994quantum}. The possibilistic collapse of this model is thus a Hardy model.

\subsubsection*{KS}

For the Kochen-Specker argument, as previously explained in our discussion in \ref{KSsec}, the empirical model does not arise from a quantum system in the fashion we have been describing.
Rather, the elements $m_1, \ldots , m_{18}$ are taken to be unit vectors in $\Real^4$, chosen so that each quadruple in the table used in the proof forms an orthonormal basis. A suitable assignment  of vectors is given in \cite{cabello1996bell}.

Thus we see that there is indeed no conflict with the usual understanding of the behaviour of physical systems, and in particular that they satisfy No-Signalling.

\section{Probabilistic Models}
\label{probmodsec}

Given a finite system type $(M, O)$, let $p : M \times O \rarr [0, 1]$ be a probability distribution.
We can form a relational model $e$ as the \emph{possibilistic collapse} of $p$:
\[ e(\mb, \ob) \; \equiv \; p(\mb, \ob) > 0 . \]
Probabilistic models such as $p$ are the empirical models studied in \cite{brandenburger2008classification}. Similarly, the hidden-variable models studied there are probability distributions
\[ q : M \times O \times \Lambda \rarr [0, 1] . \]
We can form the possibilistic collapse of these to relational hidden-variable models $h \subseteq M \times O \times \Lambda$ in similar fashion.

The properties of models we have been studying in this paper are all relational analogues of properties of probabilistic models. So it is natural to ask: which properties of probabilistic models are inherited by their possibilistic collapses?

Firstly, some notation. Given a probability distribution $p$ and an incomplete list of arguments $s$, $p(s)$ is obtained as the marginal, \ie by summing over all extensions of $s$ to a full list of arguments. Similar conventions apply to conditional probabilities.
With this convention, we note that $p(s) > 0$ iff $e(s)\da$, where $e$ is the possibilistic collapse of $p$.

The standard definition of No-Signalling for probabilistic systems is as follows.
\begin{itemize}
\item Probabilistic No-Signalling (\PNS).

For all $m$, $o$, $\mb$, $\mb'$:
\[ p(m, \mb) > 0 \AND p(m, \mb') > 0 \IMP p(o | m, \mb) = p(o | m, \mb') . \]
\end{itemize}

\begin{proposition}
\label{posspreserveNSprop}
If $p$ satisfies \PNS, then its possibilistic collapse satisfies \NS.
\end{proposition}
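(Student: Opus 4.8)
The plan is to unwind the definitions on both sides and translate ``$>0$'' facts about marginals of $p$ into ``$\da$'' facts about its possibilistic collapse $e$. I would use throughout the observation recorded just before the statement: for any sublist of arguments $s$, $p(s) > 0$ iff $e(s)\da$. I would also unpack the conditional notation: for $o \in O_i$, $m \in M_i$, $\mb \in \Mim$, the quantity $p(o | m, \mb)$ is, by the marginalization convention, equal to $p(m, \mb, o)/p(m, \mb)$, where $p(m, \mb, o)$ sums over the remaining outcomes $\Oim$ and $p(m, \mb)$ sums over all of $O$.

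For the proof proper, fix $i \in \nn$, $m \in M_i$, $o \in O_i$, $\mb, \mb' \in \Mim$, and assume the hypotheses of \NS, namely $e(m, \mb, o)\da$ and $e(m, \mb')\da$. First I would note that $e(m, \mb, o)\da$ says $p(m, \mb, o) > 0$; since $p(m, \mb) \geq p(m, \mb, o) > 0$, the conditional $p(o | m, \mb)$ is defined and strictly positive. Similarly $e(m, \mb')\da$ gives $p(m, \mb') > 0$.

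Next I would invoke \PNS\ with this $m$, $o$ and these $\mb$, $\mb'$: from $p(m, \mb) > 0$ and $p(m, \mb') > 0$ it yields $p(o | m, \mb) = p(o | m, \mb')$. Since the left-hand side is positive, so is the right-hand side, whence $p(m, \mb', o) = p(o | m, \mb') \cdot p(m, \mb') > 0$, i.e.\ $e(m, \mb', o)\da$ --- which is the conclusion of \NS.

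The argument is entirely a diagram-chase through the definitions; the only place calling for any attention --- and the ``main obstacle'' only in a weak sense --- is lining up the marginalization conventions, i.e.\ checking that the $O_i$-marginal $p(m, \mb, o)$ is positive exactly when some joint outcome $(o, \ob)$ with $\ob \in \Oim$ is possible, and that \PNS\ as stated ranges over precisely the tuples in $\Mim$ we need. Beyond ``a finite sum of non-negative reals is positive iff some summand is'' no probabilistic reasoning is involved.
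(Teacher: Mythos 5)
Your proof is correct and follows essentially the same route as the paper's: translate the $\da$ hypotheses into positivity of the marginals $p(m,\mb)$ and $p(m,\mb')$, apply \PNS\ to equate the conditionals, and propagate strict positivity back to $e(m,\mb',o)\da$. The extra care you take with the marginalization conventions and the definedness of the conditional is fine but adds nothing beyond what the paper's one-line chain of implications already encodes.
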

\begin{proof}
Let $e$ be the possibilistic collapse of $p$. Suppose that $e(m, \mb, o)\da$ and $e(m, \mb')\da$.
Then $p(m, \mb) > 0$ and $p(m, \mb') > 0$, and by \PNS, $p(o | m, \mb) = p(o | m, \mb')$. Now 
\[ e(m, \mb, o)\da \IMP p(o | m, \mb) > 0 \IMP p(o | m, \mb') > 0 \IMP e(m, \mb', o)\da .  \qed \]
\end{proof}

\noindent We now consider properties of probabilistic hidden-variable models. These are the notions which have been studied in the literature, and which we have found relational analogues for in this paper.
We shall formulate these under the assumption that the set of values $\Lambda$ for the hidden variable is finite. This incurs no loss of generality, given that the measurement and outcome sets are finite, by virtue of Lemma~6.5 of \cite{BK}.

\begin{itemize}
\item Probabilistic $\lambda$-Independence (\PLI) \cite{dickson1999quantum}.

For all $\mb$, $\mb'$, $\lambda$:
\[ q(\mb) > 0 \AND q(\mb') > 0 \IMP q(\lambda | \mb) = q(\lambda | \mb') . \]

\item Probabilistic Outcome Independence (\POI) \cite{jarrett1984physical,shimony1986events}.

For all $\mb$, $o \in O_i$, $\ob \in O_i^-$, $\lambda$:
\[ q(\mb, \lambda) > 0 \IMP q(o |  \mb, \lambda) = q(o | \ob, \mb, \lambda) . \]

\item Probabilistic Parameter Independence (\PPI) \cite{jarrett1984physical,shimony1986events}.

For all $o$, $m \in M_i$, $\mb \in M_i^-$, $\lambda$:
\[ q(\mb, \lambda) > 0 \IMP q(o |m,  \mb, \lambda) = q(o | m, \lambda) . \]

\item Probabilistic Locality (\PLoc) \cite{bell1964einstein}.

For all $\mb$, $\ob$, $\lambda$:
\[ q(\mb, \lambda) > 0 \IMP \, q(\ob | \mb, \lambda) = \prod_{i=1}^n q(\ob_i | \mb_i , \lambda) . \]
\end{itemize}

\begin{proposition}
\label{Probpresprop}
For each property $\varphi \in \{ \mbox{\LI}, \mbox{\OI}, \mbox{\PI}, \mbox{\Loc} \}$:
if $q$ satisfies $\mathbf{P}\varphi$, then its possibilistic collapse satisfies $\varphi$.
\end{proposition}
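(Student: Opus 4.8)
The plan is to prove each of the four implications separately, in each case assuming the probabilistic property $\mathbf{P}\varphi$ for $q$ and deriving the relational property $\varphi$ for its possibilistic collapse $h$. The unifying principle is the elementary observation, already recorded just before the statement, that for a probability distribution $q$ and any partial list of arguments $s$, we have $q(s) > 0$ iff $h(s)\da$, where $h$ is the possibilistic collapse of $q$. So everywhere the relational condition asks for $h(\cdots)\da$, I will instead exhibit a strictly positive probability; and everywhere the probabilistic hypothesis hands me an equality of (conditional) probabilities, I will exploit the fact that positivity is preserved by such equalities.

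For \LI: suppose $h(\mb')\da$ and $h(\mb, \lambda)\da$. Then $q(\mb') > 0$ and $q(\mb, \lambda) > 0$; the latter gives $q(\mb) > 0$, so by \PLI\ $q(\lambda \mid \mb) = q(\lambda \mid \mb')$. Since $q(\mb, \lambda) > 0$ forces $q(\lambda \mid \mb) > 0$, we get $q(\lambda \mid \mb') > 0$, hence $q(\mb', \lambda) > 0$, i.e. $h(\mb', \lambda)\da$. For \PI: from $h(m, \mb, o, \lambda)\da$ and $h(m, \mb', \lambda)\da$ we have $q(m, \mb, o, \lambda) > 0$ and $q(m, \mb', \lambda) > 0$; applying \PPI\ twice (once at $(m,\mb)$, once at $(m,\mb')$, both legitimate since the conditioning events have positive probability) yields $q(o \mid m, \mb, \lambda) = q(o \mid m, \lambda) = q(o \mid m, \mb', \lambda)$, and positivity on the left propagates to the right, giving $h(m, \mb', o, \lambda)\da$. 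The \OI\ case is entirely analogous, using \POI\ to transfer positivity of $q(o \mid \mb, \lambda)$ between the different completions of the outcome tuple.

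For \Loc\ I would work with the equivalent reformulation: assume $h(\mb, \lambda)\da$ and $h(\mb_i, \ob_i, \lambda)\da$ for each $i$, and prove $h(\mb, \ob, \lambda)$. Translating, $q(\mb, \lambda) > 0$ and $q(\ob_i \mid \mb_i, \lambda) > 0$ for each $i$ — here I should be slightly careful that $q(\ob_i \mid \mb_i, \lambda)$ is the appropriate marginal conditional, which is positive precisely when $h(\mb_i, \ob_i, \lambda)\da$. Then \PLoc\ gives $q(\ob \mid \mb, \lambda) = \prod_i q(\ob_i \mid \mb_i, \lambda) > 0$, a product of positive factors, so $q(\mb, \ob, \lambda) > 0$, i.e. $h(\mb, \ob, \lambda)$.

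The only mildly delicate point — and the step I expect to require the most care rather than difficulty — is bookkeeping with marginal conditional probabilities: making sure that in each probabilistic axiom the conditioning event has positive probability (so the conditional is well-defined) and that the conditional probability appearing there is the same marginal quantity whose positivity corresponds to the relevant $\da$-fact on the relational side. This is routine but must be stated cleanly; none of the four cases presents a genuine obstacle once the positivity-transfer observation is in hand.
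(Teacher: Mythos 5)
Your proposal is correct and follows exactly the route the paper intends: the paper dispatches this proposition with the remark that it is ``proved similarly to Proposition~\ref{posspreserveNSprop}'', i.e.\ by the same positivity-transfer argument through equalities of conditional probabilities, using $q(s)>0 \IFF h(s)\da$. Your write-up simply makes explicit, case by case, what the paper leaves to the reader, including the correct handling of the marginal conditionals in the \Loc\ case.
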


\noindent This is proved similarly to Proposition~\ref{posspreserveNSprop}.
\subsection*{An Example}

We shall now show by an illustrative example how the results of the last few sections can be put together to lift No-Go theorems from the relational setting to apply to probabilistic models.

The following definition is taken from \cite{brandenburger2008classification}\footnote{The terminology `$q$ is equivalent to $p$' is used in \cite{brandenburger2008classification}.}. A probabilistic hidden variable model $q : M \times O \times \Lambda \rarr [0, 1]$ realizes a probabilistic empirical model $p : M \times O \rarr [0, 1]$  if for all $\mb$, $p(\mb) > 0 \IFF q(\mb) > 0$, and
\[ p(\mb) > 0 \IMP \forall \ob. \, p(\ob | \mb) = q(\ob | \mb) . \]

\begin{proposition}
\label{probrealprop}
If $q$ realizes $p$, then the possibilistic collapse of $q$ realizes the possibilistic collapse of $p$.
\end{proposition}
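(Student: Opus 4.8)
The plan is to unwind everything to a statement purely about the numbers $p(\mb,\ob)$ and $q(\mb,\ob,\lambda)$, and then read off the conclusion from the two clauses in the definition of probabilistic realization.

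Let $e$ be the possibilistic collapse of $p$ and $h$ the possibilistic collapse of $q$, so that $e(\mb,\ob) \equiv p(\mb,\ob) > 0$ and $h(\mb,\ob,\lambda) \equiv q(\mb,\ob,\lambda) > 0$. By the definition of relational realization, what we must show is
\[ \forall \mb, \ob. \, [\, p(\mb,\ob) > 0 \IFF \exists \lambda \in \Lambda. \, q(\mb,\ob,\lambda) > 0 \,]. \]
Since the values $q(\mb,\ob,\lambda)$ are non-negative and $\Lambda$ is finite, $\exists \lambda.\, q(\mb,\ob,\lambda) > 0$ holds exactly when the marginal $q(\mb,\ob) = \sum_{\lambda} q(\mb,\ob,\lambda)$ is positive. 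So it suffices to prove $p(\mb,\ob) > 0 \IFF q(\mb,\ob) > 0$ for all $\mb,\ob$.

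First I would split on whether $p(\mb) > 0$. If $p(\mb) = 0$, then by the first clause of probabilistic realization $q(\mb) = 0$, and since a marginal of a non-negative function vanishes only when every summand does, $p(\mb,\ob) = 0 = q(\mb,\ob)$ for every $\ob$; both sides of the biconditional are false and it holds. If $p(\mb) > 0$, then again by the first clause $q(\mb) > 0$, so the conditionals $p(\ob \mid \mb) = p(\mb,\ob)/p(\mb)$ and $q(\ob \mid \mb) = q(\mb,\ob)/q(\mb)$ are both defined, and the second clause gives $p(\ob \mid \mb) = q(\ob \mid \mb)$. Since $p(\mb) > 0$ and $q(\mb) > 0$, we obtain $p(\mb,\ob) > 0 \iff p(\ob \mid \mb) > 0 \iff q(\ob \mid \mb) > 0 \iff q(\mb,\ob) > 0$, as required.

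There is essentially no obstacle here; the only points requiring care are that conditional probabilities are defined only when the conditioning event has positive probability — which is exactly why the case split on $p(\mb) > 0$ (equivalently $q(\mb) > 0$) is made — and the elementary fact that a finite sum of non-negative reals is positive iff some summand is positive, which is what licenses passing between $q(\mb,\ob,\lambda)$ and its marginal $q(\mb,\ob)$.
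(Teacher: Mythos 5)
Your proof is correct; the paper in fact states Proposition~\ref{probrealprop} without any proof, treating it as immediate from the definitions, and your argument supplies exactly the routine verification that is being omitted, with the case split on $p(\mb)>0$ and the reduction of $\exists \lambda.\, q(\mb,\ob,\lambda)>0$ to positivity of the marginal handled correctly.
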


Next, we observe that the process of obtaining a relational model from a quantum system, as described in section~\ref{physmodsec}, naturally factors into two parts; obtaining a probabilistic model, and then applying the possibilistic collapse.

Now we can lift Propositions~\ref{GHZprop} and~\ref{hardyprop} to results about probabilistic models.
\begin{proposition}
There is a probabilistic empirical model $p$ which arises from a quantum system, and which is not realized by any probabilistic hidden-variable model satisfying \PLI\ and \PLoc.
\end{proposition}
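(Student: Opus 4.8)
The plan is to chain together the machinery already assembled in the paper, using the quantum realization of GHZ (Section~\ref{GHZqmsubsec}) as the concrete witness and then arguing by contradiction against probabilistic local hidden variables. First, I would take $p$ to be the probabilistic empirical model arising from the 3-qubit GHZ state with the $X/Y$ measurements described in Section~\ref{GHZqmsubsec}; by the computation given there (and the remark that the possibilistic collapse of these probabilities is a GHZ model), $p$ lies in the class of probabilistic models arising from a quantum system, and its possibilistic collapse $e$ is a GHZ model in the sense of the hypothesis of Proposition~\ref{GHZprop}. This handles the "arises from a quantum system" clause directly.

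Next, suppose for a contradiction that $p$ is realized (in the probabilistic sense defined just before Proposition~\ref{probrealprop}) by some probabilistic hidden-variable model $q$ satisfying \PLI\ and \PLoc. By the observation following Proposition~\ref{probrealprop} that the passage from a quantum system to a relational model factors through the probabilistic model and the possibilistic collapse, together with Proposition~\ref{probrealprop} itself, the possibilistic collapse $h$ of $q$ realizes $e$. By Proposition~\ref{Probpresprop}, since $q$ satisfies \PLI\ and \PLoc, its possibilistic collapse $h$ satisfies \LI\ and \Loc. But then $e$ is a GHZ model realized by a hidden-variable model satisfying \LI\ and \Loc, contradicting Proposition~\ref{GHZprop}. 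Hence no such $q$ exists, and $p$ is the required model.

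The only genuinely delicate point — and what I expect to be the main obstacle — is the bookkeeping around which notion of "realizes" is in play at each stage: Proposition~\ref{probrealprop} uses the probabilistic notion of realization (equality of conditional distributions where defined), whereas Proposition~\ref{GHZprop} uses the relational notion (the biconditional $e(\mb,\ob) \IFF \exists\lambda.\,h(\mb,\ob,\lambda)$). One must check that the possibilistic collapse genuinely converts the former into the latter, which is precisely the content of Proposition~\ref{probrealprop}, so this is already handled; the remaining care is just in confirming that $e$ as obtained really does meet the GHZ specification on $P \cup \{111\}$, which follows from the explicit probability distribution for $XYY$ and its permutation-symmetric analogues, together with the corresponding computation for $XXX = 111$. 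Everything else is a routine assembly of Propositions~\ref{Probpresprop}, \ref{probrealprop}, and~\ref{GHZprop}, exactly as the sentence "Now we can lift Propositions~\ref{GHZprop} and~\ref{hardyprop}" advertises. An entirely parallel argument using Proposition~\ref{hardyprop} and the two-qubit Hardy realization of Section~7 would give the analogous statement for the Hardy construction.
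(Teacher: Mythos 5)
Your proposal is correct and follows essentially the same route as the paper's own proof: take $p$ to be the probabilistic model arising from the GHZ quantum system, then combine Propositions~\ref{Probpresprop}, \ref{probrealprop} and~\ref{GHZprop} to contradict the existence of a realizing $q$ satisfying \PLI\ and \PLoc. The care you take over the two notions of realization is precisely what Proposition~\ref{probrealprop} is there to supply, so no gap remains.
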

\begin{proof}
We let $p$ be the probabilistic model arising from the GHZ system described in section~\ref{physmodsec}. Suppose  that $p$ is realized by a probabilistic hidden variable model $q$.
Now assume for a contradiction that $q$ satisfies \PLI\ and \PLoc. By Proposition~\ref{Probpresprop}, the possibilistic collapse of $q$ is a relational hidden-variable model $h$ satisfying \LI\  and \Loc, and by Proposition~\ref{probrealprop}, $h$ realizes the empirical relational model $e$ arising as the possibilistic collapse of $p$. Since $e$ meets the conditions of Proposition~\ref{GHZprop}, this yields the required contradiction.

Exactly the same argument can be made for Hardy models, using Proposition~\ref{hardyprop}.
\end{proof}

\section{Building Probabilistic Models from Relational Models}

We now turn to the question of some kind of converse result. Of course, the possibilistic collapse loses a great deal of information. For example, the probabilistic systems of arity 1 with one measurement and $n$ outcomes form the probability simplex $\Delta(n)$ on $n$ vertices. Their image under possibilistic collapse is the set of non-empty subsets of $\nn$. In general, if the possibilistic collapse of $p$ satisfies a property $\varphi$, it need not be the case that $p$ satisfies $\mathbf{P}\varphi$.

We can, however, ask a different question, which is important in relating the respective structure of classes of  probabilistic and relational models:
if we have a relational model $h$ satisfying some property  $\varphi$, can we find a probabilistic model $q$ whose possibilistic collapse is $h$, and which satisfies $\mathbf{P}\varphi$? In other terms, we already know that possibilistic collapse preserves properties of models, so its image on the class of probabilistic models satisfying $\mathbf{P}\varphi$ is included in the class of relational models satisfying $\varphi$. We are asking if it is \emph{surjective}.

We shall obtain strikingly different results depending on the property in question.
\begin{itemize}
\item For No-Signalling, we shall prove a negative result: there are relational models satisfying No-Signalling which do not arise from any probabilistic model satisfying Probabilistic No-Signalling.
This holds even for bipartite systems.

\item For local hidden-variable models, we shall obtain a positive result. Moreover, the constructions involved in showing this are conceptually interesting in their own right, and give a new twist to the old question of assigning probabilities to logical structures.
\end{itemize}

\subsection{No-Signalling: Negative Result}

\begin{proposition}
\label{nsignegprop}
There are empirical models $e$ satisfying \NS\ such that there is no probabilistic model $p$ satisfying \PNS\ whose possibilistic collapse is $e$.
\end{proposition}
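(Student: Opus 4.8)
The plan is to exhibit an explicit bipartite relational model $e$ satisfying \NS\ and then argue, by a combinatorial/linear-algebraic impossibility, that no probabilistic model $p$ satisfying \PNS\ can have $e$ as its possibilistic collapse. For the construction I would reuse the combinatorial heart of the Hardy/GHZ-style analysis but in the reverse direction: take $M_1 = \{X_1, X_2\}$, $M_2 = \{Y_1, Y_2\}$, $O_1 = O_2 = \{R, G\}$, and specify $e$ so that it is total and its support at each of the four measurement contexts is a carefully chosen subset of $\{R,G\}^2$. The \NS\ conditions only constrain which single-outcome marginals $e(m,\mb',o)\da$ are possible, so one has a lot of freedom; the trick is to make the \emph{joint} supports incompatible with any \PNS\ probability assignment even though the \emph{possibilistic} marginals are all consistent. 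A natural candidate is a ``possibilistic PR-box''-like pattern: at three of the contexts the support is a single pair of anti-diagonal/diagonal outcomes and at the fourth it is the complementary pair, arranged so that the supports cannot be obtained as the zero-sets of a \PNS\ family. (The paper later builds a relational PR box in the hierarchy section, so the same model is likely intended here.)

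The key steps, in order, would be: (1) write down the table for $e$ explicitly and check totality; (2) verify that $e$ satisfies \NS\ directly from the definition — since \NS\ for relational models only says that the set of \emph{possible} local outcomes for $m$ at site $i$ is the same across all completions $\mb'$ with $e(m,\mb')\da$, this reduces to checking that each local outcome appearing somewhere for a fixed $X_1$ (say) appears for every $Y_j$, and similarly at the other site; (3) assume for contradiction that $p$ is a \PNS\ distribution with possibilistic collapse $e$, so that $p(\mb,\ob) > 0 \IFF e(\mb,\ob)$; (4) use \PNS\ to deduce that the conditional distributions $p(\cdot \mid \mb)$ have well-defined marginals $p(o \mid m)$ independent of the other measurement, and note that $p(o\mid m) > 0 \IFF $ ``$o$ is in the local support at $m$'' (which is forced by the collapse); (5) derive a numerical contradiction by tracking one outcome's probability around the four contexts — exactly as in the Hardy argument, the chain of positivity/zero constraints forces some strictly positive probability to equal zero, or forces two outcomes that must both be positive to have probabilities summing to more than one within a single context.

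I expect the main obstacle to be step (5): making the numerical contradiction \emph{genuinely use} probabilities rather than just possibilities. The subtlety is that $e$ itself, being a legitimate \NS\ relational model, has \emph{some} consistent possibilistic structure — what fails is the existence of a real-valued, properly normalized, no-signalling refinement. So the contradiction must come from the quantitative content of \PNS: the marginal $p(o\mid m)$ is \emph{the same real number} in two different contexts, and in one context it is squeezed to be small (because the only outcome pair with that first coordinate has the other coordinate also constrained), while in the other context the same marginal is forced to be large. Concretely I would pick the pattern so that, in context $(X_1,Y_1)$, $R$ at Alice forces $R$ at Bob, in $(X_1,Y_2)$ it forces $G$ at Bob, in $(X_2,Y_1)$ it forces $G$ at Bob, and in $(X_2,Y_2)$ the outcome $(G,G)$ is forbidden — the \emph{same} forbidden/forced skeleton as the Hardy table, but now with the complementary supports \emph{filled in} to make $e$ total, which is what lets $e$ itself satisfy \NS\ while still obstructing \PNS. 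Writing out the arithmetic — four equations from \PNS\ plus the support constraints, leading to $p(R\mid X_1) = 0$ contradicting $p(X_1Y_1, RR) > 0$ — is the calculation I would actually carry out, and I would double-check that no \PNS\ distribution sneaks through by having its zeros in exactly the right places, which may require fixing $O_1, O_2$ to be slightly larger than two-element (e.g. three outcomes at one site) to rigidify the supports.
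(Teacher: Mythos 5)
There is a genuine gap, and it lies exactly where the whole content of the proposition lives: the choice of the support pattern. Both candidates you float are in fact \emph{in} $\NSp$, so neither can serve as the counterexample. The ``possibilistic PR-box'' pattern (three contexts supported on the diagonal pair, one on the anti-diagonal pair) is the possibilistic collapse of the PR box itself, which is a perfectly good \PNS\ distribution --- the paper uses precisely this fact elsewhere to show that $\QM \subset \NSp$ is strict. And the ``filled-in Hardy'' table (the three forbidden $RR/GG$ cells set to $0$, everything else set to $1$) is the possibilistic collapse of the quantum Hardy realization, which satisfies \PNS\ by the no-communication theorem; your step (5) cannot go through for it, because $p(R \mid X_1, Y_2)$ is carried by the allowed cell $(R,G)$ and so stays strictly positive --- the Hardy contradiction genuinely needs the hidden-variable/locality apparatus, not just \PNS. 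So the arithmetic you defer to the end would, if carried out, reveal that a \PNS\ distribution does ``sneak through'' in both cases.

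The paper's counterexample has a different and essential combinatorial feature: each of the four contexts forbids \emph{exactly one} joint outcome (so every local outcome remains possible in every context and relational \NS\ holds trivially), and the forbidden cells are placed so that each \PNS\ marginal equation equates a single positive entry on one side with a \emph{sum of two} positive entries on the other. Concretely, with zeros at $(a_2,b_1)$ in rows $(X_1,Y_1)$ and $(X_2,Y_2)$ and at $(a_1,b_2)$ in rows $(X_1,Y_2)$ and $(X_2,Y_1)$, the equations chain into $c < c+d = f < f+g = l < l+m = i < i+j = c$, a strict cyclic inequality among positive reals. The lesson is that the obstruction is not a forced zero (as in Hardy) but a forced strict increase around a cycle; to find such a pattern you need supports that are ``almost full'' rather than the sparse PR-box or Hardy supports.
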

\begin{proof}
We shall give an explicit counter-example. Consider the following bipartite system, with
\[ M_1 = \{ X_1, X_2 \}, \quad M_2 = \{ Y_1, Y_2 \}, \quad O_1 = \{ a_1, a_2 \}, \quad O_2 = \{ b_1, b_2 \} . \]
We tabulate $e$ as a $4 \times 4$ boolean matrix:
\begin{center}
\begin{tabular}{c|cccc} 
 & $(a_1, b_1)$ & $(a_1, b_2)$ & $(a_2, b_1)$ & $(a_2, b_2)$ \\ \hline
$(X_1, Y_1)$ &  $1$ & $1$ & $0$ & $1$ \\
$(X_1, Y_2)$ &  $1$ & $0$ & $1$ & $1$ \\
$(X_2, Y_1)$ &  $1$ & $0$ & $1$ & $1$ \\
$(X_2, Y_2)$ &  $1$ & $1$ & $0$ & $1$ \\
\end{tabular}
\end{center}
One can check by calculation that $e$ satisfies \NS. 
Now assume for a contradiction that there is a probabilistic model satisfying \PNS\ whose possibilistic collapse is $e$. Such a model has the form
\begin{center}
\begin{tabular}{c|cccc} 
 & $(a_1, b_1)$ & $(a_1, b_2)$ & $(a_2, b_1)$ & $(a_2, b_2)$ \\ \hline
$(X_1, Y_1)$ &  $c$ & $d$ & $0$ & $e$ \\
$(X_1, Y_2)$ &  $f$ & $0$ & $g$ & $h$ \\
$(X_2, Y_1)$ &  $i$ & $0$ & $j$ & $k$ \\
$(X_2, Y_2)$ &  $l$ & $m$ & $0$ & $n$ \\
\end{tabular}
\end{center}
where $c, \ldots , n$ are positive real numbers. Applying the equations arising from \PNS\, we obtain:
\[ \begin{array}{lcl}
c+d & = & f \\
e & = & g + h \\
i & = & l + m \\
j+k & = & n \\
c & = & i + j \\
d + e & = & k \\
f + g & = & l \\
h & = & m + n
\end{array}
\]
Since these are positive numbers, this implies
\[  c < f < l < i < c \]
and hence $c < c$, yielding the required contradiction.

As another example, we give a somewhat more compact
tripartite system, with
$M_1 = \{ X \}$, $M_2 = \{Y \}$, $M_3 = \{Z_1, Z_2 \}$, $O_1 = \{ a_1, a_2 \}$, $O_2 = \{ b_1 , b_2 \}$, $O_3 = \{ c \}$.
We tabulate the relation  as follows:
\begin{center}
\begin{tabular}{c|cccc}
&  $(a_1, b_1, c)$ & $(a_1, b_2, c)$ & $(a_2, b_1, c)$ & $(a_2, b_2, c)$   \\ \hline
$(X, Y, Z_1)$ & $1$ & $1$ & 0 & $1$  \\
$(X, Y, Z_2)$ & $1$ & 0 & $1$ & $1$  \\
\end{tabular}
\end{center}
Similar reasoning can be applied to this example.
\end{proof}

\subsection{Locality: Positive Results}

\subsection*{Measurement Locality}

We will need to consider one additional property of models, which as far as we know has not been discussed previously in the literature. In the probabilistic format, following \cite{brandenburger2008classification}, we are using \emph{joint distributions}, on measurements as well as outcomes (and hidden parameters in the case of hidden-variable models). This means that probabilities $p(\mb)$ are being assigned to measurements themselves; whereas in most accounts, one only considers a family of conditional probabilities $p(\ob | \mb)$, for outcomes conditioned on measurements. Of course, the joint distribution does determine these conditional probabilities, and can be considered more general. However, some issues of interpretation arise; see \cite{brandenburger2008classification} for a discussion.

In particular, when we consider locality conditions, it seems natural to suppose that which measurements may be selected at site $i$ and lead to an outcome should be independent of the measurement choices made at other sites. This is expressed in probabilistic form by the following property:
\begin{itemize}
\item Probabilistic Measurement Locality (\PML):
\[ \forall \mb \in M. \, p(\mb) = \prod_{i=1}^{n} p(\mb_i) . \]
\end{itemize}

The relational form is:
\begin{itemize}
\item Measurement Locality (\ML):
\[ \forall \mb \in M. \, e(\mb)\da \IFF \bigwedge_{i=1}^{n} e(\mb_i)\da . \]
\end{itemize}

For a hidden-variable model, the condition is:
\[ \forall \mb \in M. \, h(\mb, \lambda)\da \IFF \bigwedge_{i=1}^{n} h(\mb_i, \lambda)\da . \]
It is easy to see that possibilistic collapse of a system satisfying \PML\ results in a system satisfying \ML.

Note that, given a family of  probability distributions $p_{\mb}$ on $O$, we can always pass to a joint distribution on $M \times O$, by setting
\[ p(\mb, \ob) = \frac{p_{\mb}(\ob)}{N} \]
where $N = \card{M}$. Of course, we then recover the probabilities $p_{\mb}(\ob) = p(\ob|\mb)$ we started with from this assignment; while $p(\mb) = 1/N$. Moreover, this joint distribution will satisfy $\PML$.

\subsection*{The Construction}
Given  a relational hidden-variable model $h \subseteq M \times O \times \lambda$ satisfying \LI, the probabilistic hidden-variable model $q^h$ is defined by:
\[ q^{h}(\mb, \ob, \lambda) \; =\; \left\{ \begin{array}{ll}
1/ W_{\mb, \lambda} & \quad h(\mb)\da \\
0 & \quad \mbox{otherwise,}
\end{array}  \right.
\]
where 
\[ L = \card{\{ \lambda \in \Lambda \mid h(\lambda)\da \}}, \qquad N = \card{\{ \mb \mid h(\mb)\da \}}, \]\[  \Klm = \card{\{ \ob \mid h(\mb, \ob, \lambda) \} } , \quad W_{\mb, \lambda} = \Klm L N . \]
It is clear that the possibilistic collapse of $q^h$ is $h$.
Also, $q^h$ is a probability distribution: since $h$ satisfies \LI, $h(\lambda)\da$ and $h(\mb)\da$ implies $h(\mb, \lambda)\da$, and hence
\[ \sum_{\lambda, \mb, \ob} q^h(\mb, \ob, \lambda) \;\;  = \;\; LN(\Klm / \Wlm) \;\; = \;\; 1. \]

\begin{proposition}
\label{relprobprop}
Let $h \subseteq M \times O \times \Lambda$ be a relational hidden-variable model satisfying  \ML, \LI, and \Loc. Then $q^h$ satisfies \PML, \PLI , and \PLoc.
\end{proposition}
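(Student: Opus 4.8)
The plan is to reduce the statement to explicit closed forms for $q^{h}$ and for all the marginals occurring in \PML, \PLI, and \PLoc, and then to verify each property by direct computation; the only real input beyond arithmetic is how \LI, \ML, and \Loc control the supports and the cardinalities involved. I would start from the basic consequences of \LI: since $h(\lambda)\da$ means $h(\mb_{0},\lambda)\da$ for some tuple $\mb_{0}$, \LI gives $h(\mb)\da \AND h(\lambda)\da \IMP h(\mb,\lambda)\da$, the converse being immediate, so the support of $q^{h}$ in the $(\mb,\lambda)$ coordinates is exactly $\{ (\mb,\lambda) \mid h(\mb)\da \AND h(\lambda)\da \}$. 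Writing $N = \card{\{ \mb \mid h(\mb)\da \}}$ and $L = \card{\{ \lambda \mid h(\lambda)\da \}}$, summing $\ob$ out of $q^{h}$ (on the support, $\sum_{\ob} q^{h}(\mb,\ob,\lambda) = \Klm/\Wlm = 1/(LN)$) gives $q^{h}(\mb,\lambda) = 1/(LN)$ on that support and $0$ off it, whence $q^{h}(\mb) = 1/N$ when $h(\mb)\da$ and $q^{h}(\lambda) = 1/L$ when $h(\lambda)\da$. I would also record, exactly as in the derivation of (\ref{OLoceq}), that under the hypothesis $h(\mb,\lambda)\da$ one has $h(\mb,\ob,\lambda) \siff \bigwedge_{i} h(\mb_{i},\ob_{i},\lambda)\da$, and therefore $\Klm = \prod_{i} K^{i}_{\mb_{i},\lambda}$, where $K^{i}_{m,\lambda} \eqdef \card{\{ o \in O_{i} \mid h(m,o,\lambda)\da \}}$.

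For \PML I would first upgrade the hidden-variable form of \ML to the empirical-style equivalence $h(\mb)\da \siff \bigwedge_{i} h(\mb_{i})\da$: the forward direction is \ML applied to a witnessing $\lambda$, and the backward direction uses \LI to produce a single $\lambda$ with $h(\mb_{i},\lambda)\da$ for every $i$ simultaneously, followed by \ML again. Hence $\{ \mb \mid h(\mb)\da \} = \prod_{i} \{ m \in M_{i} \mid h(m)\da \}$, so $N = \prod_{i} N_{i}$ with $N_{i} = \card{\{ m \in M_{i} \mid h(m)\da \}}$, and counting gives $q^{h}(\mb_{i}) = 1/N_{i}$ when $h(\mb_{i})\da$ and $0$ otherwise; then \PML is immediate, since for $h(\mb)\da$ we have $q^{h}(\mb) = 1/N = \prod_{i} 1/N_{i} = \prod_{i} q^{h}(\mb_{i})$, and both sides vanish when $\neg h(\mb)\da$.

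For \PLI, whenever $q^{h}(\mb) > 0$ we get $q^{h}(\lambda \mid \mb) = q^{h}(\mb,\lambda)/q^{h}(\mb)$, which equals $1/L$ if $h(\lambda)\da$ and $0$ otherwise — a value independent of $\mb$ — so it agrees for any $\mb,\mb'$ with $q^{h}(\mb), q^{h}(\mb') > 0$. For \PLoc, assume $q^{h}(\mb,\lambda) > 0$; dividing gives $q^{h}(\ob \mid \mb,\lambda) = LN \cdot q^{h}(\mb,\ob,\lambda)$, which is $1/\Klm$ when $h(\mb,\ob,\lambda)$ and $0$ otherwise. On the other side I would compute $q^{h}(\mb_{i},\lambda) = N^{(i)}/(LN)$ and $q^{h}(\mb_{i},\ob_{i},\lambda) = N^{(i)}/(K^{i}_{\mb_{i},\lambda}LN)$ when $h(\mb_{i},\ob_{i},\lambda)\da$ (and $0$ otherwise), with $N^{(i)} = \card{\{ \mb' \mid \mb'_{i} = \mb_{i} \AND h(\mb')\da \}} \geq 1$; the step to watch is that, using the \Loc product formula for $\Klm$, the inner sum over the $\ob'$ with $\ob'_{i} = \ob_{i}$ collapses to the $\mb'$-independent value $1/(K^{i}_{\mb_{i},\lambda}LN)$. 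Hence $q^{h}(\ob_{i} \mid \mb_{i},\lambda) = 1/K^{i}_{\mb_{i},\lambda}$ (or $0$), the factor $N^{(i)}$ cancelling, and multiplying over $i$ and using $\Klm = \prod_{i} K^{i}_{\mb_{i},\lambda}$ together with $h(\mb,\ob,\lambda) \siff \bigwedge_{i} h(\mb_{i},\ob_{i},\lambda)\da$ gives $\prod_{i} q^{h}(\ob_{i} \mid \mb_{i},\lambda) = q^{h}(\ob \mid \mb,\lambda)$ in every case.

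The computations are all elementary, so the main obstacle is not any single identity but the bookkeeping of supports and cardinalities and keeping straight which hypothesis is doing the work at each point. The one substantive observation is that \ML is precisely what promotes the inclusion $\Mp \subseteq \prod_{i} \Mip$ (flagged after (\ref{OLoceq})) to an equality, which is what makes the measurement marginals factor and hence forces \PML; by contrast \PLI uses only \LI, and \PLoc only \LI and \Loc, because there the awkward factor $N^{(i)}$ cancels under conditionalization.
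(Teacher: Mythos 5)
Your proposal is correct and follows essentially the same route as the paper: compute all marginals of $q^{h}$ explicitly from the counting quantities $L$, $N$, $N^{i}$, $\Klm$, $K^{i}_{m,\lambda}$, using \ML{} with \LI{} to factor $N$ and \Loc{} to factor $\Klm$, and then verify each probabilistic property by direct division. The only (welcome) differences are cosmetic: you spell out the \PML{} verification, which the paper leaves implicit after deriving $N = \prod_{i} N^{i}$, and you keep the factor $N^{(i)}$ abstract and let it cancel rather than evaluating it as $N/N^{i}$.
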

\begin{proof}
For each $i \in \nn$, we define:
\[ \begin{array}{lcl}
N^i & = & \card{\{ m \in M_i  \mid h(m)\da \}} \\
K^{i}_{m, \lambda} & = & \card{\{ o \in O_i \mid h(m, o, \lambda) \da \} } .
\end{array}
\]
Since by assumption, $h$ satisfies \ML, 
\[ h(\mb, \lambda)\da \IFF \bigwedge_{i \in \nn} h(\mb_i, \lambda)\da ,\]
and since $h$ satisfies \LI,
\[ h(\mb)\da \IFF  \bigwedge_{i \in \nn} h(\mb_i)\da . \]
Similarly, since $h$ satisfies \Loc, we have
\begin{equation}
\label{Klmeq}
\forall \mb, \ob, \lambda. \, [ h(\mb, \ob, \lambda) \IFF \bigwedge_{i \in \nn} h(\mb, \ob_i, \lambda)\da ]. 
\end{equation}

\noindent Hence we have
\begin{equation}
\label{locdefseqn}
N = \prod_{i \in \nn} N^i , \qquad \Klm = \prod_{i \in \nn} \Klim  . 
\end{equation}

\noindent So the `logical' independence conditions \LI\ and \Loc\  imply that these quantities can be computed locally from the numbers $N^i$, $\Klim$.

\noindent Firstly, we consider \PLI. Suppose that $q^h(\mb) > 0$ and $q^h(\mb') > 0$. Then $h(\mb)\da$ and $h(\mb')\da$. We evaluate $q^h(\lambda | \mb)$:
\[ q^h(\lambda | \mb) \;\; = \;\; \frac{\sum_{\ob} q^h(\mb, \ob, \lambda)}{\sum_{\ob, \lambda'} q^h(\mb, \ob, \lambda')} . \]
The numerator evaluates to $\Klm / \Wlm = 1/LN$; while the denominator evaluates to
\[ \sum_{\lambda' \in \Lambda} \frac{K_{\mb, \lambda'}}{W_{\mb, \lambda'}} \;\; = \;\; \frac{L}{LN} \;\; = \;\; \frac{1}{N} . \]
Hence $q^h(\lambda | m) = 1/L$, which is independent of $\mb$.

\noindent Now we prove \PLoc. Suppose that $q^h(\mb, \lambda) > 0$, which implies $h(\mb, \lambda)\da$. Firstly, we calculate:
\[ q^h(\ob | \mb, \lambda)  \;\; = \;\;  \frac{q^h(\mb, \ob, \lambda)}{\sum_{\ob'} q^h(\mb, \ob', \lambda)}  \;\; = \;\;  \frac{\Wlm}{\Klm \Wlm}   \;\; = \;\;  \frac{1}{\Klm} . \]
So by (\ref{locdefseqn}), it suffices to show, for each $i \in \nn$, $m \in M_i$ with $h(m)\da$, $o \in O_i$:
\[ q^h(o|m, \lambda) \;\; = \;\;  \frac{1}{K^{i}_{m, \lambda}} . \]
We have:
\[ q^h(o|m, \lambda) \;\; = \;\; \frac{\sum_{\ob, \mb} q^h(m, \mb, o, \ob, \lambda)}{\sum_{o', \ob, \mb} q^h(m, \mb, o', \ob, \lambda)} . \]
Evaluating the numerator, we obtain:
\[ \sum_{\mb} \frac{K^{-i}_{m, \mb, \lambda}}{W_{m, \mb, \lambda}} \;\; = \;\;  \frac{1}{K^{i}_{m, \lambda} L N^{i}} , \]
where 
\[ K^{-i}_{m, \mb, \lambda}  \;\; = \;\;  \prod_{j \neq i} K^{j}_{m,\mb_{j}, \lambda} . \]

\noindent For the denominator, we have
\[ \sum_{\mb} \sum_{o', \ob} \frac{1}{W_{m, \mb, o', \ob, \lambda}}  \;\; = \;\;  \sum_{\mb} \frac{K_{m,\mb, \lambda}}{K_{m,\mb, \lambda}LN}  \;\; = \;\;  \sum_{\mb}  \frac{1}{LN} \;\; = \;\; \frac{N^{-i}}{LN} \;\; = \;\; \frac{1}{LN^{i}} . \]
Dividing through, we obtain $1/K^{i}_{m, \lambda}$, and the proof is complete.
\end{proof}


\section{Maximum Entropy Characterization}
The construction of probabilistic models from relational ones 
which we have described in the previous section 
is clearly canonical in some sense --- but which? 
We now turn to the task of answering this question.

Clearly, it is some form of maximum entropy construction, where probability is shared as evenly as possible, subject to respecting the structure of the locality conditions. It is, however, too naive to expect the joint probability distribution $p : M \times O \rarr [0, 1]$ itself to have maximum entropy among all such with the same possibilistic collapse. The distribution with maximum entropy among all those with a given possibilistic collapse $e$ will simply be the uniform distribution on the support 
\[ \{ (\mb, \ob) \in M \times O \mid e(\mb, \ob) \} . \]
This makes no distinction between measurements and outcomes, and does not respect the local structure. 

Thus we must find a more subtle characterization. To do this, we firstly need to sharpen our discussion of probabilistic models.
The following elementary observation will be useful.
\begin{proposition}
\label{jointdecompprop}
Let $X$, $Y$ be finite sets. Given a probability distribution $p : X \times Y \rarr [0, 1]$, we can form a 
probability distribution $\tp$ on $X$ by marginalization:
\[ \tp(x) = \sum_{y \in Y} p(x, y) . \]
We can also form a probability distribution $p_x : y \mapsto p(x,y)/\tp(x)$ on $Y$  for each $x \in X$ such that $\tp(x) > 0$. In more familiar terms, $p_x$ is the marginalization of the conditional probability $p(\cdot|x)$. Conversely, given a probability distribution $\tp$ on $X$, and a family $\{ p_{x} \}_{x \in X'}$ of probability distributions on $Y$, where $X' = \{x \in X \mid \tp(x) > 0 \}$,  we can form a joint distribution $p'$ on $X \times Y$ by:
\begin{equation}
\label{margeq}
p'(x, y) = \left\{ \begin{array}{ll}
\tp(x) p_{x}(y), & \tp(x) > 0, \\
0, & \mbox{\textrm{otherwise}.} 
\end{array}
\right. 
\end{equation}
These passages are mutually inverse.
\end{proposition}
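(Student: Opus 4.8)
The plan is to verify the three assertions in turn — that the forward passage $p \mapsto (\tp, \{p_x\})$ is well-defined, that the backward passage $(\tp, \{p_x\}) \mapsto p'$ is well-defined, and that the two are mutually inverse — each by a short direct computation. There is nothing deep here; it is elementary bookkeeping about marginals and conditionals, and the only discipline required is to keep the support set $X' = \{x \in X \mid \tp(x) > 0\}$ straight and to handle the zero-marginal case without dividing by zero.

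First I would check that the forward passage lands where claimed. Non-negativity of $\tp$ is immediate, and $\sum_{x} \tp(x) = \sum_{x}\sum_{y} p(x,y) = 1$, so $\tp$ is a distribution on $X$. For $x$ with $\tp(x) > 0$, each $p_x(y) = p(x,y)/\tp(x) \geq 0$ and $\sum_{y} p_x(y) = \tp(x)^{-1}\sum_{y} p(x,y) = 1$, so $p_x$ is a distribution on $Y$. For the backward passage, non-negativity of $p'$ is clear from the defining formula (\ref{margeq}), and
\[ \sum_{x,y} p'(x,y) = \sum_{x \in X'} \tp(x) \sum_{y} p_x(y) = \sum_{x \in X'} \tp(x) = \sum_{x \in X} \tp(x) = 1, \]
using that $\tp(x) = 0$ for $x \notin X'$.

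Then I would check the two round trips. Starting from $p$, form $\tp$ and the family $\{p_x\}$, then reconstruct $p'$ via (\ref{margeq}): for $x$ with $\tp(x) > 0$ we get $p'(x,y) = \tp(x)\cdot(p(x,y)/\tp(x)) = p(x,y)$; for $x$ with $\tp(x) = 0$ we have $\sum_y p(x,y) = 0$ with all summands non-negative, hence $p(x,y) = 0 = p'(x,y)$; so $p' = p$. Conversely, starting from a pair $(\tp, \{p_x\}_{x \in X'})$ and forming $p'$ by (\ref{margeq}), its $X$-marginal $\sum_y p'(x,y)$ equals $\tp(x)$ for every $x$ — trivially $0 = 0$ when $x \notin X'$, and $\tp(x)\sum_y p_x(y) = \tp(x)$ when $x \in X'$ — so the marginal, and hence its support $X'$, is recovered exactly; and for $x \in X'$ the associated conditional distribution is $y \mapsto p'(x,y)/\tp(x) = p_x(y)$. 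Hence the two passages are mutually inverse.

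I do not expect any genuine obstacle. The one place to be slightly careful is the equality $p' = p$ off the support of the marginal: it must be argued from non-negativity (a sum of non-negative reals vanishing forces each summand to vanish) rather than by manipulating the undefined ratio $p(x,y)/\tp(x)$, and correspondingly the family $\{p_x\}$ is only ever indexed by $X'$, never by all of $X$.
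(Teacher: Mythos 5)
Your proposal is correct and follows essentially the same route as the paper's own proof: verify both round trips by direct computation, taking care that the case $\tp(x)=0$ forces $p(x,y)=0$ by non-negativity rather than by dividing by zero. The only addition is your explicit check that the forward and backward passages produce genuine probability distributions, which the paper leaves implicit.
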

\begin{proof}
If we start with $p$, pass to $\tp$ and $\{ p_x \}$, and then form a joint distribution $p'$ from these by~(\ref{margeq}), then firstly
\[ \tp(x) > 0 \IFF \sum_y p(x, y) > 0 . \]
Thus if $\tp(x) = 0$, then $p(x, y) = 0$. If $\tp(x) > 0$, then
\[ p'(x, y) = \tp(x)p_{x}(y)  = \frac{\tp(x) p(x, y)}{\tp(x)} = p(x, y) . \]
Thus $p' = p$.

Conversely, if we start with $\tp$ and $\{ p_x \}$, pass to $p'$, and then to $\theta_{p'}$ and $\{ p'_x \}$, then
\[ \theta_{p'}(x) = \sum_y p'(x, y) = \sum_y  \tp(x) p_x(y) = \tp(x) . \]
If $\tp(x) > 0$, then
\[ p'_{x}(y)  = \frac{p'(x,y)}{\tp(x)} = \frac{\tp(x)p_x(y)}{\tp(x)} = p_x(y) . \]
\end{proof}

\noindent In our context, we take $X = M$ and $Y = O$ for probabilistic empirical models, and $X = M \times \Lambda$, $Y = O$ for hidden-variable models.
We refer to the distribution $\tp$ as the \emph{measurement prior}. It can be understood as an initial probability on a given combination of measurements being performed. Whether this arises from some experimental protocol involving randomization, from ignorance, or from some other source, is not specified.

We recall the standard definition of the \emph{entropy} of a probability distribution $p : Z \rarr [0, 1]$ on a finite set $Z$:
\[ H(p) = - \sum_z p(z) \log p(z) . \]
The following lemma gives a useful  formula for the entropy of a joint distribution following the decomposition of Proposition~\ref{jointdecompprop}.
\begin{lemma}
Let  $p : X \times Y \rarr [0, 1]$ be a probability distribution, $X$, $Y$ finite. Then
\[ H(p) = H(\tp) \;+\; \sum_{\tp(x) > 0} \tp(x) H(p_x) . \]
\end{lemma}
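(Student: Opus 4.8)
The plan is to expand the definition of $H(p)$ directly and regroup the sum according to the decomposition $p(x,y) = \tp(x) p_x(y)$ from Proposition~\ref{jointdecompprop}. First I would write
\[ H(p) = - \sum_{x \in X, y \in Y} p(x,y) \log p(x,y) = - \sum_{\tp(x) > 0} \sum_{y \in Y} \tp(x) p_x(y) \log \bigl( \tp(x) p_x(y) \bigr), \]
discarding the terms with $\tp(x) = 0$, since for such $x$ we have $p(x,y) = 0$ for all $y$ and the convention $0 \log 0 = 0$ applies.

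Next I would use $\log(\tp(x) p_x(y)) = \log \tp(x) + \log p_x(y)$ to split the double sum into two pieces. The first piece is
\[ - \sum_{\tp(x) > 0} \sum_{y \in Y} \tp(x) p_x(y) \log \tp(x) = - \sum_{\tp(x) > 0} \tp(x) \log \tp(x) \sum_{y \in Y} p_x(y) = - \sum_{\tp(x) > 0} \tp(x) \log \tp(x) = H(\tp), \]
using that $p_x$ is a probability distribution on $Y$, so $\sum_y p_x(y) = 1$, and again that the $\tp(x) = 0$ terms contribute nothing to $H(\tp)$. The second piece is
\[ - \sum_{\tp(x) > 0} \sum_{y \in Y} \tp(x) p_x(y) \log p_x(y) = \sum_{\tp(x) > 0} \tp(x) \left( - \sum_{y \in Y} p_x(y) \log p_x(y) \right) = \sum_{\tp(x) > 0} \tp(x) H(p_x), \]
which gives exactly the claimed identity when combined with the first piece.

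There is no real obstacle here; the only points requiring a word of care are the handling of the $\tp(x) = 0$ terms (dealt with by the standard $0 \log 0 = 0$ convention together with the observation that $\tp(x) = 0$ forces $p(x,y) = 0$) and the justification that the finite double sum may be freely reorganized, which is immediate since all sets in sight are finite. So the proof is a short computation: expand, apply the logarithm product rule, and collect terms.
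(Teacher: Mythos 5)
Your proof is correct and follows essentially the same route as the paper's: expand $H(p)$ over the support of $\tp$, factor $p(x,y) = \tp(x)p_x(y)$, split the logarithm, and use $\sum_y p_x(y) = 1$ to collect the two pieces into $H(\tp)$ and $\sum_{\tp(x)>0}\tp(x)H(p_x)$. Your explicit handling of the $\tp(x)=0$ terms via the $0\log 0 = 0$ convention is a small point of care the paper leaves implicit, but the argument is the same.
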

\begin{proof}
By calculation:
\begin{eqnarray*}
H(p) & = & {-} \sum_{\tp(x)>0, y} p(x,y) \log p(x,y) \\
& = &  {-} \sum_{\tp(x)>0, y} \tp(x) p_{x}(y) \log \tp(x) p_{x}(y) \\
& = & {-} \sum_{\tp(x)>0, y} \tp(x) p_{x}(y) (\log \tp(x)  + \log p_{x}(y)) \\
& = & {-} \sum_{\tp(x)>0, y} \tp(x) p_{x}(y) \log \tp(x) \; + \;  {-} \sum_{\tp(x)>0, y} \tp(x) p_{x}(y)\log p_{x}(y) \\
& = & {-} \sum_{\tp(x)>0} (\sum_y p_{x}(y))  \tp(x) \log \tp(x)  \; + \;  {-} \sum_{\tp(x)>0, y} \tp(x) p_{x}(y)\log p_{x}(y) \\
& = &  {-} \sum_{\tp(x)>0} \tp(x) \log \tp(x) \; + \;   \sum_{\tp(x)>0} \tp(x) ({-} \sum_{y} p_{x}(y)\log p_{x}(y)) \\
& = &  H(\tp) \; + \; \sum_{\tp(x) > 0} \tp(x) H(p_x) . 
\end{eqnarray*}
\end{proof}

\begin{proposition}[Maximum Entropy Characterization]
Given a hidden-variable model $h$ satisfying \LI, let $C$ be the set of probabilistic models whose possibilistic collapse is $h$.  For all $q \in C$:
\[ H(\thq) \geq H(\tq) \;\; \mbox{and} \;\; [h(\mb, \lambda)\da \IMP H(q^{h}_{\mb, \lambda}) \geq H(q_{\mb, \lambda})  ]. \]
\end{proposition}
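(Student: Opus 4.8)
The plan is to derive both inequalities from a single elementary fact: among probability distributions on a finite set whose support is contained in a fixed set $S$, the uniform distribution on $S$ uniquely attains the maximum entropy, namely $\log \card{S}$. (This is the standard bound $H(p) \leq \log \card{\{ z \mid p(z) > 0 \}}$, a consequence of the concavity of $\log$, equivalently of the non-negativity of relative entropy.) I would apply the decomposition of Proposition~\ref{jointdecompprop} with $X = M \times \Lambda$ and $Y = O$, so that for each $q \in C$ the measurement prior is $\tq$ on $M \times \Lambda$ and the conditionals are the distributions $q_{\mb, \lambda}$ on $O$. Everything then reduces to (a) identifying the supports of $\tq$ and of the $q_{\mb, \lambda}$, which are the same for all $q \in C$, and (b) checking that $\thq$ and the $q^{h}_{\mb, \lambda}$ are uniform on exactly those supports.

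First the measurement prior. Since the possibilistic collapse of any $q \in C$ is $h$, we have $q(\mb, \ob, \lambda) > 0 \IFF h(\mb, \ob, \lambda)$, and hence $\tq(\mb, \lambda) = \sum_{\ob} q(\mb, \ob, \lambda) > 0 \IFF h(\mb, \lambda)\da$. Thus the support of $\tq$ is the fixed set $D \eqdef \{ (\mb, \lambda) \mid h(\mb, \lambda)\da \}$, independently of $q$. For $q^h$ itself, the explicit formula gives $\thq(\mb, \lambda) = \sum_{\ob} q^{h}(\mb, \ob, \lambda) = \Klm / \Wlm = 1/(LN)$ when $h(\mb, \lambda)\da$ and $0$ otherwise; so $\thq$ is uniform on $D$, and --- as was already noted when verifying that $q^h$ is a probability distribution, which is exactly the point where \LI\ enters (via $h(\lambda)\da \AND h(\mb)\da \IMP h(\mb, \lambda)\da$) --- $\card{D} = LN$, whence $H(\thq) = \log(LN)$. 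Therefore $H(\tq) \leq \log \card{D} = H(\thq)$. Now the conditionals: fix $(\mb, \lambda)$ with $h(\mb, \lambda)\da$. By the above $\tq(\mb, \lambda) > 0$, so $q_{\mb, \lambda}$ is defined, and $q_{\mb, \lambda}(\ob) > 0 \IFF q(\mb, \ob, \lambda) > 0 \IFF h(\mb, \ob, \lambda)$; hence the support of $q_{\mb, \lambda}$ is the fixed set $E \eqdef \{ \ob \mid h(\mb, \ob, \lambda) \}$, with $\card{E} = \Klm$. The same computation for $q^h$ gives $q^{h}_{\mb, \lambda}(\ob) = q^{h}(\mb, \ob, \lambda)/\thq(\mb, \lambda) = (1/\Wlm)/(1/(LN)) = 1/\Klm$ for $\ob \in E$, so $q^{h}_{\mb, \lambda}$ is uniform on $E$ and $H(q^{h}_{\mb, \lambda}) = \log \Klm$. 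Therefore $H(q_{\mb, \lambda}) \leq \log \card{E} = H(q^{h}_{\mb, \lambda})$, completing the proof.

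There is no substantial obstacle; the work is essentially bookkeeping. The points needing care are: keeping track of the $\da$-notation; using \LI\ (only) to ensure $\card{D} = LN$, i.e.\ that $\thq$ is a genuine probability distribution uniform on $D$; and remembering that Proposition~\ref{jointdecompprop} is applied with the combined measurement--hidden-variable coordinate as ``$X$''. One might add two remarks. First, tracing the equality case of the entropy bound through the (invertible) decomposition of Proposition~\ref{jointdecompprop} shows that $q^h$ is the \emph{unique} member of $C$ that simultaneously maximizes $H(\tq)$ and every $H(q_{\mb, \lambda})$, so the maximum-entropy property characterizes $q^h$. Second, this is genuinely weaker than asking $q^h$ to maximize $H(q)$ itself among $q \in C$: the latter is false in general, since the distribution maximizing $H(q)$ is the uniform distribution on the support of $h$, which need not respect the locality decomposition.
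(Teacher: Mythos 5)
Your proof is correct and follows essentially the same route as the paper's: both arguments identify the common supports forced by the shared possibilistic collapse, observe that $\thq$ and each $q^{h}_{\mb,\lambda}$ are uniform on those supports, and invoke the maximality of uniform-distribution entropy. Your additional bookkeeping (the explicit count $\card{D}=LN$ via \LI, and the remarks on uniqueness and on why maximizing $H(q)$ itself would fail) is accurate but not needed beyond what the paper records.
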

\begin{proof}
Given $q \in C$, we note firstly that since $q$ and $q^h$ both have the same possibilistic collapse, $\thq$ and $\tq$ have the same support, $\{ (\mb, \lambda) \in M \times \Lambda \mid e(\mb, \lambda)\da \}$. Also, if $(\mb, \lambda)$ is in the support, $\thq(\mb, \lambda) = 1/LN$. Thus $\thq$ is the uniform distribution on the support, and hence $H(\thq) \geq H(\tq)$.

Now given $(\mb, \lambda)$ in the support of $\thq$, $q^{h}_{\mb,\lambda}(\ob) = 1/ \Klm$, so $q^{h}_{\mb,\lambda}$ is the uniform distribution on its support, $\{ \ob \in O \mid h(\mb, \ob, \lambda) \}$. Again, since $q$ has the same possibilistic collapse, $q_{\mb, \lambda}$ has the same support as $q^{h}_{\mb,\lambda}$. Hence $H(q^{h}_{\mb, \lambda}) \geq H(q_{\mb,\lambda})$ for all $(\mb,\lambda)$ in the support of $\thq$.
\end{proof}

\section{Classes of Models}

The class of relational empirical models  can be seen as a simplified image of the space of physical theories. As we have seen, a surprising amount of the structure of locality and related notions is preserved under the passage by possibilistic collapse to this image. Thus it seems promising that by understanding the structure of this class, we can gain insight into quantum mechanics, and both sub- and super-quantum theories.

All the examples of empirical models we have considered so far have been of finite type; it seems these suffice to encapsulate the key structural and conceptual issues which arise in the foundations of quantum mechanics. We define $\EM$ to be the class of empirical models of finite type. We define $\LHV$ to be the class of empirical models $e$ which are realized by hidden-variable models satisfying \LI\ and \Loc.
We also have the class $\NSig$ of empirical models which satisfy No-Signalling. In the light of Proposition~\ref{nsignegprop}, we also define the class $\NSp$ of models which arise as the possibilistic collapse of probabilistic models satisfying \PNS.

The class $\QM$ comprises  those empirical models which arise from quantum systems in the manner described in  section~\ref{physmodsec}. However, this description needs to be refined in one respect. The models arising from quantum systems naturally give rise to probability distributions on outcomes $p_{\mb}$. To be completed into joint distributions on $M \times O$ according to the recipe provided by Proposition~\ref{jointdecompprop}, they also need to be equipped with a measurement prior.
Since we are only concerned here with the relational models they give rise to by possibilistic collapse, this simply amounts to restricting the set of possible measurements.

Given a set $S \subseteq M$ and an empirical model $e \subseteq M \times O$, we define the \emph{restriction} of $e$ to $S$:
\[ e_{S}(\mb, \ob) \; \equiv \; e(\mb, \ob) \AND \mb \in S . \]

Importantly, we have:
\begin{proposition}
If $e$ satisfies \NS, so does $e_{S}$.
\end{proposition}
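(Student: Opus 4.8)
The plan is to unfold the definition of $e_S$ and reduce the claim to a single invocation of No-Signalling for $e$. First I would record the elementary observation that for any subsequence $\bar{s}$ of the arguments of $e$, we have $e_S(\bar{s})\da \IFF e(\bar{s})\da \AND \mb \in S$, where $\mb$ is the full measurement tuple determined by the measurement components occurring in $\bar{s}$; this is simply because the side condition ``$\mb \in S$'' does not mention the outcome arguments, so it commutes with the existential quantifier defining $\da$. Specialising, for $m \in M_i$, $o \in O_i$ and $\mb' \in \Mim$ we get $e_S(m, \mb')\da \IFF e(m, \mb')\da \AND (m, \mb') \in S$ and $e_S(m, \mb', o)\da \IFF e(m, \mb', o)\da \AND (m, \mb') \in S$, where $(m, \mb')$ denotes the measurement tuple with $m$ in the $i$-th coordinate.

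Next I would verify the No-Signalling instance for $e_S$ directly. Fix $i \in \nn$, $m \in M_i$, $o \in O_i$, $\mb, \mb' \in \Mim$, and assume $e_S(m, \mb, o)\da$ and $e_S(m, \mb')\da$. By the observation above these yield $e(m, \mb, o)\da$, $e(m, \mb')\da$, and $(m, \mb') \in S$. Applying No-Signalling for $e$ to the first two facts gives $e(m, \mb', o)\da$; combining this with $(m, \mb') \in S$ and using the observation once more gives $e_S(m, \mb', o)\da$, which is exactly the required conclusion.

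The only point that needs care --- and it is precisely why the statement holds for an \emph{arbitrary} $S \subseteq M$ --- is that the measurement tuple $(m, \mb')$ appearing in the desired conclusion $e_S(m, \mb', o)\da$ is literally the same tuple as the one in the second hypothesis $e_S(m, \mb')\da$, so the membership-in-$S$ side condition is already available and need not be re-derived. Beyond this bookkeeping there is no obstacle: the argument is a routine unfolding of definitions together with one use of the hypothesis that $e$ satisfies $\NS$.
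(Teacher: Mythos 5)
Your proof is correct and is precisely the routine unfolding that the paper dismisses as ``a trivial consequence of the definition of \NS''; the key bookkeeping point you highlight --- that the measurement tuple $(m,\mb')$ in the conclusion coincides with the one in the second hypothesis, so membership in $S$ is already in hand --- is exactly why the claim is immediate. No further comment is needed.
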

\begin{proof}
A trivial consequence of the definition of \NS.
\end{proof}
We define $\QM$ to be the class of models obtained from quantum systems by possibilistic collapse and restriction to a subset of measurements.

Our goal in this section is lay some basic groundwork for future study by establishing
 the following (strict) inclusions:
\[ \LHV \subset \QM \subset \NSp \subset \NSig \subset \EM . \]

\subsection{The Quantum Non-Locality Zone}

\begin{proposition}
We have the strict inclusion $\LHV \subset \QM$.
\end{proposition}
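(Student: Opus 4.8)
The plan is to establish $\LHV \subseteq \QM$ first, and then separately exhibit a model in $\QM \setminus \LHV$ to make the inclusion strict. For the inclusion, suppose $e \in \LHV$, so $e$ is realized by some hidden-variable model satisfying \LI\ and \Loc. By Proposition~\ref{LILocSDprop} we may assume this realizing model is in fact Strongly Deterministic (and still satisfies \LI). The idea is then to encode such a local deterministic hidden-variable model directly as a quantum realization: take each $\HH_i$ to have an orthonormal basis indexed by $M_i \times O_i$ (or, more economically, by the local deterministic ``response functions'' $M_i \pfn O_i$ that appear in the proof of the Strong Determinism existence proposition), let the projective measurements at site $i$ read off the outcome dictated by the response function, and let the global state be a uniform superposition over the tuples of local response functions that are consistent with some value $\lambda$. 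Concretely one can mimic the construction of $\Lambda$ as a product $\phi_1 \times \cdots \times \phi_n$ of partial functions: the state $\ket{\psi} = \frac{1}{\sqrt{|\Lambda|}}\sum_{\lambda} \ket{\phi_1^{\lambda}} \tensor \cdots \tensor \ket{\phi_n^{\lambda}}$ does the job, and the possibilistic collapse of the resulting probabilities, after restricting to the measurement set $M$, is exactly $e$. One then checks that $p_{\mb}(\ob) > 0$ iff some $\lambda$ gives rise to $(\mb,\ob)$, which is precisely $e(\mb,\ob)$.

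For strictness, the cleanest route is to invoke the No-Go results already proved: by the GHZ discussion in Section~\ref{GHZqmsubsec}, the GHZ empirical model lies in $\QM$ (it is the possibilistic collapse of the probability distribution computed from the GHZ state with the stated $X,Y$ measurements); but by Proposition~\ref{GHZprop} no GHZ model is realized by a hidden-variable model satisfying \LI\ and \Loc, so the GHZ model is not in $\LHV$. Alternatively one can use the Hardy model together with Proposition~\ref{hardyprop}, since the quantum realization of Hardy was also exhibited in Section~\ref{physmodsec}. Either way we get a witness separating the two classes.

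The main obstacle I expect is the encoding step: one must choose the Hilbert spaces, operators, and state so that (i) the operators at site $i$ genuinely form projective (or at least generalized) measurements satisfying $\sum_o A_{m,o}^{\dagger}A_{m,o} = I$, (ii) the possibilistic support of $p_{\mb}$ is exactly $\{\ob \mid e(\mb,\ob)\}$ for every $\mb$ in the restricted measurement set, and (iii) the construction genuinely uses the Locality/Strong Determinism of the hidden-variable model rather than secretly needing more. The temptation is to over-engineer; the economical observation is that a Strongly Deterministic local model is essentially a classical correlated source of local dictionaries, and any classical correlated source can be realized by a separable (indeed, a suitably chosen entangled-looking but effectively classical) quantum state with commuting diagonal measurements. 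Making precise that the restriction to the finite measurement set $M$ is harmless — which the paper has already anticipated in its definition of $\QM$ via restriction — removes any worry about the ``other'' measurements that a quantum realization would also support.
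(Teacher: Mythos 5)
Your proposal is correct, and for the inclusion $\LHV \subseteq \QM$ it takes a genuinely different route from the paper; the strictness half (GHZ via Proposition~\ref{GHZprop}, or Hardy via Proposition~\ref{hardyprop}) coincides with the paper's. The paper works with Hilbert spaces $\HH_i$ of dimension $\card{O_i}$, the product of maximally mixed states $\mu_1 \tensor \cdots \tensor \mu_n$, and generalized measurements whose non-zero operators mark the locally possible outcomes. Your encoding instead indexes a basis of $\HH_i$ by local response functions and places the classical correlation in the state $\frac{1}{\sqrt{L}}\sum_{\lambda}\ket{\phi_1^{\lambda}}\tensor\cdots\tensor\ket{\phi_n^{\lambda}}$. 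This is arguably the more careful construction: a product of maximally mixed states with local measurement operators can only yield a product support $\prod_i S_{\mb_i}$ for each $\mb$, whereas a general member of $\LHV$ is a \emph{union} of products of local deterministic assignments (Proposition~\ref{unionsdprop}), so some mixing or superposition over $\lambda$ --- which your state supplies and the paper's wording leaves implicit --- is genuinely needed to realize correlated models such as $e(X,Y)=\{(a,a),(b,b)\}$. Two points worth making explicit in your version: (i) since all amplitudes are positive reals and the vectors $\bigotimes_i\ket{\phi_i^{\lambda}}$ are (possibly repeated) product basis vectors, there is no destructive interference, so $p_{\mb}(\ob)>0$ iff some $\lambda$ contributes, which is exactly what makes the possibilistic collapse come out as $e$; the classical mixture $\frac{1}{L}\sum_{\lambda}\bigotimes_i\ket{\phi_i^{\lambda}}\bra{\phi_i^{\lambda}}$ would sidestep even this check, at the cost of a mixed state; (ii) the response functions should be taken total on $\Mip$, which \LI\ guarantees is possible after passing to the \SD\ model via Proposition~\ref{LILocSDprop}, with the restriction operation in the definition of $\QM$ absorbing any measurement tuples outside $\Mp$ --- as you anticipate.
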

\begin{proof}
Firstly, consider a relational model  of arity 1, with $d$ outcomes. We can represent this by a quantum system over a Hilbert space of dimension $d$, with a chosen basis $\ket{i}$. We use the maximally mixed state
\[ \mu \; = \; \sum_{i=1}^{d} \, \frac{1}{d} \ket{i}\bra{i} . \]
We can then use projectors to represent arbitrary subsets of the outcomes.

Now given a model $e$ of arity $n$ in $\LHV$, we use the product state $\mu_1 \otimes \cdots \otimes \mu_n$, where $\mu_i$ is the maximally mixed state in a chosen basis for a Hilbert space $\HH_i$ of dimension $d_i = \card{O_i}$.  By locality, this system realizes $e$. If $e$ is not total, we can use restriction to cut down the admissible measurements in the quantum system appropriately.
Thus $\LHV \subseteq \QM$.

By Proposition~\ref{GHZprop}, we know that there are systems in $\QM$ which are not in $\LHV$, so the inclusion is strict.
\end{proof}

\subsection{The Super-Quantum Sub-Luminal Zone}

We invoke a standard result: the No-Signalling or No-Communication Theorem \cite{ghirardi1980general,jordan1983quantum,kennedy1995empirical}.
\begin{theorem}
\label{NCTthm}
Every probabilistic model arising from a quantum system satisfies \PNS.
\end{theorem}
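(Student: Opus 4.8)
The plan is to reduce \PNS\ to a statement purely about the family of outcome-distributions $\{p_{\mb}\}$ supplied by the statistical algorithm, and then to carry out the standard partial-trace computation. First I would observe that a probabilistic model $p$ arising from a quantum system is obtained by equipping the family $p_{\mb}$ with a measurement prior $\theta$, so that $p(\mb, \ob) = \theta(\mb)\, p_{\mb}(\ob)$ (cf.~Proposition~\ref{jointdecompprop}); hence, whenever $p(\mb) > 0$, the prior cancels and $p(\ob \mid \mb) = p_{\mb}(\ob)$, independently of the chosen prior. Consequently \PNS\ will follow once we show that for each $i \in \nn$, $m \in M_i$, $o \in O_i$, the marginal conditional probability
\[ p(o \mid m, \mb) \;=\; \sum_{\ob \in \Oim} p_{(m, \mb)}(o, \ob) \]
does not depend on $\mb \in \Mim$.

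Next I would expand this marginal. Using the factorization $\Amo^{\dagger}\Amo = \bigotimes_{j} A_{\mb_j, \ob_j}^{\dagger} A_{\mb_j, \ob_j}$ together with linearity of the trace, the sum over the outcomes $\ob_j$ at all sites $j \neq i$ can be pushed inside each tensor factor. Applying the generalized-measurement completeness relation $\sum_{o' \in O_j} A_{\mb_j, o'}^{\dagger} A_{\mb_j, o'} = I_{\HH_j}$ collapses every such factor to the identity operator, leaving
\[ \sum_{\ob \in \Oim} p_{(m, \mb)}(o, \ob) \;=\; \mathsf{Tr}\big( (I_{\HH_1} \otimes \cdots \otimes A_{m, o}^{\dagger} A_{m, o} \otimes \cdots \otimes I_{\HH_n})\, \rho \big) \;=\; \mathsf{Tr}(A_{m, o}^{\dagger} A_{m, o}\, \rho_i), \]
where $\rho_i$ is the reduced density operator obtained from $\rho$ by tracing out all tensor factors other than $\HH_i$. (By the $S_n$-symmetry, or simply by relabelling, one may take $i = 1$ when writing out the tensor expressions.) Since $\rho_i$ depends only on the state $\rho$ and not on any measurement choices, the right-hand side is manifestly independent of $\mb$, which is exactly what \PNS\ demands.

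I do not expect a genuine obstacle here: the mathematical content is entirely the POVM completeness identity combined with the elementary fact that marginalizing a joint quantum state over a subsystem yields its reduced state. The only point requiring care is the bookkeeping around the measurement prior — one should state explicitly that the prior cancels in forming $p(\ob \mid \mb)$, so that \PNS\ for the joint distribution genuinely reduces to the subsystem-independence of the $p_{\mb}$, which is where the quantum formalism does its work.
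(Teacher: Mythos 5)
Your proof is correct. Note that the paper does not actually prove Theorem~\ref{NCTthm}: it invokes it as a standard result (the No-Communication Theorem) and cites \cite{ghirardi1980general,jordan1983quantum,kennedy1995empirical}. Your argument is precisely the standard one from that literature --- the measurement prior cancels in forming $p(\ob \mid \mb) = p_{\mb}(\ob)$, and then the completeness relation $\sum_{o'} A_{m,o'}^{\dagger}A_{m,o'} = I$ collapses the marginal over the other sites' outcomes to $\mathsf{Tr}(A_{m,o}^{\dagger}A_{m,o}\,\rho_i)$, which depends only on the reduced state --- so you have supplied exactly the proof the paper omits, with the bookkeeping around the prior handled correctly.
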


As an immediate corollary of this theorem and Propositions~\ref{NSequivprop} and~\ref{posspreserveNSprop}, we obtain:

\begin{proposition}
Every empirical model arising from a quantum system has a realization by a hidden-variable model satisfying \LI\ and \PI.
\end{proposition}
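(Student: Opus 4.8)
The plan is to assemble the result by chaining together three facts already in hand: the No-Signalling (No-Communication) Theorem, the preservation of No-Signalling under possibilistic collapse, and the characterization of No-Signalling empirical models as exactly those realizable by hidden-variable models satisfying $\lambda$-Independence and Parameter Independence.

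First I would unpack the definition of $\QM$. An element $e \in \QM$ is, by definition, of the form $e'_{S}$, where $e'$ is the possibilistic collapse of a probabilistic model $p$ arising from a quantum system, and $S \subseteq M$ is a subset of measurements. The probabilistic model $p$ is obtained from the quantum data $(\HH_1, \ldots, \HH_n, (A_{m,o}), \rho)$ together with a measurement prior, via the statistical algorithm and Proposition~\ref{jointdecompprop}; but the precise choice of prior is irrelevant here since it does not affect the support.

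Next, by Theorem~\ref{NCTthm}, the probabilistic model $p$ satisfies \PNS. By Proposition~\ref{posspreserveNSprop}, its possibilistic collapse $e'$ therefore satisfies \NS. Then, by the Proposition stating that $e_{S}$ inherits \NS\ from $e$, the restriction $e = e'_{S}$ also satisfies \NS. Finally, applying Proposition~\ref{NSequivprop}, since $e$ satisfies No-Signalling it can be realized by a hidden-variable model satisfying \LI\ and \PI, which is exactly the claim.

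There is really no hard step here; the content of the statement is entirely carried by the No-Signalling Theorem (Theorem~\ref{NCTthm}), which is invoked as a black box. The only point requiring a moment's care is that membership in $\QM$ involves a restriction to a subset of measurements as well as the possibilistic collapse, so one must remember to apply the restriction-preserves-\NS\ proposition before invoking Proposition~\ref{NSequivprop}; otherwise the argument is a two-line composition of previously established results.
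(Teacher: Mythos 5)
Your proof is correct and follows exactly the paper's route: the paper derives this as an immediate corollary of Theorem~\ref{NCTthm} together with Propositions~\ref{posspreserveNSprop} and~\ref{NSequivprop}. Your extra remark about first applying the restriction-preserves-\NS\ proposition is a sensible bit of care that the paper leaves implicit.
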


It follows from Proposition~\ref{hvimpsprop} that the missing ingredient in such hidden-variable realizations will be \OI.

A topic which has received much attention in recent work on quantum information is that of \emph{non-local boxes}; devices which exhibit super-quantum correlations. In particular, the family of Popescu-Rohrlich boxes \cite{popescu1994quantum,khalfin1992quantum} achieve maximum violation of the Tsirelson bound on quantum correlations \cite{tsirelson1980}, while still satisfying No-Signalling.

Consider the probabilistic model of arity 2, with $M_{i} = O_{i} = \{ 0, 1 \}$, $i = 1, 2$. The conditional probabilities $p(\ob | \mb)$ are tabulated as follows, where the rows are indexed by measurements, and the columns by outcomes:
\begin{center}
\begin{tabular}{c|ccccc}
  & $(0, 0)$ & $(1, 0)$ & $(0, 1)$ & $(1, 1)$  &  \\ \hline
$(0, 0)$ & $\alpha$ & 0 & 0 & $\alpha'$ & \\
$(1, 0)$ & $\beta$ & 0 & 0 & $\beta'$ & \\
$(0, 1)$ & $\gamma$ & 0 & 0 & $\gamma'$ & \\
$(1, 1)$ & 0 & $\delta$ & $\delta'$ & 0 & 
\end{tabular}
\end{center}

\begin{lemma}
\label{eqproblemm}
If the model satisfies \PNS, then all the entries labelled by Greek letters must be equal to $1/2$.
\end{lemma}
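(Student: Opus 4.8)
The plan is to turn \PNS\ into a small linear system among the Greek-letter entries and solve it. As a preliminary I would record the normalization constraints: each row of the table is a conditional distribution $p(\cdot\mid\mb)$ on outcomes, so $\alpha+\alpha' = \beta+\beta' = \gamma+\gamma' = \delta+\delta' = 1$. If one wishes to read \PNS\ in its stated form, on a joint distribution, one equips the model with the uniform measurement prior $p(\mb)=1/4$, exactly as in the construction of joint distributions from conditionals described above; then every $p(\mb)>0$, and \PNS\ is precisely the assertion that each single-party marginal of $p(\cdot\mid\mb)$ is independent of the other party's measurement.

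Next I would read off, for each measurement pair $\mb=(m_1,m_2)$, the marginal probability that Alice observes outcome $0$, namely $\sum_{o_2} p(0,o_2\mid\mb)$, and the marginal that Bob observes outcome $0$. From the table these are, at measurements $(0,0),(1,0),(0,1),(1,1)$ respectively: $\alpha,\beta,\gamma,\delta'$ for Alice, and $\alpha,\beta,\gamma,\delta$ for Bob. Applying \PNS\ with Alice's measurement held fixed and Bob's varied yields $\alpha=\gamma$ and $\beta=\delta'$; with Bob's held fixed and Alice's varied it yields $\alpha=\beta$ and $\gamma=\delta$. Hence $\alpha=\beta=\gamma=\delta=\delta'$, and since $\delta+\delta'=1$ we obtain $\delta=\delta'=1/2$, so $\alpha=\beta=\gamma=1/2$; the normalization constraints then force $\alpha'=\beta'=\gamma'=1/2$ as well.

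I do not expect any genuine obstacle here: the content is a two-line linear computation once the marginals are written down. The only point demanding care is the bookkeeping for the $(1,1)$ row, whose nonzero entries $\delta,\delta'$ sit in the outcome positions $(1,0)$ and $(0,1)$ rather than $(0,0)$ and $(1,1)$ as in the other three rows; consequently $\delta$ contributes to Bob's $0$-marginal while $\delta'$ contributes to Alice's, and confusing the two is the one error to avoid.
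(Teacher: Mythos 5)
Your proof is correct and follows essentially the same route as the paper: expand \PNS\ into equalities of single-party marginals across the rows of the table and combine with row normalization. The only (immaterial) difference is that you use just the four outcome-$0$ marginal equations and then invoke normalization to handle the primed entries, whereas the paper writes out all eight equations to conclude directly that every Greek-letter entry is equal; your bookkeeping for the $(1,1)$ row ($\delta'$ in Alice's $0$-marginal, $\delta$ in Bob's) is exactly right.
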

\begin{proof}
If we expand the definition of \PNS, we obtain 8 equations of the form
\[ p(o_{i} = j | m_{i} = k, m_{i'} = 0) = p(o_{i} = j | m_{i} = k, m_{i'} = 1), \]
$i =1, 2$, $i' = 3 - i$, $j,k = 0, 1$. 
Applying the equation with $i=1$, $j=k=0$ to the table, this yields $\alpha + 0 = \gamma + 0$, hence $\alpha = \gamma$. The remaining equations yield
\[ \alpha' = \gamma', \;\; \beta = \delta', \;\; \beta' = \delta,  \;\; \alpha = \beta, \;\; \alpha' = \beta', \;\; \gamma = \delta, \;\;  \gamma' = \delta'  \]
in similar fashion. Altogether, these imply that all these quantities are equal, and since each row of the table is a probability distribution, their common value must be $1/2$.
\end{proof}

Hence the only probabilistic model of this form satisfying No-Signalling is
\begin{center}
\begin{tabular}{c|ccccc}
& $(0, 0)$ & $(1, 0)$ & $(0, 1)$ & $(1, 1)$  &  \\ \hline
$(0, 0)$ & $1/2$ & 0 & 0 & $1/2$ & \\
$(1, 0)$ & $1/2$ & 0 & 0 & $1/2$ & \\
$(0, 1)$ & $1/2$ & 0 & 0 & $1/2$ & \\
$(1, 1)$ & 0 & $1/2$ & $1/2$ & 0 & 
\end{tabular}
\end{center}
Note that this model has the property that the non-zero entries are exactly those for which
\[ o_{1} \oplus o_{2} = m_{1}m_{2} \]
where $\oplus$ is addition modulo 2. This is the simplest of the 8 PR-boxes \cite{barrett2005nonlocal}; it satisfies
\[ \EE(0, 0) + \EE(1, 0) + \EE(0,1) - \EE(1,1) = 4 \]
where  
\[ \EE(x,y) = \sum_{a, b = 0}^{1} (-1)^{a+b} p(o_1 = a, o_2 = b \mid m_1 = x, m_2 = y) \] 
measures the correlation of the outcomes.
Thus this model exceeds the Tsirelson bound \cite{tsirelson1980} of $2 \sqrt{2}$ for the maximum degree of correlation that can be achieved by any bipartite quantum  system of this form. It follows that \emph{no quantum system can give rise to this probabilistic model}.

\begin{proposition}
We have the strict inclusion $\QM \subset \NSp$.
\end{proposition}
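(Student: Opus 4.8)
The plan is to prove the inclusion and its strictness separately. For $\QM \subseteq \NSp$: given $e \in \QM$, write $e = (e_0)_S$, where $e_0$ is the possibilistic collapse of a quantum realization $(\HH_i, (A_{m,o}), \rho)$ and $S \subseteq M$. Take the associated quantum outcome distributions $\{ p_{\mb} \}_{\mb \in M}$, equip them with the uniform measurement prior to obtain a joint distribution $p$ on $M \times O$ (so every $\mb$ has positive marginal), and note that $p$ satisfies \PNS\ by the No-Communication Theorem~\ref{NCTthm}. Now let $p_S$ be the result of zeroing out the rows $\mb \notin S$ and renormalising. Its possibilistic collapse is exactly $(e_0)_S = e$, since $p_S(\mb, \ob) > 0 \IFF (\mb \in S \AND p(\mb, \ob) > 0)$. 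Finally, $p_S$ again satisfies \PNS: restricting and reweighting the measurement prior leaves every conditional $p(o \mid m, \mb)$ with $(m, \mb) \in S$ unchanged, so the defining equalities of \PNS\ for $p$ specialise to those for $p_S$. Hence $e \in \NSp$.

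For strictness, the witness is the relational model $e_{\mathrm{PR}}$ of type $M_1 = M_2 = O_1 = O_2 = \{ 0, 1 \}$ whose non-zero entries are exactly those with $o_1 \oplus o_2 = m_1 m_2$ --- that is, the possibilistic collapse of the Popescu--Rohrlich box tabulated above. Since the PR box satisfies \PNS, $e_{\mathrm{PR}} \in \NSp$. I claim $e_{\mathrm{PR}} \notin \QM$. Suppose otherwise, so $e_{\mathrm{PR}} = (e_0)_S$ for some quantum realization of this system type and some $S \subseteq M$. Because $e_{\mathrm{PR}}$ is total --- every $\mb \in M$ has a possible outcome --- we must have $S = M$ and $e_0 = e_{\mathrm{PR}}$, so the quantum outcome distributions satisfy $p_{\mb}(\ob) > 0 \IFF o_1 \oplus o_2 = m_1 m_2$. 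Thus the joint distribution $p$ (with uniform prior) has precisely the zero pattern of the table in Lemma~\ref{eqproblemm}, with all the Greek-letter entries strictly positive. By Theorem~\ref{NCTthm}, $p$ satisfies \PNS; hence Lemma~\ref{eqproblemm} forces all those entries to equal $1/2$, i.e.\ $p$ \emph{is} the PR box. But the PR box attains CHSH value $4 > 2\sqrt{2}$, exceeding the Tsirelson bound, and therefore --- as observed just above the statement --- arises from no quantum system, a contradiction. So $e_{\mathrm{PR}} \in \NSp \setminus \QM$, and the inclusion is strict.

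The step carrying the content, and where care is needed, is the reduction of $e_{\mathrm{PR}} \notin \QM$ to a Tsirelson violation: the purely possibilistic support data recorded by $e_{\mathrm{PR}}$, once combined with the probabilistic No-Signalling that quantumness forces via Theorem~\ref{NCTthm}, already pins down the entire probability table through Lemma~\ref{eqproblemm}, so a quantum model with PR support would have to \emph{be} the PR box, which is impossible. The only other delicate point is that such a quantum model cannot have arisen by a genuine restriction; this is immediate from totality of $e_{\mathrm{PR}}$, which forces $S = M$, although without totality one would need to analyse how restriction interacts with supports.
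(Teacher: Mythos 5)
Your proof is correct and follows essentially the same route as the paper: the inclusion via Theorem~\ref{NCTthm}, and strictness via the possibilistic collapse of the PR box, using Lemma~\ref{eqproblemm} to pin down the unique \PNS\ distribution and the Tsirelson bound to exclude a quantum realization. Your extra care about the restriction operation (renormalising the prior preserves \PNS; totality forces $S = M$) fills in details the paper leaves implicit, but does not change the argument.
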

\begin{proof}
By Theorem~\ref{NCTthm}, any quantum system gives rise to a probabilistic model satisfying \PNS. By 
definition, the possibilistic collapse of this system is in $\NSp$. Thus we have the inclusion $\QM \subseteq \NSp$.

Now consider the relational model
\begin{center}
\begin{tabular}{c|ccccc}
& $(0, 0)$ & $(1, 0)$ & $(0, 1)$ & $(1, 1)$  &  \\ \hline
$(0, 0)$ & $1$ & 0 & 0 & $1$ & \\
$(1, 0)$ & $1$ & 0 & 0 & $1$ & \\
$(0, 1)$ & $1$ & 0 & 0 & $1$ & \\
$(1, 1)$ & 0 & $1$ & $1$ & 0 & 
\end{tabular}
\end{center}
By Lemma~\ref{eqproblemm}, the only probabilistic model satisfying \PNS\ which gives rise to it by possibilistic collapse is the PR box. This shows both that the model is in $\NSp$, and that \emph{no quantum system can give rise to this relational model}. 
\end{proof}

\begin{proposition}
We have the strict inclusion $\NSp \subset \NSig$.
\end{proposition}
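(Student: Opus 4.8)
The plan is to observe that this Proposition is an immediate consequence of two results already established. For the inclusion $\NSp \subseteq \NSig$, I would take an arbitrary $e \in \NSp$, so that by definition $e$ is the possibilistic collapse of some probabilistic model $p$ satisfying \PNS. Proposition~\ref{posspreserveNSprop} then tells us directly that $e$ satisfies \NS, and hence $e \in \NSig$.

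For strictness, I would invoke Proposition~\ref{nsignegprop}, which exhibits an explicit (bipartite) empirical model $e$ that satisfies \NS{} but is not the possibilistic collapse of any probabilistic model satisfying \PNS. The first property says $e \in \NSig$; the second says precisely $e \notin \NSp$. Combining the two halves yields $\NSp \subsetneq \NSig$, as required.

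The only thing worth spelling out is that there is genuinely nothing more to do here: the ``hard work'' is already contained in Proposition~\ref{posspreserveNSprop} (preservation of No-Signalling under possibilistic collapse) and in the counterexample construction of Proposition~\ref{nsignegprop}, where the contradiction $c < c$ was derived from the \PNS{} equations. So there is no real obstacle; the proof is a two-line assembly of prior results, and I would present it as such.

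\begin{proof}
By Proposition~\ref{posspreserveNSprop}, if $e$ is the possibilistic collapse of a probabilistic model satisfying \PNS, then $e$ satisfies \NS; hence $\NSp \subseteq \NSig$. By Proposition~\ref{nsignegprop}, there is an empirical model $e$ satisfying \NS{} (so $e \in \NSig$) such that no probabilistic model satisfying \PNS{} has $e$ as its possibilistic collapse (so $e \notin \NSp$). Hence the inclusion is strict.
\end{proof}
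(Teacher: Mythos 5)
Your proof is correct and is essentially identical to the paper's own argument: the inclusion follows from Proposition~\ref{posspreserveNSprop}, and strictness follows immediately from the counterexample of Proposition~\ref{nsignegprop}. Nothing is missing.
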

\begin{proof}
If $e$ is in $\NSp$, it arises by possibilistic collapse from $p$ satisfying \PNS. By Proposition~\ref{posspreserveNSprop}, it follows that $e$ satisfies \NS. Thus $\NSp \subseteq \NSig$. The strictness of the inclusion follows immediately from Proposition~\ref{nsignegprop}.
\end{proof}

\subsection{And Beyond}

\begin{proposition}
We have the strict inclusion $\NSig \subset \EM$.
\end{proposition}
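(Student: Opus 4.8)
The inclusion $\NSig \subseteq \EM$ holds by definition, since $\NSig$ is by construction a subclass of the empirical models of finite type. The entire content of the statement is therefore \emph{strictness}, and the plan is simply to exhibit one finite-type empirical model which violates No-Signalling.

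First I would pick the smallest bipartite type in which signalling can be expressed at all: if each part has only a single available measurement then \NS\ is satisfied vacuously, so at least one part must be given a genuine choice of measurement. Take $M_1 = \{ X_1, X_2 \}$, $M_2 = \{ Y \}$, $O_1 = \{ 0 \}$, $O_2 = \{ 0, 1 \}$, and define $e$ by

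\begin{center}
\begin{tabular}{c|cc}
 & $(0,0)$ & $(0,1)$ \\ \hline
$(X_1, Y)$ & $1$ & $0$ \\
$(X_2, Y)$ & $0$ & $1$ \\
\end{tabular}
\end{center}

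\noindent The point is that the outcome available to Bob is forced to $0$ when Alice measures $X_1$ and forced to $1$ when Alice measures $X_2$, so that Alice's choice of measurement can be read off from Bob's outcome.

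Then I would verify directly that $e$ fails \NS\ at $i = 2$. Instantiating the No-Signalling condition with $m = Y \in M_2$, $o = 0 \in O_2$, and $\mb = X_1$, $\mb' = X_2$ in $M_2^- = M_1$: we have $e(Y, X_1, 0)\da$, witnessed by $(X_1, Y, 0, 0) \in e$, and $e(Y, X_2)\da$, witnessed by $(X_2, Y, 0, 1) \in e$; but $e(Y, X_2, 0)\da$ is false, since the only tuple over the measurements $X_2, Y$ lying in $e$ is $(X_2, Y, 0, 1)$, whose $O_2$-component is $1$, not $0$. This contradicts the required implication, so $e \notin \NSig$, while $e \in \EM$ trivially; hence the inclusion is strict. (Alternatively, one may invoke the Kochen--Specker model of Proposition~\ref{KSprop}: by Proposition~\ref{NSequivprop} it cannot satisfy \NS, since it is realized by no hidden-variable model satisfying \LI\ and \PI.) I do not anticipate any genuine obstacle here; the only care needed is to keep the example small enough to remain transparent while still offering two measurements at one site, and to track the direction of the implication in the definition of \NS\ when substituting the two witnesses.
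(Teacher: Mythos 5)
Your proof is correct, but it takes a genuinely different route from the paper. The paper's own proof is a one-liner: it observes that $\NSig \subseteq \EM$ by definition and obtains strictness by citing Proposition~\ref{KSprop} --- the Kochen--Specker model cannot be realized by any hidden-variable model satisfying \LI\ and \PI, so by the equivalence of Proposition~\ref{NSequivprop} it cannot satisfy \NS. This is exactly the alternative you mention in your parenthetical remark, so you have in effect reconstructed the paper's argument as a fallback. Your primary argument instead exhibits a minimal explicit signalling model --- a $2 \times 2$ table in which Bob's outcome deterministically encodes Alice's measurement choice --- and verifies the failure of \NS\ by direct instantiation of the definition; I checked the instantiation ($i = 2$, $m = Y$, $o = 0$, $\mb = X_1$, $\mb' = X_2$) and it is correct. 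What your approach buys is self-containedness and transparency: the counterexample is as small as possible, requires no machinery from earlier in the paper, and makes the operational meaning of signalling vivid. What the paper's approach buys is economy and integration with the structure theory: it reuses a model already constructed for a stronger purpose, and in doing so exhibits a model that is not merely signalling but fails even the weaker requirement of \LI\ plus \PI\ realizability. Both are complete proofs.
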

\begin{proof}
By definition, $\NSig \subseteq \EM$. The strictness of the inclusion follows from Proposition~\ref{KSprop}.
\end{proof}

\section{Computational Aspects}

\subsection{Hidden-Variable Models}

We shall  investigate the computational complexity of the class $\LHV_t$, of \emph{total} relational models which can be realized by local hidden-variable models. 
We begin with a simple `intrinsic' characterization of this class, which does not refer to hidden variables.

\begin{proposition}
\label{unionsdprop}
A relational model is in $\LHV_t$ if and only if it is a union of total strongly deterministic models.
\end{proposition}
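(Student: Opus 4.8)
The plan is to prove the two implications directly, with the forward direction resting on Proposition~\ref{LILocSDprop} and the backward direction on Corollary~\ref{SDLoccorr}.

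For the ``only if'' direction, I would start from a total model $e$ realized by some hidden-variable model satisfying \LI\ and \Loc, and apply Proposition~\ref{LILocSDprop} to replace it by an equivalent $h \subseteq M \times O \times \Lambda$ satisfying \LI\ and \SD. For each $\lambda \in \Lambda$ I set $h_{\lambda} = \{ (\mb, \ob) \mid h(\mb, \ob, \lambda) \}$. Strong Determinism of $h$ says exactly that each slice $h_{\lambda}$ is a strongly deterministic empirical model, and since $h$ realizes $e$ we get $e = \bigcup_{\lambda \in \Lambda} h_{\lambda}$. The remaining point is that the union may be taken over \emph{total} slices: because $e$ is total and $h$ realizes it, $h(\mb')\da$ holds for every $\mb' \in M$; hence whenever $h(\lambda)\da$, \LI\ forces $h(\mb', \lambda)\da$ for every $\mb'$, i.e. $h_{\lambda}$ is total. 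Discarding the slices with $h(\lambda)\da$ false (which are empty, hence contribute nothing) exhibits $e$ as a union of total strongly deterministic models.

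For the ``if'' direction, suppose $e = \bigcup_{k \in K} e_{k}$ with each $e_{k}$ a total strongly deterministic model. I would build the hidden-variable realization by hand: take $\Lambda = K$ and $h(\mb, \ob, k) \equiv e_{k}(\mb, \ob)$. Then $h$ realizes $e$ by construction; Strong Determinism of each $e_{k}$ gives Strong Determinism of $h$, hence \Loc\ by Corollary~\ref{SDLoccorr}; and \LI\ holds trivially since totality of each $e_{k}$ means $h(\mb, k)\da$ for \emph{all} $\mb, k$, so the conclusion of \LI\ is always satisfied. Finally $e$ is total because $e \supseteq e_{k}$ for any $k$. Thus $e$ is realized by a model satisfying \LI\ and \Loc, i.e. $e \in \LHV_{t}$. (In both directions the degenerate case $M = \varnothing$, equivalently an empty index set, is vacuous and can be dismissed with a one-line remark.)

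The argument is almost entirely bookkeeping once Proposition~\ref{LILocSDprop} is available; I expect the only step needing genuine care is the passage in the forward direction from ``union of strongly deterministic models'' to ``union of \emph{total} strongly deterministic models'', which is precisely where the totality hypothesis on $e$ and $\lambda$-Independence are used together. This also makes transparent why totality is the right side condition here, and connects with the ``instruction set'' description promised in the discussion following Proposition~\ref{GHZprop}.
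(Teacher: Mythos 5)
Your proposal is correct and follows essentially the same route as the paper: the ``only if'' direction invokes Proposition~\ref{LILocSDprop} and reads off the $\lambda$-slices of the resulting \LI{}$+$\SD\ model as total strongly deterministic components (the paper phrases these slices as products $\prod_i f_{\lambda,i}$, which is equivalent by its opening remark), and the ``if'' direction builds the hidden-variable model with $\Lambda$ the index set of the union, exactly as the paper does. Your explicit justification that the slices are total (via totality of $e$ together with \LI) is a point the paper leaves implicit, but it is the same argument.
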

\begin{proof}
Note firstly that a total model $e \subseteq M \times O$ is strongly deterministic if and only if $e = \prod_i f_i$, where $f_i : M_i \rarr O_i$, $i \in \nn$. If $e = \bigcup_{\lambda} f_{\lambda}$, where $f_{\lambda} = \prod_i f_{\lambda, i}$, we can define a hidden variable model $h$ by
\begin{equation}
\label{hvsdeq}
h(\mb, \ob, \lambda) \equiv f_{\lambda}(\mb) = \ob . 
\end{equation}
This trivially satisfies \LI, since each $f_{\lambda}$ is total, and satisfies \SD\ by construction.

Conversely, if $e$ is in $\LHV_t$, by Proposition~\ref{LILocSDprop} it is realized by a hidden-variable model $h$ satisfying \SD\ and \LI. Each $\lambda \in \Lambda$ such that $h(\lambda)\da$ induces a function $f_{\lambda} = \prod_i f_{\lambda, i}$, $f_{\lambda, i} : M_i \rarr O_i$, 
whose graph is included in $e$. Since $e$ is realized by $h$, every element of $e$ appears in the graph of $f_{\lambda}$ for some $\lambda$. Hence $e$ is a union of  total strongly deterministic models.
\end{proof}

\noindent Note that the functions $f_{\lambda} = \prod_i f_{\lambda, i}$ are the general form of Mermin-style `instructions'  \cite{mermin1990quantum}, as in the proof of Proposition~\ref{GHZprop}.

In order to simplify notation, we shall consider relational models of arity $n$ of the form $(U, e)$, where $e \subseteq U^n \times U^n$. Thus we use the same underlying set $U$ for both measurements and outcomes at each site. This loses a little generality, but does not change the essentials.

We shall write $\HV(n)$ for the class of models of this form in $\LHV_t$.
We are interested in the  algorithmic problem of determining if a structure $(U, e)$ of arity $n$ is in 
$\HV(n)$.

\begin{proposition}
For each $n$, $\HV(n)$ is in $\NP$.
\end{proposition}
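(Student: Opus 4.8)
The plan is to exploit the intrinsic characterization from Proposition~\ref{unionsdprop}: a total model $(U,e)$ of arity $n$ lies in $\HV(n)$ precisely when $e$ is a union of total strongly deterministic models, i.e. when $e = \bigcup_{j} \prod_{i=1}^{n} f_{j,i}$ for some finite family of functions $f_{j,i} : U \rarr U$. Each $f_j = \prod_i f_{j,i}$ is a graph of size $\card{U}^n = \card{e}$ at most (more precisely, $\card{M} = \card{U}^n$ tuples of measurements, each mapped to one outcome), and such a graph is itself a perfectly ordinary combinatorial object of polynomial size in the input $(U,e)$. So the natural $\NP$ certificate is a list of such instruction functions.

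First I would fix the input encoding: a structure $(U,e)$ is given by $U$ together with the relation $e \subseteq U^n \times U^n$ listed explicitly, so the input size is roughly $\card{U}^{2n}$, and $n$ is fixed. Next I would bound the number of instructions needed in a witnessing decomposition: since every element $(\mb,\ob) \in e$ must be covered by at least one $f_j$ with $f_j(\mb) = \ob$, and there are at most $\card{e} \le \card{U}^{2n}$ such elements, we may always take at most $\card{e}$ instruction functions — pick, for each $(\mb,\ob)\in e$, one total strongly deterministic model whose graph lies inside $e$ and contains $(\mb,\ob)$ (such a model exists, by the forward direction of Proposition~\ref{unionsdprop} applied to the singleton hidden-variable model, or simply because $e$ is a union of such). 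Hence a certificate is a family $(f_{j,i})_{j \le \card{e},\, i \le n}$ of functions $U \rarr U$; its total size is polynomial in the input. The verifier then checks, in polynomial time, two things: (a) every product $\prod_i f_{j,i}$ has graph contained in $e$ — i.e. for every $\mb \in U^n$, $e(\mb, (f_{j,1}(\mb_1),\ldots,f_{j,n}(\mb_n)))$ holds; and (b) every $(\mb,\ob)\in e$ is captured by some $j$, i.e. $f_{j,i}(\mb_i) = \ob_i$ for all $i$. Both checks are a constant number of nested loops over $U$ and over the $\le \card{e}$ indices $j$, hence polynomial in $\card{U}^{2n}$.

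Finally I would argue correctness of the reduction: if such a certificate passes both checks then $e = \bigcup_j \prod_i f_{j,i}$ (check (a) gives $\supseteq$... wait, $\subseteq$ of each product in $e$, and check (b) gives $e \subseteq \bigcup_j$), so by Proposition~\ref{unionsdprop}, $(U,e) \in \HV(n)$; conversely if $(U,e)\in\HV(n)$, Proposition~\ref{unionsdprop} gives such a decomposition, which by the cardinality bound above can be taken with $\le \card{e}$ terms, furnishing an accepting certificate. The only mild subtlety — and the step I would be most careful about — is confirming that the decomposition guaranteed by Proposition~\ref{unionsdprop} can always be pruned down to polynomially many (indeed $\le \card{e}$) strongly deterministic pieces; this follows because once we have \emph{any} decomposition, for each $(\mb,\ob)\in e$ we keep just one piece covering it, discarding the rest, and the retained pieces still have graphs inside $e$ and still cover all of $e$. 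Everything else is routine bookkeeping about the size of the certificate and the running time of the verifier.
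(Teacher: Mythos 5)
Your argument is correct, but it is not the route the paper takes. The paper also starts from Proposition~\ref{unionsdprop}, but then writes the characterization as a single second-order sentence over the finite structure $(U,e)$ and appeals to the Fagin-style correspondence between existential second-order logic and $\NP$: the sentence is massaged into $\Sigma_1^1$ form by standard quantifier manipulations, and membership in $\NP$ follows as a black box from finite model theory. You instead carry out what amounts to the same Skolemization by hand: the inner $\exists f_1,\ldots,f_n$ under the universal quantifier over $(\mb,\ob)\in e$ becomes your explicit family of at most $\card{e}$ instruction functions, and you supply the verifier directly. What your version buys is an elementary, self-contained proof with a concrete witness (a list of Mermin-style instruction sets) and an explicit polynomial bound $\card{e}$ on the number of deterministic pieces needed, obtained by your pruning argument; what the paper's version buys is brevity and consistency with its finite-model-theoretic framing, at the cost of invoking Fagin's theorem. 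One small point to tighten in your write-up: $\HV(n)$ consists of \emph{total} models, and the paper's formula enforces totality explicitly via the conjunct $\forall \vec{x}.\exists \vec{y}. R(\vec{x},\vec{y})$; your verifier should include the corresponding check, since for a non-total (in particular empty) relation $e$ the empty certificate would otherwise pass both of your conditions vacuously. With that one-line addition your proof is complete.
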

\begin{proof}
It will be convenient to use some notions of finite model theory \cite{libkin2004elements}.

From Proposition~\ref{unionsdprop}, it is clear that $\HV(n)$ is defined by the following second-order formula interpreted over finite structures:
\begin{equation}
\label{secondorderformeq}
\begin{array}{l}
\forall \vec{x}. \exists \vec{y}. R(\vec{x}, \vec{y})  \AND 
[\forall \vec{x}, \vec{y}. \, R(\vec{x}, \vec{y}) \IMP \exists f_1 , \ldots , f_n . \\
\qquad \bigwedge_i f_i(x_i) = y_i 
\AND 
\forall \vec{v}. R(\vec{v}, f_1(v_1), \ldots , f_n(v_n)) ]
\end{array}
\end{equation}
Here we are  interpreting the relation symbol $R$ by the given relation  $e$.

By standard quantifier manipulations, (\ref{secondorderformeq}) can be brought into an equivalent $\Sigma_{1}^{1}$ form, and hence $\HV(n)$ is in $\NP$ \cite{libkin2004elements}.
\end{proof}

\subsection{$\QM$}
We now turn to the question of determining whether a model is realized by a quantum system.

We consider the class of models which can be realized by Hilbert spaces of dimension $d$, which we write as $\QM(d)$. Note that if a model can be realized in dimension $d$, it can be realized in any higher dimension, by working in a  subspace. 

\begin{proposition}
\label{QMdecideprop}
The class $\QM(d)$ is in $\PSPACE$. That is, there is a $\PSPACE$ algorithm to decide, given an empirical model, if it arises from a quantum  system of  dimension $d$.
\end{proposition}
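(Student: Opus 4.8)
The plan is to express membership of a given empirical model $e$ in $\QM(d)$ as the satisfiability of an existential first-order sentence over $\Real$ (equivalently, over real-closed fields), and then to invoke the classical fact that the existential theory of real-closed fields, with atomic formulas of the forms $p = 0$ and $p > 0$ for polynomials $p$, is decidable in $\PSPACE$. First I would fix, without loss of generality, $\HH_i = \Comp^d$ with a standard basis for every $i \in \nn$ (so $\QM(d)$ is the class of models realizable with each $\HH_i$ of dimension $d$): every $d$-dimensional Hilbert space is isomorphic to this one, and a realization using some $\HH_i$ of dimension $< d$ embeds into one of dimension exactly $d$. I would also discard every part $i$ with $|M_i| = |O_i| = 1$, since such a part is realizable trivially in any dimension and leaves $\QM(d)$-membership unchanged; after this pruning every remaining part has $|M_i| \cdot |O_i| \geq 2$, so $d^{n} \leq (|M| \cdot |O|)^{\log_{2} d}$, i.e.\ the tensor-product dimension $d^{n}$ is polynomially bounded in the size of $e$ for fixed $d$. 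The unknowns of the sentence are the real and imaginary parts of the entries of each operator $A_{m,o}$ (a $d \times d$ complex matrix, one for each $i \in \nn$, $m \in M_i$, $o \in O_i$), together with the real and imaginary parts of the entries of a single $d^{n} \times d^{n}$ complex matrix $B$, which presents the state as $\rho = B^{\dagger} B$; this square-root (Cholesky-style) presentation captures positive semidefiniteness by existentially quantified polynomial identities and avoids the exponentially many principal-minor inequalities a direct encoding would require. By the bound just noted, the number of these real variables is polynomial in the size of $e$ for fixed $d$.

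Next I would write the quantifier-free matrix of the sentence as a conjunction of polynomial constraints: (i) for each $i$ and each $m \in M_i$, the generalized-measurement condition $\sum_{o \in O_i} A_{m,o}^{\dagger} A_{m,o} = I_{\HH_{i}}$, which is $d^{2}$ scalar quadratic equations; (ii) the normalization $\mathsf{Tr}(B^{\dagger} B) = 1$; and (iii) for each $\mb \in \dom(e)$ and each $\ob \in O$, the constraint $p_{\mb}(\ob) > 0$ when $e(\mb, \ob)$ and $p_{\mb}(\ob) = 0$ otherwise, where $p_{\mb}(\ob) = \mathsf{Tr}(\Amo^{\dagger} \Amo B^{\dagger} B) = \| \Amo B^{\dagger} \|_{F}^{2}$ and $\Amo = A_{\mb_{1}, \ob_{1}} \otimes \cdots \otimes A_{\mb_{n}, \ob_{n}}$. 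Each entry of $\Amo$ is a single monomial of degree $n$ in the $A$-variables, so each $p_{\mb}(\ob)$ expands to an explicit polynomial of degree $2n + 2$ with $d^{O(n)}$ monomials and small integer coefficients; by the pruning bound this is of polynomial size for fixed $d$, and there are at most $|M| \cdot |O|$ constraints of type (iii), so the whole sentence has size polynomial in that of $e$. It then remains to check the equivalence. A satisfying assignment yields a dimension-$d$ quantum realization --- the $A_{m,o}$ form generalized measurements by (i), and $\rho = B^{\dagger} B$ is a density operator by (ii) --- and since $\sum_{\ob} p_{\mb}(\ob) = \mathsf{Tr}(\rho) = 1$ the restriction subset is forced to be exactly $\dom(e)$, so by (iii) the possibilistic collapse of this realization, restricted to $\dom(e)$, is precisely $e$; hence $e \in \QM(d)$. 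Conversely, any dimension-$d$ realization of $e$ (a lower-dimensional one is first embedded) gives a satisfying assignment, taking $B$ to be any matrix with $B^{\dagger} B = \rho$. Deciding satisfiability of this sentence by the $\PSPACE$ procedure for the existential theory of real-closed fields then gives the claimed algorithm.

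The \emph{main obstacle} is exactly the bookkeeping needed to keep this sentence polynomially bounded. The tensor-product dimension $d^{n}$ is exponential in the arity, so one must (a) argue that for fixed $d$ it is nevertheless polynomial in the natural size of the input --- this is what forces the pruning step and the inequality $d^{n} \leq (|M| \cdot |O|)^{\log_{2} d}$ --- and (b) encode positive semidefiniteness of the state through the factorization $\rho = B^{\dagger} B$ rather than through Sylvester-type minor conditions, of which there would be $2^{d^{n}}$. Once these two points are handled, the remaining work --- expanding $\mathsf{Tr}(\Amo^{\dagger}\Amo\rho)$ into an explicit polynomial and verifying the two-way equivalence above --- is routine.
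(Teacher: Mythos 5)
Your proposal is correct and follows essentially the same route as the paper: encode the existence of a dimension-$d$ quantum realization of the given model as a sentence in the existential fragment of the first-order theory of the reals, and invoke the $\PSPACE$ decision procedure for that fragment. The paper dismisses the encoding as ``tedious but straightforward''; your two points of care --- the pruning bound keeping the tensor dimension $d^{n}$ polynomial in the size of $e$ for fixed $d$, and the factorization $\rho = B^{\dagger}B$ to express positive semidefiniteness without exponentially many minor conditions --- are precisely the bookkeeping the paper elides, and both are handled correctly.
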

\begin{proof}
We shall give an outline of the construction. The details are tedious but straightforward.

Since we have fixed the dimensions of the Hilbert spaces to be used, the quantum  system can be presented as a list of complex matrices satisfying various properties, each of which can be expressed in terms of the matrix components as equations or inequalities between algebraic expressions. Similarly, the fact that the possibilistic reduction of the system matches the given finite relation can  be expressed in this fashion. We can also use the standard representation of complex numbers as pairs of reals. Hence the existence of a quantum  system realizing the empirical model can be expressed by a first-order sentence $\phi$ over the signature $({+}, 0, {\times}, 1, {<})$, interpreted over the reals. By a classic result of Tarski \cite{tarski1951decision}, the first-order theory of this structure is decidable.

The sentence $\phi$ can in fact be constructed to lie in the \emph{existential fragment}, \ie to have the form  $\exists v_1 \ldots \exists v_k . \psi$, where  $\psi$ is a conjunction of atomic formulas.
This fragment has $\PSPACE$ complexity \cite{canny1988some,renegar1992computational}.
Moreover, the sentence can be constructed in polynomial time from the given relational empirical model.
Hence membership of $\QM(d)$ is in $\PSPACE$.
\end{proof}

The following question seems both interesting and challenging.

\begin{question}
\begin{enumerate}
\item If a model of finite type can be realized by a quantum system, can it always be realized by one of finite dimension?
\item Given a positive answer to (1), is the class $\exists d. \, \QM(d)$ decidable?
Note that, by Proposition~\ref{QMdecideprop}, this class is certainly \emph{computably enumerable}; we can simply run the algorithm for $\QM(d)$ for increasing values of $d$. The question is whether there is an effective bound for this procedure.
\end{enumerate}

\end{question}

\subsection{$\NSp$}
Finally, we consider the class $\NSp$. 

\begin{proposition}
The class $\NSp$ is in $\PSPACE$.
\end{proposition}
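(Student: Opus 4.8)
The plan is to reduce membership in $\NSp$ to a decision problem for the existential theory of real-closed fields, exactly as in the proof of Proposition~\ref{QMdecideprop}. Given a finite relational empirical model $e \subseteq M \times O$, we must decide whether there is a probability distribution $p : M \times O \rarr [0,1]$ satisfying \PNS\ whose possibilistic collapse is $e$. First I would introduce one real variable $x_{\mb, \ob}$ for each pair $(\mb, \ob) \in M \times O$; there are $\card{M \times O}$ of these, polynomially many in the size of the input. The requirements that $(x_{\mb, \ob})$ be a probability distribution with possibilistic collapse $e$ are expressed by the (linear) constraints $x_{\mb, \ob} \geq 0$ for all $(\mb, \ob)$; $\sum_{\mb, \ob} x_{\mb, \ob} = 1$; $x_{\mb, \ob} > 0$ whenever $e(\mb, \ob)$; and $x_{\mb, \ob} = 0$ whenever $\neg\, e(\mb, \ob)$ --- all over the signature $({+}, 0, {\times}, 1, {<})$.

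The one place where division enters is \PNS. For $i \in \nn$, $m \in M_i$, $\mb, \mb' \in M_i^-$, and $o \in O_i$, the marginals $p(m, \mb) = \sum_{\ob} x_{(m,\mb), \ob}$ and $p(o, m, \mb) = \sum_{\bar{o} \in O_i^-} x_{(m,\mb), (o, \bar{o})}$ are linear expressions in the $x$'s, and $p(o \mid m, \mb) = p(o, m, \mb) / p(m, \mb)$. Clearing denominators, the \PNS\ implication $p(m, \mb) > 0 \AND p(m, \mb') > 0 \IMP p(o \mid m, \mb) = p(o \mid m, \mb')$ is equivalent to the unconditional polynomial equation
\[ p(o, m, \mb)\, p(m, \mb') \;=\; p(o, m, \mb')\, p(m, \mb) . \]
Indeed, when both marginals are positive this is the stated conditional equality multiplied through by the positive quantity $p(m, \mb)\, p(m, \mb')$; and if, say, $p(m, \mb) = 0$, then $p(o, m, \mb) \leq p(m, \mb) = 0$ makes both sides vanish, while \PNS\ imposes no constraint in this case. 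Hence \PNS\ is equivalent to a conjunction of polynomially many quadratic equations in the variables $x_{\mb, \ob}$.

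Putting these together, $e \in \NSp$ holds if and only if the sentence obtained by existentially quantifying all the variables $x_{\mb, \ob}$ over the conjunction of the (quantifier-free, polynomial) constraints above is true in the ordered field of reals. This sentence lies in the existential fragment, has polynomially many variables, and is constructible in polynomial time from $e$. By the results of Canny \cite{canny1988some} and Renegar \cite{renegar1992computational} on the existential theory of real-closed fields already invoked in the proof of Proposition~\ref{QMdecideprop}, its truth is decidable in $\PSPACE$; hence $\NSp$ is in $\PSPACE$.

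The only step requiring any care is the elimination of the conditional probabilities from \PNS, handled by clearing denominators as above; note also that the positivity guards in \PNS\ are already fixed by the input, since $p(m, \mb) > 0 \IFF e(m, \mb)\da$ for any $p$ whose possibilistic collapse is $e$. Exploiting the latter, one can in fact do better: fixing the measurement prior to be uniform on $\{\mb \mid e(\mb)\da\}$ --- which changes neither the collapse nor \PNS\ --- turns every constraint, including \PNS, into a linear equality or strict/non-strict inequality in the conditional distributions $p(\cdot \mid \mb)$, so membership in $\NSp$ reduces to feasibility of a linear system with strict and non-strict inequalities and is already in $\mathsf{P}$; we record only the $\PSPACE$ bound here to keep the development parallel with that of $\QM(d)$.
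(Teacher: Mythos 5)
Your proof is correct and follows the same overall strategy as the paper's (which is only a brief sketch): encode the existence of a \PNS\ witness with collapse $e$ as a sentence in the existential fragment of the theory of the reals and invoke the $\PSPACE$ decision procedure already used for $\QM(d)$. The one point of divergence is in the encoding of \PNS\ itself: the paper asserts that all the atomic formulas are ``purely additive'', which implicitly presupposes the row-normalized (conditional-distribution) presentation used in Proposition~\ref{nsignegprop} and Lemma~\ref{eqproblemm}, whereas your primary encoding quantifies over the joint distribution and clears denominators, yielding quadratic equations $p(o,m,\mb)\,p(m,\mb') = p(o,m,\mb')\,p(m,\mb)$ --- your verification that these are unconditionally equivalent to the guarded \PNS\ equalities (using $p(o,m,\mb) \leq p(m,\mb)$ when a marginal vanishes) is a detail the paper does not spell out and is worth having. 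Your closing observation --- that since \PNS\ constrains only the conditionals $p(\cdot \mid \mb)$, one may fix the measurement prior to be uniform on $\{\mb \mid e(\mb)\da\}$, turning the whole problem into a linear feasibility question with strict and non-strict inequalities, hence in $\mathsf{P}$ --- recovers the paper's additive formulation and sharpens its conclusion, in line with the paper's own remark that the nominal $\PSPACE$ bounds are unlikely to be tight.
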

\begin{proof}
Following the methods of Proposition~\ref{nsignegprop} and Lemma~\ref{eqproblemm}, we can reduce the question of membership of an empirical model in $\NSp$ to the truth of a sentence in the existential fragment of the first-order theory of the reals, as in the proof of the previous Proposition.
The atomic formulas are the \PNS\ equations, together with strict positivity of the variables, and that they sum to 1. Note that all these expressions are purely additive.
\end{proof}

The nominal complexity suggested by the simple proofs of these results is high; it is likely that better bounds can be achieved.

\begin{question}
Determine the exact complexity of the classes $\LHV$, $\QM$ and $\NSp$.
\end{question}

\subsection*{Acknowledgements}
My thanks to Adam Brandenburger, Noson Yanofsky, Tobias Fritz and Jakub Zavodny for stimulating discussions, and to Philip Atzemoglou, Chris Heunen, Ray Lal and Shane Mansfield for catching a number of  typos. Financial support from 
the EPSRC Senior Research Fellowship EP/E052819/1 and ONR grant 
N000141010357 is gratefully acknowledged.
My thanks also to the two anonymous journal referees for their comments, which led to a number of clarifications and minor corrections.



\begin{thebibliography}{10}

\bibitem{ab11}
S. Abramsky and A. Brandenburger.
The sheaf-theoretic structure of non-locality and contextuality.
Preprint available at  arXiv:1102.0264, 2011.

\bibitem{barendregt1992lambda}
H.P. Barendregt.
\newblock {Lambda calculi with types}.
\newblock In S.~Abramsky, D.~Gabbay, and T.~Maibaum, editors, {\em Handbook of
  logic in computer science}, volume~2, pages 117--309. Oxford University
  Press, 1992.

\bibitem{barrett2005nonlocal}
J.~Barrett, N.~Linden, S.~Massar, S.~Pironio, S.~Popescu, and D.~Roberts.
\newblock {Nonlocal correlations as an information-theoretic resource}.
\newblock {\em Physical Review A}, 71(2):22101, 2005.

\bibitem{bell1964einstein}
J.S. Bell.
\newblock {On the {E}instein-{P}odolsky-{R}osen paradox}.
\newblock {\em Physics}, 1(3):195--200, 1964.

\bibitem{BK}
A.~Brandenburger and H.J. Keisler.
\newblock Observable implications of unobservable variables.
\newblock Available at
  \texttt{http://pages.stern.nyu.edu/$\sim$abranden/oiuv-06-22-10.pdf}, April 2010.

\bibitem{brandenburger2008classification}
A.~Brandenburger and N.~Yanofsky.
\newblock {A classification of hidden-variable properties}.
\newblock {\em Journal of Physics A: Mathematical and Theoretical}, 41:425302,
  2008.

\bibitem{cabello1996bell}
A.~Cabello, J.M. Estebaranz, and G.~Garc{\'\i}a-Alcaine.
\newblock {Bell-Kochen-Specker theorem: A proof with 18 vectors}.
\newblock {\em Physics Letters A}, 212(4):183--187, 1996.

\bibitem{canny1988some}
J.~Canny.
\newblock {Some algebraic and geometric computations in PSPACE}.
\newblock In {\em Proceedings of the twentieth annual ACM symposium on Theory
  of computing}, pages 460--469. ACM, 1988.

\bibitem{clauser1969proposed}
J.F. Clauser, M.A. Horne, A.~Shimony, and R.A. Holt.
\newblock {Proposed experiment to test local hidden-variable theories}.
\newblock {\em Physical Review Letters}, 23(15):880--884, 1969.

\bibitem{dickson1999quantum}
W.M. Dickson.
\newblock {\em {Quantum chance and non-locality}}.
\newblock Cambridge University Press, 1999.

\bibitem{dubois3possibilistic}
D.~Dubois, J.~Lang, and H.~Prade.
\newblock {Possibilistic logic}.
\newblock In D.~Gabbay and C. J. Hogger, editors, {\em Handbook of logic in artificial intelligence and logic
  programming: nonmonotonic reasoning and uncertain reasoning}, volume~3, pages
  439--513. Oxford University Press, 1994.

\bibitem{einstein1935can}
A.~Einstein, B.~Podolsky, and N.~Rosen.
\newblock {Can quantum-mechanical description of physical reality be considered
  complete?}
\newblock {\em Physical Review}, 47(10):777--780, 1935.

\bibitem{fine1982hidden}
A.~Fine.
\newblock {Hidden variables, joint probability, and the Bell inequalities}.
\newblock {\em Physical Review Letters}, 48(5):291--295, 1982.

\bibitem{fritz2009possibilistic}
T.~Fritz.
\newblock {Possibilistic physics}.
\newblock FQXi essay, available at \url{http://www.fqxi.org/community/forum/topic/569}, 2009.

\bibitem{fritz2010quantum}
T.~Fritz.
\newblock {Quantum analogues of Hardy's nonlocality paradox}.
\newblock {\em Arxiv preprint arXiv:1006.2497}, 2010.

\bibitem{ghirardi1980general}
G.C. Ghirardi, A.~Rimini, and T.~Weber.
\newblock {A general argument against superluminal transmission through the
  quantum mechanical measurement process}.
\newblock {\em Lettere Al Nuovo Cimento (1971--1985)}, 27(10):293--298, 1980.

\bibitem{greenberger1990bell}
D.M. Greenberger, M.A. Horne, A.~Shimony, and A.~Zeilinger.
\newblock {Bell's theorem without inequalities}.
\newblock {\em American Journal of Physics}, 58:1131, 1990.

\bibitem{hardy1993nonlocality}
L.~Hardy.
\newblock {Nonlocality for two particles without inequalities for almost all
  entangled states}.
\newblock {\em Physical Review Letters}, 71(11):1665--1668, 1993.

\bibitem{jarrett1984physical}
J.P. Jarrett.
\newblock {On the physical significance of the locality conditions in the Bell
  arguments}.
\newblock {\em No{\^u}s}, 18(4):569--589, 1984.

\bibitem{jordan1983quantum}
T.F. Jordan.
\newblock {Quantum correlations do not transmit signals}.
\newblock {\em Physics Letters A}, 94(6-7):264, 1983.

\bibitem{kennedy1995empirical}
J.B. Kennedy.
\newblock {On the empirical foundations of the quantum no-signalling proofs}.
\newblock {\em Philosophy of Science}, 62(4):543--560, 1995.

\bibitem{khalfin1992quantum}
L.A. Khalfin and B.S. Tsirelson.
\newblock {Quantum/classical correspondence in the light of Bell's
  inequalities}.
\newblock {\em Foundations of physics}, 22(7):879--948, 1992.

\bibitem{kochen1975problem}
S.~Kochen and E.P. Specker.
\newblock {The problem of hidden variables in quantum mechanics}.
\newblock {\em Journal of Mathematics and Mechanics}, 17(1):59--87, 1967.

\bibitem{libkin2004elements}
L.~Libkin.
\newblock {\em {Elements of finite model theory}}.
\newblock Springer Verlag, 2004.

\bibitem{mermin1990quantum}
N.D. Mermin.
\newblock {Quantum mysteries revisited}.
\newblock {\em American  Journal of  Physics}, 58(8):731--734, 1990.

\bibitem{mermin1990simple}
N.D. Mermin.
\newblock Simple unified form for the major no-hidden-variables theorems.
\newblock {\em Physical Review Letters}, 65(27):3373--3376, 1990.

\bibitem{mermin1994quantum}
N.D. Mermin.
\newblock {Quantum mysteries refined}.
\newblock {\em American Journal of Physics}, 62:880, 1994.

\bibitem{nielsen2000quantum}
M.Q.C. Nielsen and I.~Chuang.
\newblock {\em {Quantum Computation and Quantum Information}}.
\newblock Cambridge University Press, 2000.

\bibitem{pitowsky1989quantum}
I.~Pitowsky.
\newblock {\em {Quantum Probability, Quantum Logic}}, volume 321 of {\em
  Lecture Notes in Physics}.
\newblock Springer, 1989.

\bibitem{popescu1994quantum}
S.~Popescu and D.~Rohrlich.
\newblock {Quantum nonlocality as an axiom}.
\newblock {\em Foundations of Physics}, 24(3):379--385, 1994.

\bibitem{renegar1992computational}
J.~Renegar.
\newblock {On the computational complexity and geometry of the first-order
  theory of the reals. Part I: Introduction. Preliminaries. The geometry of
  semi-algebraic sets. The decision problem for the existential theory of the
  reals}.
\newblock {\em Journal of Symbolic Computation}, 13(3):255--299, 1992.

\bibitem{shimony1986events}
A.~Shimony.
\newblock {Events and processes in the quantum world}.
\newblock In R. Penrose and C.J. Isham, editors, {\em Quantum concepts in space and time}, pages 182--203, Oxford University Press, 1986.

\bibitem{tarski1951decision}
A.~Tarski.
\newblock {A decision method for elementary algebra and geometry: Prepared for
  publication with the assistance of J. C. C. McKinsey}.
\newblock Technical report, Rand Corporation, 1951.

\bibitem{tsirelson1980}
B.S. Tsirelson.
\newblock {Quantum generalizations of Bell's inequality}.
\newblock {\em Letters in Mathematical Physics}, 4:93--100, 1980.

\end{thebibliography}

\end{document}